\newtheorem{theorem}{Theorem}
\newtheorem*{theorem*}{Theorem}
\newtheorem{definition}{Definition}
\newtheorem{proposition}{Proposition}
\newtheorem{lemma}{Lemma}
\newtheorem*{lemma*}{Lemma}
\newtheorem{corollary}{Corollary}
\newtheorem{assumption}{Assumption}
\newtheorem*{manualtheorem}{Theorem 1'}
\theoremstyle{definition}
\renewcommand{\ket}[1]{|#1\rangle}
\renewcommand{\bra}[1]{\langle#1|}
\newcommand{\1}{\mathbbm{1}}
\def\-{\raisebox{0 pt}{-}}
\DeclareMathOperator{\diag}{diag}
\DeclareMathOperator{\poly}{poly}
\DeclareMathOperator{\polylog}{polylog}
\DeclareMathOperator{\image}{Im}
\newcommand{\prlsection}[1]{{\em {#1}.---~}}
\definecolor{tensorcolor}{rgb}{0.65,0.77,0.95}
\definecolor{btensorcolor}{rgb}{0.85,0.72,0.89}
\definecolor{whitetensorcolor}{rgb}{0.93,0.93,0.93}
\newcommand{\gate}[2]{
    \begin{scope}[shift={(#1)}]
        \draw[ thick, fill=tensorcolor, rounded corners=1pt] (-#2,-0.25) rectangle (#2,0.25);
    \end{scope}
        }
\newcommand{\gatebg}[2]{
    \begin{scope}[shift={(#1)}]
        \draw[ thick, fill=tensorcolor, rounded corners=1pt] (-#2,-0.5) rectangle (#2,0.5);
    \end{scope}
        }
\newcommand{\MPSTensor}[5]{
	\begin{scope}[shift={(#1)}]
    \ifnum#5=0
		\draw[thick] (-#2,0) -- (#2,0);
		\draw[thick] (0,#2) -- (0,0);
    \fi
    \ifnum#5=-1
		\draw[thick] (0,0) -- (#2,0);
		\draw[thick] (0,#2) -- (0,0);
    \fi
    \ifnum#5=1
		\draw[thick] (-#2,0) -- (0,0);
		\draw[thick] (0,#2) -- (0,0);
    \fi
        \draw[ thick, fill=tensorcolor, rounded corners=1pt] (-#3,-#3) rectangle (#3,#3);
		\draw (0,0) node {\scriptsize #4};
	\end{scope}
}
\newcommand{\GTensor}[5]{
	\begin{scope}[shift={(#1)}]
    \ifnum#5=0
		\draw[thick] (-#2,0) -- (#2,0);
		\draw[thick] (0,#2) -- (0,-#2);
    \fi
    \ifnum#5=-1
		\draw[thick] (0,0) -- (#2,0);
		\draw[thick] (0,#2) -- (0,-#2);
    \fi
    \ifnum#5=1
		\draw[thick] (-#2,0) -- (0,0);
		\draw[thick] (0,#2) -- (0,-#2);
    \fi
        \draw[ thick, fill=tensorcolor, rounded corners=1pt] (-#3,-#3) rectangle (#3,#3);
		\draw (0,0) node {\scriptsize #4};
	\end{scope}
}
\newcommand{\GVecTensor}[5]{
	\begin{scope}[shift={(#1)}]
    \ifnum#5=0
		\draw[thick] (-#2,0) -- (#2,0);
    \fi
    \ifnum#5=-1
		\draw[thick] (0,0) -- (#2,0);
    \fi
    \ifnum#5=1
		\draw[thick] (-#2,0) -- (0,0);
    \fi
    	\draw[thick] (-0.35*#3,0) -- (-0.35*#3,#2);
        \draw[thick] (0.35*#3,0) -- (0.35*#3,#2);
        \draw[ thick, fill=tensorcolor, rounded corners=1pt] (-#3,-#3) rectangle (#3,#3);
		\draw (0,0) node {\scriptsize #4};
	\end{scope}
}
\newcommand{\GBraTensor}[5]{
	\begin{scope}[shift={(#1)}]
    \ifnum#5=0
		\draw[thick] (-#2,0) -- (#2,0);
    \fi
    \ifnum#5=-1
		\draw[thick] (0,0) -- (#2,0);
    \fi
    \ifnum#5=1
		\draw[thick] (-#2,0) -- (0,0);
    \fi
    	\draw[thick] (-0.35*#3,0) -- (-0.35*#3,-#2);
        \draw[thick] (0.35*#3,0) -- (0.35*#3,-#2);
        \draw[ thick, fill=tensorcolor, rounded corners=1pt] (-#3,-#3) rectangle (#3,#3);
		\draw (0,0) node {\scriptsize #4};
	\end{scope}
}
\newcommand{\GDTensor}[5]{
	\begin{scope}[shift={(#1)}]
    \ifnum#5=0
		\draw[thick] (-#2,0) -- (#2,0);
		\draw[thick] (0,#2) -- (0,-#2);
    \fi
    \ifnum#5=-1
		\draw[thick] (0,0) -- (#2,0);
		\draw[thick] (0,#2) -- (0,-#2);
    \fi
    \ifnum#5=1
		\draw[thick] (-#2,0) -- (0,0);
		\draw[thick] (0,#2) -- (0,-#2);
    \fi
        \draw[ thick, fill=tensorcolor, rounded corners=1pt] (-#3,-#3) rectangle (#3,#3);
    \def\dx{#3/3};
	\draw (0,0) node {\scriptsize #4};
	\end{scope}
}
\newcommand{\ATensor}[3]{
    \GTensor{#1}{1}{.5}{#2}{#3};
}
\newcommand{\DoubleATensor}[2]{
	\begin{scope}[shift={(#1)}]
        \ATensor{(0,.8)}{}{#2};
        \ATensor{(0,-.8)}{}{#2};
	\end{scope}
}
\newcommand{\ETensor}[2]{
	\begin{scope}[shift={(#1)}]
        \GDTensor{(0,.8)}{1}{.5}{}{#2};
        \GTensor{(0,-.8)}{1}{.5}{}{#2};
    \def\dx{.75};
	\draw [thick] (0,-1.8) to  [bend left=90] (-\dx,-1.8);
	\draw [thick] (0,1.8) to  [bend right=90] (-\dx,1.8);
	\draw [thick] (-\dx,1.8) to  (-\dx,-1.8);
	\end{scope}
}
\newcommand{\bTensor}[2]{
	\begin{scope}[shift={(#1)}]
	    \draw [thick] (-1,0) to  (1,0);
		\filldraw[color=black, fill=btensorcolor, thick] (0,0) circle (\stradius);
	\draw (0,0) node {#2};
	\end{scope}
}
\newcommand{\BTensor}[4]{
	\begin{scope}[shift={(#1)}]
        \ifnum#2=-1
            \bTensor{(0,0.9)}{#4};
            \bTensor{(0,-0.9)}{#3};
            \draw [thick] (-1,0.9) to  [bend left=45] (-1.3,1.2);
            \draw [thick] (-1.3,1.2) to  [bend left=45] (-1,1.5);
            \draw [thick] (-1, 1.5) to (-0.6, 1.5);
            \draw [thick] (-1,-0.9) to  [bend right=45] (-1.3,-1.2);
            \draw [thick] (-1.3,-1.2) to  [bend right=45] (-1,-1.5);
            \draw [thick] (-1, -1.5) to (-0.6, -1.5);
        \fi
        \ifnum#2=1
            \draw [thick] (-0.8,0.9) to  [bend right=45] (-0.5,1.2);
            \draw [thick] (-0.5,1.2) to  [bend right=45] (-0.8,1.5);
            \draw [thick] (-0.8, 1.5) to (-1.0, 1.5);
                        \draw [thick] (-0.8,-0.9) to  [bend left=45] (-0.5,-1.2);
            \draw [thick] (-0.5,-1.2) to  [bend left=45] (-0.8,-1.5);
            \draw [thick] (-0.8, -1.5) to (-1.0, -1.5);

        \fi
	\end{scope}
}
\newcommand{\SingleDots}[2]{
	\begin{scope}[shift={(#1)}]
      \draw [thick,  dash pattern=on 1pt off 1.4pt] (-#2*0.6,0) to (#2*0.6,0);
	\end{scope}
}
\newcommand{\DoubleDots}[2]{
	\begin{scope}[shift={(#1)}]
      \SingleDots{0,0.8}{#2};
      \SingleDots{0,-0.8}{#2};
	\end{scope}
}
\newcommand{\DoubleLongLine}[1]{
	\begin{scope}[shift={(#1)}]
        \def\dx{.75};
	    \draw [thick] (0,-1.8) to  [bend left=90] (0-\dx,-1.8);
	    \draw [thick] (0,3.2) to  [bend right=90] (0-\dx,3.2);
	    \draw [thick] (-\dx,3.2) to  (-\dx,-1.8);
	    \draw [thick] (0,3.2) to  (0,2.8);
    \end{scope}
}
\newcommand{\IdentityTensor}[3]{
	\begin{scope}[shift={(#1)}]
      \draw [thick] (0,-1) to (0,1);
    \ifnum#3=1
        \draw [thick] (0,-1) to (0,1);
	    \filldraw[color=black, fill=whitetensorcolor, thick] (0,0) circle (\stradius);
	    \draw (0,0) node {#2};
    \fi
	\end{scope}
}
\newcommand{\DoubleIdentityTensor}[3]{
	\begin{scope}[shift={(#1)}]
        \draw [thick] (0,-1.8) to (0,1.8);
    \ifnum#3=1
        \draw [thick] (0,-1.8) to (0,1.8);
	    \filldraw[color=black, fill=whitetensorcolor, thick] (0,0) circle (\stradius);
	    \draw (0,0) node {#2};
    \fi
	\end{scope}
}
\newcommand{\SideIdentityTensor}[4]{
	\begin{scope}[shift={(#1)}]
    \ifnum#4=-1
	   \draw [thick] (\doubledx-1,0.8) to  [bend right=90] (\doubledx-1,-0.8);
    \fi
    \ifnum#4=-2
	   \draw [thick] (\doubledx-1,0.8) to  [bend right=90] (\doubledx-1,-0.8);
      \draw [thick] (\doubledx-1,0.8) -- (\doubledx-0.5,0.8);
      \draw [thick] (\doubledx-1,-0.8) -- (\doubledx-0.5,-0.8);
    \fi
    \ifnum#4=-3
	   \draw [thick] (\doubledx-1,0.8) to  [bend right=90] (\doubledx-1,-0.8);
      \draw [thick] (\doubledx-1,0.8) -- (\doubledx-0.5,0.8);
      \draw [thick] (\doubledx-1,-0.8) -- (\doubledx-0.5,-0.8);
	\filldraw[color=black, fill=whitetensorcolor, thick] (\doubledx-1.4,0) circle (#3);
	\draw (\doubledx-1.4,0) node {#2};
    \fi
    \ifnum#4=1
	   \draw [thick] (-\doubledx+1,0.8) to  [bend left=90] (-\doubledx+1,-0.8);
    \fi
    \ifnum#4=2
	   \draw [thick] (-\doubledx+1,0.8) to  [bend left=90] (-\doubledx+1,-0.8);
      \draw [thick] (-\doubledx+1,0.8) -- (-\doubledx+0.5,0.8);
      \draw [thick] (-\doubledx+1,-0.8) -- (-\doubledx+0.5,-0.8);
    \fi
    \ifnum#4=3
	   \draw [thick] (-\doubledx+1,0.8) to  [bend left=90] (-\doubledx+1,-0.8);
      \draw [thick] (-\doubledx+1,0.8) -- (-\doubledx+0.5,0.8);
      \draw [thick] (-\doubledx+1,-0.8) -- (-\doubledx+0.5,-0.8);
	\filldraw[color=black, fill=whitetensorcolor, thick] (-\doubledx+1.4,0) circle (#3);
	\draw (-\doubledx+1.4,0) node {#2};
    \fi
\end{scope}
}
\newcommand{\SingleTrLeft}[1]{
	\begin{scope}[shift={(#1)}]
      \draw [thick] (0,0) to (\doubledx-0.8,0);
	   \draw [thick] (0,0) to  [bend left=90] (0,0.8);
	   \draw [thick, dotted] (0,0.8) to  (1,0.8);
	\end{scope}
}
\newcommand{\SingleTrRight}[1]{
	\begin{scope}[shift={(#1)}, xscale=-1]
	   \SingleTrLeft{(0,0)};
	\end{scope}
}
\newcommand{\myarrow}[2]{
	\begin{scope}[shift={(#1)}]
    \ifnum#2=1
\draw[-{Stealth[length=1mm, width=2.3mm]}] (0,0.0) -- (0,0.03);
    \fi
    \ifnum#2=2
\draw[-{Stealth[length=1mm, width=2.3mm]}] (0,0) -- (0,-0.03);
    \fi
    \ifnum#2=3
\draw[-{Stealth[length=1mm, width=2.3mm]}] (0,0) -- (0.03,0);
    \fi
    \ifnum#2=4
\draw[-{Stealth[length=1mm, width=2.3mm]}] (0,0) -- (-0.03,0);
    \fi
\end{scope}
}
\newcommand{\mysmallarrow}[2]{%
  \draw[gray!60, thick, -{Stealth[length=1.5mm, width=1.2mm]}] #1 -- #2;
}
\newcommand{\SideIdentityTensorRT}[4]{
	\begin{scope}[shift={(#1)}]
    \ifnum#4=-1
	   \draw [very thick] (\doubledx-1,0.8) to  [bend right=90] (\doubledx-1,-0.8);
    \fi
    \ifnum#4=-2
	   \draw [very thick] (\doubledx-1,0.8) to  [bend right=90] (\doubledx-1,-0.8);
      \draw [very thick] (\doubledx-1,0.8) -- (\doubledx-0.5,0.8);
      \draw [very thick] (\doubledx-1,-0.8) -- (\doubledx-0.5,-0.8);
    \fi
    \ifnum#4=-3
	   \draw [very thick] (\doubledx-1,0.9) to  [bend right=90] (\doubledx-1,-0.9);
      \draw [very thick] (\doubledx-1,0.9) -- (\doubledx-0.5,0.9);
      \draw [very thick] (\doubledx-1,-0.9) -- (\doubledx-0.5,-0.9);
	\filldraw[color=black, fill=whitetensorcolor, thick] (\doubledx-1.4,0) circle (#3);
	\draw (\doubledx-1.4,0) node {#2};
    \fi
    \ifnum#4=1
	   \draw [very thick] (-\doubledx+1,0.8) to  [bend left=90] (-\doubledx+1,-0.8);
    \fi
    \ifnum#4=2
	   \draw [very thick] (-\doubledx+1,0.8) to  [bend left=90] (-\doubledx+1,-0.8);
      \draw [very thick] (-\doubledx+1,0.8) -- (-\doubledx+0.5,0.8);
      \draw [very thick] (-\doubledx+1,-0.8) -- (-\doubledx+0.5,-0.8);
    \fi
    \ifnum#4=3
	   \draw [very thick] (-\doubledx+1,0.9) to  [bend left=90] (-\doubledx+1,-0.9);
      \draw [very thick] (-\doubledx+1,0.9) -- (-\doubledx+0.5,0.9);
      \draw [very thick] (-\doubledx+1,-0.9) -- (-\doubledx+0.5,-0.9);
	\filldraw[color=black, fill=whitetensorcolor, thick] (-\doubledx+1.4,0) circle (#3);
	\draw (-\doubledx+1.4,0) node {#2};
    \fi
\end{scope}
}
\newcommand\doubledx{1.6}
\newcommand\singledx{1.8}
\newcommand\identitydx{1}
\newcommand\stradius{0.5}
\newcommand\subsetsim{\mathrel{%
  \ooalign{\raise0.2ex\hbox{$\subset$}\cr\hidewidth\raise-0.8ex\hbox{\scalebox{0.9}{$\sim$}}\hidewidth\cr}}}
\begin{document}

\title{Quantum Circuits for Matrix-Product Unitaries}

\author{Georgios Styliaris}
\thanks{Authors with equal contribution.}
\affiliation{Max Planck Institute of Quantum Optics, Hans-Kopfermann-Str. 1, Garching 85748, Germany}
\affiliation{Munich Center for Quantum Science and Technology (MCQST), Schellingstr. 4, 80799 M{\"{u}}nchen, Germany}

\author{Rahul Trivedi}
\thanks{Authors with equal contribution.}
\affiliation{Max Planck Institute of Quantum Optics, Hans-Kopfermann-Str. 1, Garching 85748, Germany}
\affiliation{Munich Center for Quantum Science and Technology (MCQST), Schellingstr. 4, 80799 M{\"{u}}nchen, Germany}

\author{J.~Ignacio Cirac}
\affiliation{Max Planck Institute of Quantum Optics, Hans-Kopfermann-Str. 1, Garching 85748, Germany}
\affiliation{Munich Center for Quantum Science and Technology (MCQST), Schellingstr. 4, 80799 M{\"{u}}nchen, Germany}

\date{\today}

\begin{abstract}
Matrix-product unitaries (MPUs) are many-body unitary operators that, as a consequence of their tensor-network structure, preserve the entanglement area law in 1D systems. However, it is unknown how to implement an MPU as a quantum circuit since the individual tensors describing the MPU are not unitary. In this Letter, we show that a large class of MPUs can be implemented with a polynomial-depth quantum circuit. For an $N$-site MPU built from a repeated bulk tensor with open boundary, we explicitly construct a quantum circuit of polynomial depth $T = O(N^{\alpha})$ realizing the MPU, where the constant $\alpha$ depends only on the bulk and boundary tensor and not the system size $N$. We show that this class includes nontrivial unitaries that generate long-range entanglement and, in particular, contains a large class of unitaries constructed from representations of $C^*$-weak Hopf algebras. Furthermore, we also adapt our construction to nonuniform translationally-varying MPUs and show that they can be implemented by a circuit of depth $O(N^{\beta} \, \mathrm{poly}\, D)$ where $\beta \le  1 + \log_2 \sqrt{D}/ s_{\min}$, with $D$ being the bond dimension and $s_{\min}$ the smallest nonzero Schmidt value of the normalized Choi state corresponding to the MPU.
\end{abstract}

\maketitle

\prlsection{Introduction} The entanglement area law is a hallmark of many-body quantum systems with local interactions~\cite{eisert2010colloquium}. It captures the observation that in many physical quantum states, only degrees of freedom near the boundary of a region contribute significantly to its entanglement with the rest~\cite{verstraete2006matrix,hastings2007area,wolf2008area,brandao2015exponential,anshu2022area}. On the level of quantum operations, it is thus a fundamental task to characterize unitaries that \emph{preserve} the area law. Perhaps the most natural class with this property are unitaries admitting a tensor-network representation, which inherently ensures the preservation of the area law~\cite{cirac2021matrix}.

In one spatial dimension, the corresponding tensor-network class is that of matrix-product unitaries (MPUs)~\cite{cirac2021matrix}. In addition to describing discrete time evolution, MPUs capture families of exactly solvable quantum circuits~\cite{wang2025hopf}, describe unitary symmetries of 1D systems~\cite{garre2025fractional,franco2025symmetry}, and have been useful to characterize Floquet phases in 2D via their boundary dynamics~\cite{po2016chiral}. A well-understood subclass of MPUs is one-dimensional quantum cellular automata (QCA)~\cite{farrelly2020review}, which coincide with MPUs formed by a single repeated tensor~\cite{cirac2017matrix1,sahinoglu2018matrix,shukla2025simple}. This subfamily, which includes finite-depth quantum circuits, is characterized by a strict causal cone and thus preserves the phase of matter of the input state. This is, however, not a universal property: certain MPUs, such as sequential quantum circuits~\cite{chen2024sequential}, fermionic MPUs~\cite{piroli2021fermionic}, or MPUs with open boundary~\cite{styliaris2025matrix}, can change the underlying phase as they can generate long-range entanglement.

Although the tensor-network form of MPUs has the advantage of automatically enforcing the preservation of the area law, the individual tensors do not inherently correspond to unitary or other physical processes. This poses a challenge for translating the tensor-network description into implementable quantum circuits. While it is well known that every matrix-product state (MPS) can be prepared by a sequential (linear-depth) circuit acting on a product state~\cite{schon2005sequential}, no general framework exists for efficiently implementing MPUs as circuits beyond the limited case of QCA~\cite{cirac2017matrix1,sahinoglu2018matrix}.  The present Letter addresses this gap (Fig.~\ref{fig_gen}).

\begin{figure}[t]
\centering
\includegraphics[scale=0.15]{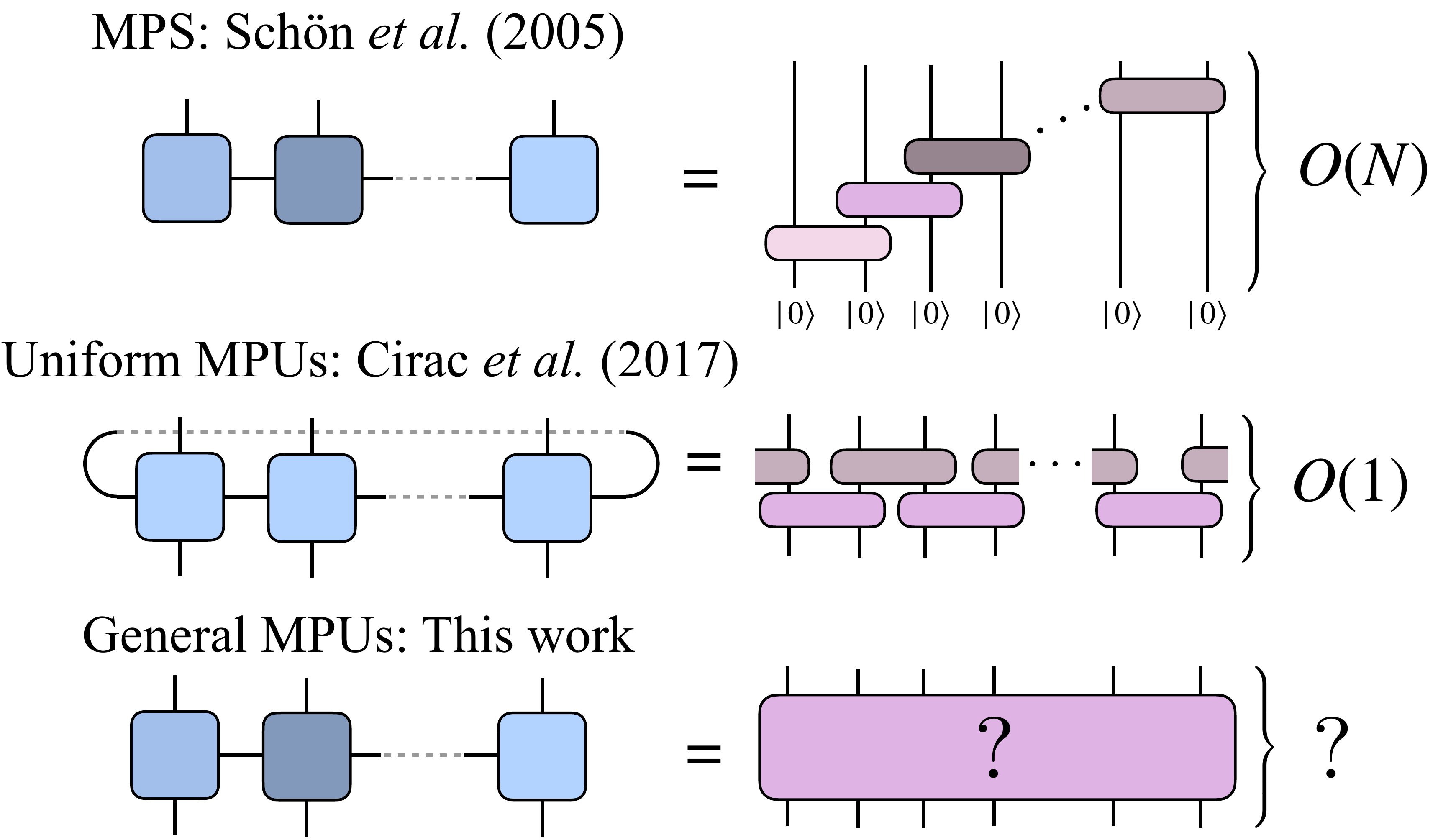}
    \caption{MPS over $N$ sites admit a sequential circuit decomposition with depth $O(N \poly D)$, where $D$ is the bond dimension~\cite{schon2005sequential}. How can MPUs be implemented as quantum circuits, and what is the corresponding depth? The answer is only known for the case of MPUs with uniform bulk and periodic boundary, which correspond to QCA~\cite{cirac2017matrix1}. Here, we introduce a circuit decomposition of $\poly(N)$ depth for arbitrary MPUs, when they are well conditioned. We show this holds (i) for MPUs with uniform bulk and open boundary, and (ii) for arbitrary MPUs when $\sqrt{D}/s_{\min} = O(1)$ ($s_{\min}:$ smallest nonzero singular value of the MPU Choi state).}
    \label{fig_gen}
\end{figure}

The problem of implementing an MPU as a quantum circuit is fundamentally more challenging than MPS preparation, which is far better understood~\cite{schon2005sequential,ge2016rapid,malz2024preparation,smith2024constant,sahay2025classifying,stephen2024preparing,zhang2024characterizing}. Unlike MPSs, only a subclass of MPUs can be implemented by sequential circuits~\cite{styliaris2025matrix}. This may at first seem surprising, given that sequential circuits are sufficient to map any given initial MPS to any other target MPS. However, certain MPUs, such as the multi-controlled NOT gate, can provably not be implemented by sequential circuits since they noncausally propagate correlations~\cite{styliaris2025matrix}.

An arguably obvious strategy to implement an MPU by exploiting its tensor network structure would be to start from its Choi state~\cite{nielsen2002quantum}, which is automatically an MPS and thus efficiently preparable.
However, while such a Choi MPS can be physically prepared with a linear-depth circuit~\cite{schon2005sequential}, this does not necessarily imply that the associated unitary can be efficiently applied to an unknown input state. Given $2N$ qubits initialized in the Choi state of the $N$-qubit unitary, the only general-purpose strategy to apply the unitary on an input state is via $N$ joint measurements (e.g., Bell measurements) between the input state and the Choi state~\cite{gottesman1999demonstrating,cirac2001entangling,piroli2021quantum} followed by a post-selection on the $N$ measurement outcomes. However, such a strategy will typically incur an exponential cost due to the post-selection, rendering it inefficient.

Alternatively, one could consider formulating this question within the oracle model. Since the MPU structure guarantees that individual matrix elements of the unitary can be efficiently classically computed~\cite{cirac2021matrix}, a quantum oracle accessing the MPU elements can be implemented in polynomial time. However, there is no general-purpose strategy to, given such an oracle for the unitary elements, apply the unitary onto an input state. Using state-preparation strategies given oracles for computing the vector elements of the state~\cite{aaronson2016complexity} can, again, only allow us to prepare the Choi state of such a unitary efficiently. Consequently, to obtain an efficient circuit implementation of an MPU, we need to identify and exploit structure beyond the fact that it has an MPS Choi state or efficiently computable matrix elements.

In this Letter, we present a systematic and efficient quantum circuit decomposition of MPUs starting from their tensor-network representation. The resulting circuit, under the assumption that the bond dimension cannot be further reduced, has depth $O(\poly N)$ for $N$-site MPUs consisting of a repeated bulk tensor and open boundary. This class goes beyond QCA -- moreover, we provide a systematic construction of non-trivial MPUs arising from $C^*$-weak Hopf algebras~\cite{bohm1996coassociative,molnar2022matrix,ruiz2024matrix} which are in this class. Furthermore, we consider non translation-invariant MPUs. Our algorithm gives a circuit decomposition with polynomial depth, provided a spectral condition is satisfied.

\prlsection{MPUs with uniform bulk}MPUs are defined via rank-4 tensors, graphically represented as
\begin{align}
    A^{ij}_{mn} =
        \begin{array}{c}
        \begin{tikzpicture}[scale=0.5,baseline={([yshift=-0.65ex] current bounding box.center)}]
            \ATensor{0,0}{{\small$A$}}{0}
		\draw (-1.4,0) node {$m$};
		\draw (1.4,0) node {$n$};
		\draw (0,1.4) node {$i$};
		\draw (0,-1.4) node {$j$};
        \end{tikzpicture}
        \end{array}
    = \Bigg(
        \begin{array}{c}
        \begin{tikzpicture}[scale=0.5,baseline={([yshift=-0.65ex] current bounding box.center)}]
            \ATensor{0,0}{\small{$A^\dagger$}}{0};
		\draw (-1.4,0) node {$m$};
		\draw (1.4,0) node {$n$};
		\draw (0,1.4) node {$j$};
		\draw (0,-1.4) node {$i$};
        \end{tikzpicture}
        \end{array}
        \Bigg) ^*
    \; \in \mathbb C \;,
\end{align}
where the vertical legs $i,j = 1,\dots,d$ label the physical space and the horizontal ones $m,n = 1,\dots, D$ the auxiliary space, with $D$ the bond dimension. We will use arrows to indicate the input/output space of operators, e.g.,
\[      
        \begin{array}{c}
        \begin{tikzpicture}[scale=0.5,baseline={([yshift=-0.65ex] current bounding box.center)}]
            \ATensor{0,0}{{\small$A$}}{0}
            \myarrow{0,-.85}{2};
            \myarrow{0,.85}{1};
            \myarrow{-.65,0}{3};
            \myarrow{.65,0}{4};
        \end{tikzpicture}
        \end{array}
        =
        \sum A^{ij}_{mn} \ket{ij}\bra{mn} \;.
\]

We first consider MPUs with uniform bulk and open boundary.
\begin{definition}[Uniform-bulk MPU]
    The tensor $A$, together with the boundary vectors $l$ and $r$, define a \emph{uniform-bulk MPU} if the $N$-site matrix-product operator (MPO)
\begin{align}
 U_N =
        \begin{array}{c}
        \begin{tikzpicture}[scale=0.5]
		      \foreach \x in {2,...,2}{
                \GTensor{(-\singledx*\x,0)}{1}{.5}{\small $A$}{0}
            \myarrow{-\singledx*\x,-.65}{1};
            \myarrow{-\singledx*\x,.85}{1};
        }
		      \foreach \x in {4,...,4}{
                \GTensor{(-\singledx*\x,0)}{1}{.5}{\small $A$}{0}
            \myarrow{-\singledx*\x,-.65}{1};
            \myarrow{-\singledx*\x,.85}{1};
        }
		      \foreach \x in {3,...,3}{
                \SingleDots{-\singledx*\x,0}{\singledx}
                \draw (-\singledx*\x,-.5) node {\small $N$};
        }
		\foreach \x in {5,...,5}{
		      \filldraw[color=black, fill=whitetensorcolor, thick] (-\singledx*\x+0.3,0) circle (0.5);
            \draw (-\singledx*\x+0.3,0) node {\small$l$};
        }
		\foreach \x in {1,...,1}{
		      \filldraw[color=black, fill=whitetensorcolor, thick] (-\singledx*\x-.3,0) circle (0.5);
            \draw (-\singledx*\x-.3,0) node {\small$r$};
        } 
        \end{tikzpicture}
                \end{array}
\end{align}
is unitary for all $N$. 
\end{definition}
\noindent This open-boundary definition also includes uniform-bulk MPUs with a boundary of the form
\begin{align} \label{eq:U_N_hom}
U_N = 
    \begin{array}{c}
        \begin{tikzpicture}[scale=.5,baseline={([yshift=-0.75ex] current bounding box.center)}]
		      \foreach \x in {0,...,0}{
                \SingleTrRight{(0,0)}
        }
		      \foreach \x in {1,...,1}{
                \GTensor{(-\singledx*\x,0)}{1}{.5}{\small $A$}{0}
                \myarrow{-\singledx*\x,-.65}{1};
                \myarrow{-\singledx*\x,.85}{1};
        }
		      \foreach \x in {2,...,2}{
                \SingleDots{-\singledx*\x,0}{\singledx}
                \draw (-\singledx*\x,-.5) node {\small $N$};
        }
		      \foreach \x in {3,...,3}{
                \GTensor{(-\singledx*\x,0)}{1}{.5}{\small $A$}{0}
                \myarrow{-\singledx*\x,-.65}{1};
                \myarrow{-\singledx*\x,.85}{1};
        }
		      \foreach \x in {4,...,4}{
                \bTensor{-\singledx*\x,0}{$b$}
        }
		      \foreach \x in {5,...,5}{
                \SingleTrLeft{(-\singledx*\x+.3,0)}
        }
        \end{tikzpicture}
        \end{array}
    \;,
\end{align}
studied in Ref.~\cite{styliaris2025matrix}, by suitably enlarging the bond dimension.

The class of uniform-bulk MPUs includes all (translation-invariant) 1D QCA, that is, unitaries possessing a strict light cone. This is because QCA can be represented as uniform-bulk MPUs with periodic boundary~\cite{cirac2017matrix1,sahinoglu2018matrix} ($b = \1$ above) and thus they also fall in the open boundary setting considered here by increasing the bond dimension from $D$ to $D^2$.  Beyond QCA, uniform-bulk MPUs encompass multi-qudit control operations with a single target qudit~\cite{styliaris2025matrix}, such as the multi-control $Z$-gate $U_{C\dots CZ} = \1 - 2 (\ket{0} \bra{0})^{\otimes N}$. Unlike QCA, however, such control unitaries can create long-range correlations through a single application. This directly implies that a uniform-bulk representation of long-range entangling unitaries necessitates open boundary conditions~\cite{styliaris2025matrix}.

A further example of uniform-bulk MPUs, capable of creating long-range correlations, is provided by \emph{MPO-projector unitaries} $U = \sum_{j=1}^\ell e^{i \phi_j} P_j$. Here $P_j$ form a complete set of orthogonal MPO projectors, with a uniform bulk, and $\ell$ is an $N-$independent constant. Such projectors have a nontrivial structure and arise naturally in the context of topologically ordered states and anyons~\cite{bultinck2017anyons}. They also contain long-range entangling unitaries. In our setting, this class is useful because it provides nontrivial MPUs specified directly at the tensor-network level rather than through a circuit description. It can be viewed as a broad generalization of the $U_{C\dots CZ}$ gate, in that the resulting unitaries act by applying distinct phase factors to the subspaces defined by the projectors. In Supplemental Material~\cite{sm}, we present a systematic construction of unitaries arising from the input of a $C^*$-weak Hopf algebra and an associated representation, based on the techniques developed in Refs.~\cite{molnar2022matrix,ruiz2024matrix}. This MPU class, which includes and generalizes the MPO-projector unitaries, lacks a known quantum circuit implementation.

\prlsection{Main result}We seek an efficient quantum circuit decomposition scheme for MPUs, that is, with the number of gates scaling at most polynomially with system size. We begin with the case of uniform-bulk MPUs, which already capture nontrivial examples. We then extend our construction to general, nonuniform MPUs. Throughout, we allow arbitrary two-qudit gates, taking the physical dimension $d$ to be a constant. For our main result, we need the following assumption:
\begin{assumption}\label{assumption}
It holds that
\begin{align}
 \rank \left(
        \begin{array}{c}
        \begin{tikzpicture}[scale=0.5]
		      \foreach \x in {5,...,5}{
                \GTensor{(-\singledx*\x,0)}{1}{.5}{\small $A$}{0}
            \myarrow{-\singledx*\x,-.85}{2};
            \myarrow{-\singledx*\x,.85}{1};
            \myarrow{-\singledx*\x+0.7,0}{4};
        }
		\foreach \x in {6,...,6}{
		      \filldraw[color=black, fill=whitetensorcolor, thick] (-\singledx*\x+0.4,0) circle (0.5);
            \draw (-\singledx*\x+0.4,0) node {\small$l$};
        }
        \end{tikzpicture}
                \end{array}
    \right) = 
 \rank \left(
        \begin{array}{c}
        \begin{tikzpicture}[scale=0.5]
		\foreach \x in {1,...,1}{
                \GTensor{(-\singledx*\x-.5,0)}{1}{.5}{\small $A$}{0}
            \myarrow{-\singledx*\x-.5,-.85}{2};
            \myarrow{-\singledx*\x-.5,.85}{1};
            \myarrow{-\singledx*\x-1.15,0}{3};
        }
		\foreach \x in {0,...,0}{
		      \filldraw[color=black, fill=whitetensorcolor, thick] (-\singledx*\x-0.9,0) circle (0.5);
            \draw (-\singledx*\x-0.9,0) node {\small$r$};
        } 
        \end{tikzpicture}
                \end{array}
        \right)
    = D
        \;.
\end{align}
\end{assumption}
\noindent Physically, the assumption implies that the bond dimension cannot be further compressed, i.e., the MPU has only non-zero operator Schmidt values at every bipartition. The condition is automatically satisfied if $A$, viewed as an MPS tensor, is injective. More generally, it remains valid if $A$ in \cref{eq:U_N_hom} is in block-diagonal canonical form~\cite{perez_garcia2007matrix}, with linearly independent blocks, and the boundary operator $b$ is full rank~\cite{sm}. This set of assumptions is usually taken as a starting point~\cite{bultinck2017anyons, molnar2022matrix}. This class includes, beyond the trivial case of QCA, the MPO-projector unitaries introduced previously. Our main result is:

\begin{theorem} \label{prop:main_hom}
    Uniform-bulk MPUs satisfying Assumption~\ref{assumption} can be implemented with a quantum circuit of $O(\poly N)$ depth and $O(N)$ auxiliary qudits.
\end{theorem}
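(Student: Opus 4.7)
The plan is to leverage Assumption~\ref{assumption} to construct a sequential-type implementation of $U_N$ using an ancillary register that carries the virtual bond, and then to resolve the residual boundary mismatch with an iterative correction protocol whose overhead is polynomial in $N$. The rank conditions in Assumption~\ref{assumption} are what separate the present setting from the general case, in which MPUs can fail to be implementable by any sequential circuit~\cite{styliaris2025matrix}.

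\emph{Dilation of the bulk.} Under Assumption~\ref{assumption} the tensor $A$, contracted with either boundary vector, has maximal rank $D$, so after blocking $O(\log D)$ sites it can be brought into a canonical form~\cite{perez_garcia2007matrix} in which it is either injective or a direct sum of injective blocks. In either case $A$ can be realized as a partial isometry on an enlarged space and then dilated to a unitary $V$ acting on one (blocked) physical site together with an ancilla of dimension $O(\poly D)$. Applying $V$ sequentially from left to right transfers the left virtual bond onto the ancilla and, after $N$ applications, reduces the problem to enforcing the right boundary vector $r$ on the ancilla.

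\emph{Enforcing the right boundary.} A direct measurement on the ancilla would incur exponentially small success probability in general. Instead, I would exploit the unitarity of $U_N$ to argue that each undesired outcome corresponds to $U_N$ composed with a correction that is itself an MPU whose bulk is a controlled modification of $A$. Rounds of $V$, interleaved with corrections on the ancilla and sweeps that push residual errors back to the boundary, shrink the mismatch between the implemented operator and $U_N$. A careful analysis in the spirit of block-encoding or amplitude amplification should bound the number of rounds by $O(\poly N)$, each contributing $O(N)$ depth.

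\emph{Main obstacle.} The central difficulty is to show that the correction operators arising from undesired outcomes admit a controlled MPU description; otherwise their implementation would require exponential depth. Establishing this amounts to proving that once $A$ is in canonical form, conjugation by $U_N$ maps local operators to MPOs of bounded bond dimension, and that the resulting iteration contracts in $\poly(N)$ steps. In the block-diagonal case one additionally needs to coherently implement the ``which-block'' MPO projectors $P_j$ arising from the canonical decomposition~\cite{bultinck2017anyons,molnar2022matrix}; since each $P_j$ has an MPS Choi state preparable in depth $O(N\,\poly D)$~\cite{schon2005sequential}, teleporting through them realizes the required coherent projection, albeit at the cost of an extra $O(N)$ factor in depth. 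Combining these pieces yields a circuit of depth $O(N^\alpha)$, with $\alpha$ determined by the bond dimension and the boundary data.
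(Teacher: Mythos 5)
There is a genuine gap. Your strategy concentrates all of the difficulty into the step you yourself label the ``main obstacle'' and then leaves it unresolved: after the sequential dilation sweep, enforcing the right boundary vector on the ancilla is a non-unitary operation whose post-selection probability is generically exponentially small, and your proposed fix --- iterative rounds of corrections and sweeps, with the number of rounds bounded ``in the spirit of block-encoding or amplitude amplification'' --- is asserted rather than constructed. In particular, the claim that every undesired outcome equals $U_N$ composed with a correction that is again an MPU of bounded bond dimension is exactly the kind of statement that would need a proof, and it is not implied by Assumption~\ref{assumption}. Note also that standard amplitude amplification is unavailable here because the input state of $U_N$ is unknown (no reflection about it can be implemented), and oblivious amplitude amplification in its usual form requires the amplified object to be a unitary, not an isometry; the paper has to prove a new ``subspace'' amplitude-amplification lemma (with the modified reflection built from $P_{\mathcal S^\perp}$) precisely to make one such non-unitary step deterministic. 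Your appeal to teleporting through the Choi states of the MPO projectors reintroduces the same post-selection problem the paper rules out in its introduction. Finally, Assumption~\ref{assumption} does not make $A$ injective (or block-injective) after blocking; its actual role is to guarantee the existence of \emph{full-rank} operators $L,R$ turning bulk segments into isometries.

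The paper's proof is structured quite differently so that the non-unitary step appears only in a controlled way: bulk segments $V_n^{[L,R]}$ are isometries for all $n$ with $N$-independent $L,R$ (full rank by Assumption~\ref{assumption}), neighboring isometries are merged pairwise in a tree of depth $\lceil \log_2 N\rceil$, and each merge is made deterministic by the subspace oblivious amplitude amplification, at a multiplicative cost bounded by the $N$-independent conditioning number $q_{\rm unif}$. Because the factor $2q_{\rm unif}$ is incurred only $O(\log_2 N)$ times, the recursion closes with depth $O(N^{1+\log_2 q_{\rm unif}})$. Your proposal contains neither the isometry/merging structure nor a substitute mechanism that would make the boundary enforcement deterministic with a provable polynomial overhead, so as written it does not establish the theorem.
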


We note that, in general, a polynomially scaling circuit depth is necessary to implement the MPUs covered by the Theorem using geometrically local gates and local auxiliary qudits. This is because the family includes unitaries capable of generating long-range correlations, such as the multi-control $Z$-gate, which require at least linear depth.

\prlsection{Tree implementation of MPUs} The key idea behind the algorithm in \cref{prop:main_hom} is to first implement small disjoint local unitaries and then to successively merge them in a tree scheme (see Fig.~\ref{fig_merging}). However, individual MPU tensors do not, in general, correspond to unitary operators. Nevertheless, for a uniform-bulk MPU, any segment of $n$ consecutive bulk tensors can be turned into an \emph{isometry}. More specifically, we define
    \begin{align} \label{eq:def_isometries}
        V_{n}^{[L,R]} \coloneqq 
        \begin{array}{c}
        \begin{tikzpicture}[scale=0.5,baseline={([yshift=-6ex] current bounding box.center)}]
		      \foreach \x in {1,...,1}{
                \GTensor{(-\singledx*\x-.5,0)}{1}{.5}{\small $A$}{1}
            \myarrow{-\singledx*\x-.5,-.65}{1};
            \myarrow{-\singledx*\x-.5,.85}{1};
            \myarrow{-\singledx*\x+1,1.6}{1};
            \draw [thick,rounded corners] (-\singledx*\x,0) -- (-\singledx*\x+1,0) -- (-\singledx*\x+1,1.8);
		      \filldraw[color=black, fill=whitetensorcolor, thick] (-\singledx*\x+1,.8) circle (0.5);
            \draw (-\singledx*\x+.95,0.8) node {\small $R$};
        }
		      \foreach \x in {3,...,3}{
            \draw [thick,rounded corners] (-\singledx*\x,0) -- (-\singledx*\x-1.5,0) -- (-\singledx*\x-1.5,1.8);
                \GTensor{(-\singledx*\x,0)}{1}{.5}{\small $A$}{-1}
            \myarrow{-\singledx*\x,.85}{1};
            \myarrow{-\singledx*\x,-.65}{1};
            \myarrow{-\singledx*\x-1.5,1.6}{1};
		      \filldraw[color=black, fill=whitetensorcolor, thick] (-\singledx*\x-1.5,.8) circle (0.5);
            \draw (-\singledx*\x-1.55,0.8) node {\small $L$};
        }
		      \foreach \x in {2,...,2}{
                \SingleDots{-\doubledx*\x-.5,0}{\doubledx*.8};
                \draw (-\doubledx*\x-.6,-.5) node {\small $n$};
        }
        \end{tikzpicture}
                \end{array}
                \;,
\end{align}
where $L,R$ can be operators or vectors and, by definition, $U_N = V^{[l,r]}_{N}$. In the End Matter, we show explicitly construct tensors $L,R \in \mathcal M \left( \mathbb C^{D} \right)$, independent of $n$, such that $V_{n}^{[L,R]}$ is an isometry for all $n$, i.e.,  $V_{n}^{[L,R]\dagger} V^{[L,R]}_{n} = \1$. Furthermore, the isometry property is retained at the boundary, i.e., $V^{[l,R]}_{n}$, $V^{[L,r]}_{n}$ are also isometries. Moreover, using \cref{assumption} introduced earlier, we can guarantee that $L,R$ can be chosen to be \emph{full-rank} -- the importance of this will be clear shortly. 

\begin{figure}[t]
    \centering
        \begin{gather*}
		\begin{tikzpicture}[scale=.23,thick,baseline={([yshift=-6ex]current bounding box.center)}]
%
%
%
%
\begin{scope}[yscale=1.2]
            \begin{scope}[shift={(0, 0)}]
                \draw[thick,rounded corners]  (0,0) -- (0.9,0) -- (0.9,0.7);
                \draw[thick] (0,0.7) -- (0,-0.7);
                \gate{0,0}{0.5};
            \end{scope}
			\foreach \x in {1,2,...,3}{
            \begin{scope}[shift={(2.4*\x, 0)}]
                \draw[thick,rounded corners] (-0.9,0.7) -- (-0.9,0) -- (0.9,0) -- (0.9,0.7);
                \draw[thick] (0,0.7) -- (0,-0.7);
                \gate{0,0}{0.5};
            \end{scope}
            }
			\foreach \x in {0,1,...,2}{
            \begin{scope}[shift={(2.4*\x+9.6, 0)}]
                \draw[thick,rounded corners] (-0.9,0.7) -- (-0.9,0) -- (0.9,0) -- (0.9,0.7);
                \draw[thick] (0,0.7) -- (0,-0.7);
                \gate{0,0}{0.5};
            \end{scope}
            }
            \begin{scope}[shift={(16.8, 0)}]
                \draw[thick,rounded corners] (-0.9,0.7) -- (-0.9,0) -- (0,0);
                \draw[thick] (0,0.7) -- (0,-0.7);
                \gate{0,0}{0.5};
            \end{scope}
            \begin{scope}[shift={(0, 3)}]
                \draw[thick,rounded corners] (2.5,0) -- (3.3,0) -- (3.3,0.7);
                \draw[thick] (0,0.7) -- (0,-0.7);
                \draw[thick] (2.4,0.7) -- (2.4,-0.7);
                \gate{1.2,0}{1.6};
            \end{scope}
			\foreach \x in {1,2,...,2}{
            \begin{scope}[shift={(4.8*\x, 3)}]
                \draw[thick,rounded corners] (-0.9,0.7) -- (-0.9,0) -- (3.3,0) -- (3.3,0.7);
                \draw[thick] (0,0.7) -- (0,-0.7);
                \draw[thick] (2.4,0.7) -- (2.4,-0.7);
                \gate{1.2,0}{1.6};
            \end{scope}
            }
			\foreach \x in {3}{
            \begin{scope}[shift={(4.8*\x, 3)}]
                \draw[thick,rounded corners] (-0.9,0.7) -- (-0.9,0) -- (2.4,0);
                \draw[thick] (0,0.7) -- (0,-0.7);
                \draw[thick] (2.4,0.7) -- (2.4,-0.7);
                \gate{1.2,0}{1.6};
            \end{scope}
            }
            \begin{scope}[shift={(0, 6)}]
                \draw[thick,rounded corners]  (0,0) -- (8.1,0) -- (8.1,0.7);
                \draw[thick] (0,0.7) -- (0,-0.7);
                \draw[thick] (2.4,0.7) -- (2.4,-0.7);
                \draw[thick] (4.8,0.7) -- (4.8,-0.7);
                \draw[thick] (7.2,0.7) -- (7.2,-0.7);
                \gate{3.6,0}{4};
            \end{scope}
            \begin{scope}[shift={(9.6, 6)}]
                \draw[thick,rounded corners] (-0.9,0.7) -- (-0.9,0) -- (7.5,0);
                \draw[thick] (0,0.7) -- (0,-0.7);
                \draw[thick] (2.4,0.7) -- (2.4,-0.7);
                \draw[thick] (4.8,0.7) -- (4.8,-0.7);
                \draw[thick] (7.2,0.7) -- (7.2,-0.7);
                \gate{3.6,0}{4};
            \end{scope}
            \begin{scope}[shift={(0, 9)}]
                \draw[thick] (0,0.7) -- (0,-0.7);
                \draw[thick] (2.4,0.7) -- (2.4,-0.7);
                \draw[thick] (4.8,0.7) -- (4.8,-0.7);
                \draw[thick] (7.2,0.7) -- (7.2,-0.7);
            \end{scope}
            \begin{scope}[shift={(9.6, 9)}]
                \draw[thick] (0,0.7) -- (0,-0.7);
                \draw[thick] (2.4,0.7) -- (2.4,-0.7);
                \draw[thick] (4.8,0.7) -- (4.8,-0.7);
                \draw[thick] (7.2,0.7) -- (7.2,-0.7);
                \gate{-1.2,0}{8.8};
            \end{scope}
\draw (-3,0) node {\footnotesize$1$};
\draw (-3,3) node {\footnotesize$2$};
\draw (-3,6) node {\footnotesize $3$};
\draw[thick, dash pattern=on 1pt off 1pt] (-3,6+1.1) -- (-3,9-1.1);
\draw (-3,9) node {\footnotesize $\lceil \log_2 N \rceil$};
	\foreach \x in {0,1,...,3}{
        \mysmallarrow{(0+\x*4.8,1)}{(1.2+\x*4.8-0.4,2)};
        \mysmallarrow{(2.4+\x*4.8,1)}{(1.2+\x*4.8+0.4,2)};
    }
	\foreach \x in {0,1,...,1}{
        \mysmallarrow{(1.2+\x*9.6,4)}{(3.6+\x*9.6-1,5)};
        \mysmallarrow{(6+\x*9.6,4)}{(3.6+\x*9.6+1,5)};
    }
	\foreach \x in {0,1,...,1}{
        \mysmallarrow{(3.6,7)}{(8.4-2,8)};
        \mysmallarrow{(13.2,7)}{(8.4+2,8)};
    }
\end{scope}
\begin{scope}[shift={(22.3, 1)}]
\draw (3,-2) node {\scriptsize$\ket{0}$};
\draw[thick] (-0.8,-1.3) -- (-0.8,5.8);
\draw[thick] (0.8,-1.3) -- (0.8,5.8);
\draw[thick] (6-0.8,-1.3) -- (6-0.8,5.8);
\draw[thick] (6+0.8,-1.3) -- (6+0.8,5.8);
\draw[thick] (3,-1.3) -- (3,5.8);
\draw[thick,rounded corners] (0,0) -- (2.4,0) -- (2.4,5.8);
\draw[thick,rounded corners] (0,0) -- (-2.4,0) -- (-2.4,5.8);
\draw[thick,rounded corners] (6,0) -- (3.6,0) -- (3.6,5.8);
    \gatebg{0,0}{2};
    \gatebg{6,0}{2};
\draw[fill=btensorcolor,rounded corners=1pt] (2,0.8) rectangle (4,2.8);
\draw (3,1.8) node {\small $U$};
\draw[ thick, fill=btensorcolor, rounded corners=1pt] (-2.9,3.5) rectangle (8,5);
\draw (3,4.25) node {\small $G^\ell$};
\draw (3,7) node {$\mathrel{\scalebox{2}{$=$}}$};
\begin{scope}[shift={(0,9.5)}]
\draw[thick] (-0.8,-1.3) -- (-0.8,3);
\draw[thick] (0.8,-1.3) -- (0.8,3);
\draw[thick] (6-0.8,-1.3) -- (6-0.8,3);
\draw[thick] (6+0.8,-1.3) -- (6+0.8,3);
\draw[thick] (2.4,2) -- (2.4,3);
\draw[thick] (3.6,2) -- (3.6,3);
\draw[thick] (3,2) -- (3,3);
\draw[thick,rounded corners] (0,0) -- (-2.4,0) -- (-2.4,3);
\draw (3.1,1.2) node {\scriptsize $\ket{000}$};
\gatebg{3,0}{5};
\end{scope}
\end{scope}
\draw[draw=gray!60,dashed] (17.5,8.5) -- (19,14);
\draw[draw=gray!60,dashed] (17.5,2.5) -- (19,-2);
\draw[draw=gray,thick,dash pattern=on 1pt off 1pt,rounded corners=1pt] (8.4,8.5) rectangle (17.5,2.5);
\draw[draw=gray,thick,dash pattern=on 1pt off 1pt,rounded corners=1pt] (19,14) rectangle (31,-2);
        \end{tikzpicture}
        \end{gather*}
    \caption{The MPU implementation algorithm proceeds by first realizing small local isometries and then recursively merging them in a tree-like structure, here indicated by arrows. After $\lceil \log_2 N \rceil$ layers of merging, the global MPU is obtained \emph{(left)}. Each individual merging step is performed deterministically using an amplitude amplification technique \emph{(right)}.}
    \label{fig_merging}
\end{figure}
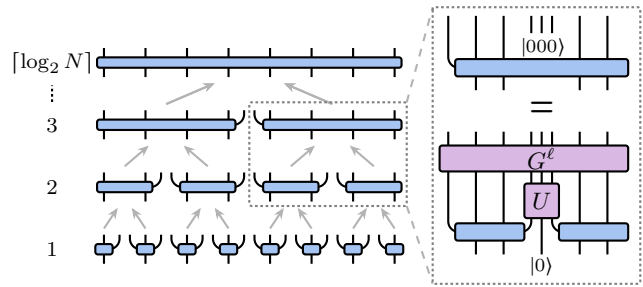

Since the isometry property of $V^{[L,R]}_n$ holds for arbitrary $n$, it naturally motivates a recursive circuit decomposition for uniform-bulk MPUs. In particular, neighboring isometries are successively merged following a tree-like structure (\cref{fig_merging}). To illustrate the idea, suppose, for the moment, that we have an implementation of $V^{[L,R]}_n$ and $V^{[L,R]}_m$ in terms of 2-qudit gates (recall that an isometry corresponds to a unitary with a partially fixed input). Then, the larger isometry $V^{[L,R]}_{n+m}$ can be realized by merging the two via the action of the operator
\begin{align}
    M \coloneqq \ket{00} \bra{ \1 } (R^{-1} \otimes L^{-1}) = 
        \begin{tikzpicture}[scale=0.55,baseline={([yshift=0.0ex] current bounding box.center)}]
		      \foreach \x in {0,...,0}{
            \myarrow{0,-.75}{1};
            \myarrow{1.5,-.75}{1};
            \draw [thick,rounded corners] (0,-1) -- (0,.9) -- (1.5,.9) -- (1.5,-1);
		      \filldraw[color=black, fill=whitetensorcolor, thick] (0,0) circle (0.6);
            \draw (0,0) node {\small$R^{-1}$};
		      \filldraw[color=black, fill=whitetensorcolor, thick] (1.5,0) circle (0.6);
            \draw (1.5,0) node {\small$L^{-1}$};
            \draw [thick] (0,2.4) -- (0,1.9);
            \draw (0,1.5) node {\small$\ket{0}$};
            \draw [thick] (1.5,2.4) -- (1.5,1.9);
            \draw (1.5,1.5) node {\small$\ket{0}$};
            \myarrow{0,2.2}{1};
            \myarrow{1.5,2.2}{1};
        }
        \end{tikzpicture}
        \;.
\end{align}
More explicitly, acting with $M$ on the intermediate bond legs gives
    \begin{multline} \label{eq:isometries_merging}
        M (V^{[L,R]}_{n}\otimes V^{[L,R]}_{m}) = \ket{00} \otimes V^{[L,R]}_{n+m}  = \\
        \\
                \begin{array}{c}
        \begin{tikzpicture}[scale=0.5]
		      \foreach \x in {1,...,1}{
                \GTensor{(-\singledx*\x-.5,0)}{1}{.5}{\small $A$}{1}
            \myarrow{-\singledx*\x-.5,-.65}{1};
            \myarrow{-\singledx*\x-.5,.85}{1};
            \draw [thick,rounded corners] (-\singledx*\x,0) -- (-\singledx*\x+1,0) -- (-\singledx*\x+1,2);
		      \filldraw[color=black, fill=whitetensorcolor, thick] (-\singledx*\x+1,.9) circle (0.6);
            \draw (-\singledx*\x+1,0.9) node {\small $R$};
        }
		      \foreach \x in {3,...,3}{
            \draw [thick,rounded corners] (-\singledx*\x,0) -- (-\singledx*\x-1.5,0) -- (-\singledx*\x-1.5,2);
                \GTensor{(-\singledx*\x,0)}{1}{.5}{\small $A$}{-1}
            \myarrow{-\singledx*\x,.85}{1};
            \myarrow{-\singledx*\x,-.65}{1};
            \myarrow{-\singledx*\x-1.5,1.85}{1};
		      \filldraw[color=black, fill=whitetensorcolor, thick] (-\singledx*\x-1.5,.9) circle (0.6);
            \draw (-\singledx*\x-1.5,0.9) node {\small $L$};
        }
		      \foreach \x in {2,...,2}{
                \SingleDots{-\doubledx*\x-.5,0}{\doubledx*.8}
                \draw (-\doubledx*\x-0.65,-.5) node {\small $n$};
        }
    \begin{scope}[shift={(7.6,0)}]
		      \foreach \x in {1,...,1}{
                \GTensor{(-\singledx*\x-.5,0)}{1}{.5}{\small $A$}{1}
            \myarrow{-\singledx*\x-.5,-.65}{1};
            \myarrow{-\singledx*\x-.5,.85}{1};
            \myarrow{-\singledx*\x+1,1.85}{1};
            \draw [thick,rounded corners] (-\singledx*\x,0) -- (-\singledx*\x+1,0) -- (-\singledx*\x+1,2);
		      \filldraw[color=black, fill=whitetensorcolor, thick] (-\singledx*\x+1,.9) circle (0.6);
            \draw (-\singledx*\x+1,0.9) node {\small $R$};
        }
		      \foreach \x in {3,...,3}{
            \draw [thick,rounded corners] (-\singledx*\x,0) -- (-\singledx*\x-1.5,0) -- (-\singledx*\x-1.5,2);
                \GTensor{(-\singledx*\x,0)}{1}{.5}{\small $A$}{-1}
            \myarrow{-\singledx*\x,.85}{1};
            \myarrow{-\singledx*\x,-.65}{1};
		      \filldraw[color=black, fill=whitetensorcolor, thick] (-\singledx*\x-1.5,.9) circle (0.6);
            \draw (-\singledx*\x-1.5,0.9) node {\small $L$};
        }
		      \foreach \x in {2,...,2}{
                \SingleDots{-\doubledx*\x-.5,0}{\doubledx*.8}
                \draw (-\doubledx*\x-0.65,-.5) node {\small $m$};
        }
    \end{scope}
        \begin{scope}[shift={(-2.2,1)}]
            \draw [thick] (0,2.4) -- (0,1.9);
            \draw (0,1.5) node {\small$\ket{0}$};
            \myarrow{0,2.2}{1};
        \end{scope}
        \begin{scope}[shift={(0.6,1)}]        
            \draw [thick] (1.5,2.4) -- (1.5,1.9);
            \draw (1.5,1.5) node {\small$\ket{0}$};
            \myarrow{1.5,2.2}{1};
        \end{scope}
    \begin{scope}[shift={(-.8,2.5)}]
		      \foreach \x in {0,...,0}{
            \draw [thick,rounded corners] (0,-1) -- (0,.9) -- (1.5,.9) -- (1.5,-1);
		      \filldraw[color=black, fill=whitetensorcolor, thick] (0,0) circle (0.6);
            \draw (0,0) node {\small $R^{-1}$};
		      \filldraw[color=black, fill=whitetensorcolor, thick] (1.5,0) circle (0.6);
            \draw (1.5,0) node {\small $L^{-1}$};
        }
    \end{scope}
        \end{tikzpicture}
                \end{array}
                 \;,
\end{multline}
thereby merging the two isometries into a single one over a larger region -- recall that, importantly, $L,R$ are full rank.

These observations give rise to the following implementation strategy for a uniform-bulk MPU (\cref{fig_merging}):
\begin{enumerate}[(i)]
    \item Implement $V^{[l,R]}_1 \otimes \left(V^{[L,R]}_{1}\right)^{\otimes N-2} \otimes V^{[L,r]}_1$ in parallel. This requires $O(1)$ depth and $O(N)$ auxiliary qudits.
    \item Merge neighboring isometries via applying $M$ in a tree fashion, as in Fig.~\ref{fig_merging}, until the full MPU is obtained.
\end{enumerate}

While $M$ correctly implements the desired merging between two neighboring isometries, it is itself not unitary. Consequently, each application of $M$ must be interpreted as a generalized measurement, necessitating post-selection. However, this post-selection is generally infeasible as its success probability decays exponentially with the system size\footnote{Recall that the input state to the MPU $U_N$ is unknown. Thus, any failure during a merging step in general necessitates a complete restart of the procedure due to potential global correlations of the input state.}.

To bypass the post-selection problem, we will instead employ a \emph{unitary} subroutine that merges neighboring isometries. Specifically, we will show that if a unitary realizing the isometry $V^{[x,R]}_{n}\otimes V^{[L,y]}_{m}$ ($x\in \{l,L\}$, $y \in \{ r,R \}$) can be implemented in depth $T(V^{[x,R]}_{n}\otimes V^{[L,y]}_{m})$, then
\begin{align} \label{eq:complexity_single_merging}
        T(V^{[x,y]}_{n+m}) \le  2q_{\rm{unif}} T(V^{[x,R]}_{n}\otimes V^{[L,y]}_{m}) + O(n+m)
\end{align}
using $O(1)$ additional auxiliary qudits. The constant $q_{\rm{unif}}$, which we refer to as the \emph{MPU conditioning number}, is defined for the uniform case as
\begin{align}
    q_{\rm{unif}}(R,L) \coloneqq \sqrt{\tr\left[R^{-2}(L^{-2})^T\right]} \;.
\end{align}
We note that $q_{\rm{unif}}$ is independent of the system size $N$ since $R,L$ are system-size independent operators. The total cost of implementing the MPU, starting from $V_1^{[\cdot,\cdot]}$ can be bounded from \cref{eq:complexity_single_merging}, leading to a depth
\begin{align} \label{eq:MPU_compl_unif}
    T(U_N) = O(N^{1+\log_2 q_{\rm{unif}}}),
\end{align}
which is $O(\poly N)$, as claimed in \cref{prop:main_hom}. The degree of the polynomial governed by the MPU conditioning number $q_{{\rm unif}}$, which can be optimized by the choice of $L,R$ satisfying \cref{eq:def_isometries} (see End Matter).

\prlsection{Merging isometries without post-selection}Now we explain how the merging of neighboring isometries in \cref{eq:isometries_merging} can be performed unitarily. In general terms, the problem we need to solve is: Implement
\begin{align}
    MV = 
    \begin{array}{c}
         \begin{tikzpicture}[scale=0.5,baseline={([yshift=-.8ex] current bounding box.center)}]
		      \foreach \x in {1,2,...,4}{
              \draw[thick] (\x-2.5,-1) -- (\x-2.5,0);
            \myarrow{\x-2.5,-.7}{1};
              }
		      \foreach \x in {1,2,...,6}{
              \draw[thick] (\x-3.5,0) -- (\x-3.5,1.7);
              }
		      \foreach \x in {3,4,...,4}{            \myarrow{\x-3.5,1.55}{1};
              }
		      \foreach \x in {1,2,...,2}{            \myarrow{\x-3.5,1}{1};
              }
		      \foreach \x in {5,6,...,6}{            \myarrow{\x-3.5,1}{1};
              }
                     \draw[thick, fill=tensorcolor, rounded corners=2pt] (-3,-.5) rectangle (3,.3);
	    \draw (0,-.1) node {\small $V$};
                     \draw[thick, fill=whitetensorcolor, rounded corners=2pt] (-0.8,0.6) rectangle (0.8,1.2);
	    \draw (0,0.9) node {\small $M$};
         \end{tikzpicture}
         \; \in \text{isometry}
    \end{array}
\end{align}
if $V$ is an isometry but $M$ is only a linear operator (i.e., not necessarily unitary). For this, the first step is to implement $M = \sum_i c_i W_i$ as a linear combination of unitaries with the aid of an auxiliary system prepared in the state $B \ket{0}_A \coloneqq \frac{1}{C} \sum_i \sqrt{c_i} \ket{i}_A$ (we denote $C \coloneqq \sum_i c_i$). Concretely, this can be done by applying $U \coloneqq B^\dagger W^{\rm{ctrl}} B$, where $W^{\rm{ctrl}} \coloneqq \sum_{i} \left(W_i\right)_{S} \otimes \ket{i}_A\bra{i}$ is a control unitary and $B$ acts on the ancilla. Indeed, for any $\ket{\psi}_{S}$ in the image $\mathcal S \coloneqq \image V$, a direct calculation gives
\begin{align} \label{eq:aa_subspace}
    U \ket{\psi}_{S} \ket{0}_A = \frac{1}{C} \ket{\Phi} + \sqrt{1 - \frac{1}{C^2}} \ket{\Phi^\perp} \;,
\end{align}
where $\ket{\Phi} \coloneqq (M\ket{\psi}_{S}) \ket{0}_A$ such that $_A\langle 0 \ket{\Phi^\perp} = 0$. In conclusion, this shows how post-selecting the ancilla in the $\ket{0}_A$ state results in applying $M$.

At first glance, the above construction appears to offer no advantage, since it again relies on post-selection. However, \cref{eq:aa_subspace} strongly hints that an amplitude amplification approach could be used to boost the success probability, thus making the scheme deterministic. Unfortunately, standard amplitude amplification~\cite{brassard2002amplitude} is not applicable here, since it requires a reflection about the input state, which is unknown. The setup is instead reminiscent of \emph{oblivious amplitude amplification}~\cite{berry2014exponential}, where the reflection is independent of the input. However, this method cannot be applied directly, as it requires $V$ to be a unitary, whereas in our case it is merely an isometry.
To overcome this, we develop a generalization of the oblivious amplitude amplification scheme, applicable to the isometric setting. The method is not strictly oblivious, as the amplitude amplification reflections over the $\{ \ket{\Phi},\ket{\Phi^{\perp}} \}$ subspace [cf.~\cref{eq:aa_subspace}] depend on the image of $V$, but, importantly, not on the unknown input state. This modification suffices to keep the circuit implementation efficient. Specifically, we show that
    \begin{align}
        G^\ell U (\ket{\psi}_S\ket{0}_A) = (M\ket{\psi}_{S}) \ket{0}_A
    \end{align}
where
    \begin{align}
        G \coloneqq - U R_{\Psi} U^\dagger R_{\Phi}
    \end{align}
corresponds to a single amplitude amplification rotation. The total number of rotations is bounded by $\ell \le C$, where in our case $C \le q_{\rm{unif}}$ is governed by the MPU conditioning number~\cite{sm}. The reflections are given by
\begin{subequations}
    \begin{align}
        R_{\Phi} & \coloneqq \1_S \otimes 2 \ket{0}_A\bra{0} - \1, \\
        R_{\Psi} & \coloneqq \left( \1_S \otimes 2\ket{0}_A \bra{0} - \1 \right) \left( \1 - P_{\mathcal S^\perp} \otimes 2 \ket{0}_A\bra{0} \right),
    \end{align}
\end{subequations}
where $P_{\mathcal S^\perp}$ is the (orthogonal) projector onto the orthogonal complement of $\mathcal S$. When $\mathcal{S}$ is the entire space, $P_{\mathcal{S}^\perp} = 0$, and the standard oblivious amplitude amplification protocol is recovered. Finally, we remark that the projector $P_{\mathcal{S}^\perp}$ can be implemented given an implementation of the isometry $V$ --- for instance, if $V$ is implemented as a unitary with some auxiliary input qubits set to $\ket{0}$, $P_{\mathcal{S}^\perp}$ is equivalent to undoing this unitary, then applying a phase-flip unitary controlled on the auxiliary qubits, and then applying the unitary.

In summary, a single merging step $V^{[x,R]}_{n} \otimes V^{[L,y]}_{m} \mapsto V^{[x,y]}_{n+m}$ can be performed deterministically by applying the unitary $G^\ell U$ (see \cref{fig_merging}). Each application of the amplification unitary $G$ requires two evaluations of $V^{[x,R]}_{n} \otimes V^{[L,y]}_{m}$, and a total of $\ell \le q_{\rm unif}$ repetitions is sufficient. The remaining circuit components contribute a cost of $O(1)$. A detailed counting is provided in~\cite{sm} and leads to the final bound in \cref{eq:complexity_single_merging}.

\prlsection{Nonuniform MPUs}Our implementation algorithm can be straightforwardly adapted to general, nonuniform MPUs. These are specified by site-dependent tensors $A_{1}, \dots, A_{N}$ such that the resulting MPO is unitary. For this case, $D$ denotes the maximum of the local bond dimensions $D_k$. We take for convenience uniform physical dimension $d$.

The starting point is again a decomposition of the form \cref{eq:def_isometries}. However, due to the lack of translational invariance, the corresponding isometries are now site-dependent:
    \begin{align} \label{eq:isometries_inhomo}
        V_{jk}^{[L_j,R_k]} \coloneqq 
        \begin{array}{c}
        \begin{tikzpicture}[scale=0.5]
		      \foreach \x in {1,...,1}{
                \GTensor{(-\singledx*\x-.5,0)}{1}{.5}{\small $A_k$}{1}
            \myarrow{-\singledx*\x-.5,-.65}{1};
            \myarrow{-\singledx*\x-.5,.85}{1};
            \myarrow{-\singledx*\x+1,1.85}{1};
            \draw [thick,rounded corners] (-\singledx*\x,0) -- (-\singledx*\x+1,0) -- (-\singledx*\x+1,2);
		      \filldraw[color=black, fill=whitetensorcolor, thick] (-\singledx*\x+1,.9) circle (0.6);
            \draw (-\singledx*\x+1,0.9) node {\small$R_{k}$};
        }
		      \foreach \x in {3,...,3}{
            \draw [thick,rounded corners] (-\singledx*\x,0) -- (-\singledx*\x-1.5,0) -- (-\singledx*\x-1.5,2);
                \GTensor{(-\singledx*\x,0)}{1}{.5}{\small $A_j$}{-1}
            \myarrow{-\singledx*\x,.85}{1};
            \myarrow{-\singledx*\x,-.65}{1};
            \myarrow{-\singledx*\x-1.5,1.85}{1};
		      \filldraw[color=black, fill=whitetensorcolor, thick] (-\singledx*\x-1.5,.9) circle (0.6);
            \draw (-\singledx*\x-1.5,0.9) node {\small$L_{j}$};
        }
		      \foreach \x in {2,...,2}{
                \SingleDots{-\doubledx*\x-.5,0}{\doubledx*.8}
        }
        \end{tikzpicture}
                \end{array}
        \;.
\end{align}
\noindent By exploiting the freedom in the representation of the MPU tensors~\cite{perez_garcia2007matrix}, one can bring the MPU into a canonical form, which corresponds to the usual MPS canonical form of its Choi state~\cite{styliaris2025matrix}. In this representation, we can find full-rank operators $L_j,R_k$ that give rise to valid isometries, as in the uniform case, even without the need for \cref{assumption}, but with the important difference that they are now site-dependent (see End Matter).

This immediately enables the straightforward application of the tree circuit construction, which relies only on these properties. However, care needs to be taken when calculating the cost of a single merging, since now we need to replace $q_{\rm{unif}}$ in \cref{eq:complexity_single_merging} with its site-dependent counterpart
\begin{align} \label{eq:def_q_k}
    q_k &\coloneqq \inf_{\substack{R_{k},\,L_{k+1}}} \sqrt{\tr \left[ R^{-2}_{k} (L^{-2}_{k+1})^{T} \right]} \;,
\end{align}
where the infimum is taken over valid choices that yield an isometry in \cref{eq:isometries_inhomo}. Note that all $q_k$ remain invariant with respect to gauge transformations of the form $(A_{k})^{ij} \mapsto X_{k-1}^{-1} (A_{k})^{ij} X_k$ for invertible $X_k$ and are thus a property of the MPU and not its specific representation. We define the MPU conditioning number for the general case as
\begin{align}
    q &\coloneqq \max_{k} q_k \;.
\end{align}
Our main result for general, nonuniform MPUs is:
\begin{theorem} \label{prop:main}
    An MPU over $N$ sites can be implemented with a quantum circuit of depth $O(N^{1+\log_2 q} \poly D)$ and $O(N \log D$) auxiliary qudits.
\end{theorem}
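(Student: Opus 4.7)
The plan is to lift the tree-merging construction used for Theorem~\ref{prop:main_hom} to the nonuniform setting, replacing the translation-invariant isometry building blocks $V_n^{[L,R]}$ by the site-dependent analogues $V^{[L_j,R_k]}_{j,k}$ of \cref{eq:isometries_inhomo}, and carefully tracking the extra $D$-dependence that now appears because the bond dimension and boundary operators act on $O(\log D)$ qudits.

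First I would invoke the MPU canonical form detailed in the end matter to construct, for every pair $j \le k$, full-rank site-dependent operators $L_j,R_k$ such that $V^{[L_j,R_k]}_{j,k}$ (together with its boundary variants $V^{[l,R_k]}_{1,k}$ and $V^{[L_j,r]}_{j,N}$) is an isometry. This step replaces the role of Assumption~\ref{assumption} from the uniform case: the canonical form automatically supplies $L_j,R_k$ with the right rank and isometry properties without any further hypothesis. Gauge invariance of the $q_k$ in \cref{eq:def_q_k} then guarantees that $q = \max_k q_k$ is a well-defined, finite property of the MPU itself.

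With these building blocks in hand, I would run the tree circuit of Fig.~\ref{fig_merging} essentially verbatim. At level $0$, the boundary isometries $V^{[l,R_1]}_{1,1}$ and $V^{[L_N,r]}_{N,N}$ and the bulk isometries $V^{[L_j,R_j]}_{j,j}$ are prepared in parallel; each acts on $O(1)$ physical qudits together with $O(\log D)$ bond qudits, and so can be realized exactly in depth $O(\poly D)$. Every subsequent level merges pairs of neighbouring isometries through the generalized oblivious amplitude amplification unitary $G^{\ell_k} U$ with a site-dependent merging operator $M_k = \ket{00}\bra{\1}(R_k^{-1} \otimes L_{k+1}^{-1})$ at junction site $k$. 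The linear-combination-of-unitaries decomposition of $M_k$ and the ancilla preparation $B$ now act on $O(\log D)$ bond qudits and can be compiled exactly in depth $O(\poly D)$; the amplification terminates within $\ell_k \le q_k \le q$ rotations, each using two calls to the two-block isometry together with a reflection $R_\Psi$ whose cost reduces to two further isometry evaluations exactly as in the uniform case.

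Assembling the pieces yields the recursion
\begin{equation*}
T_h \;\le\; 2q\, T_{h-1} + O(\poly D),
\end{equation*}
seeded by $T_0 = O(\poly D)$. Iterating for $h = \lceil \log_2 N \rceil$ levels gives $T(U_N) = O((2q)^{\log_2 N}\, \poly D) = O(N^{1+\log_2 q}\, \poly D)$, matching the stated bound. The ancilla count is $O(\log D)$ per site for the bond registers and $O(\log D)$ per merge for the amplification ancilla, summing to $O(N \log D)$ in total. I expect the main obstacle to lie not in the merging analysis, which is structurally a verbatim copy of the uniform case, but in the construction of the site-dependent $L_j, R_k$ without Assumption~\ref{assumption}: one must verify that the MPU canonical form supplies operators that are simultaneously full-rank on every bipartition, compatible with the boundary vectors, and such that the infimum in \cref{eq:def_q_k} is attained within the class of isometry-yielding choices.
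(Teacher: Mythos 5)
Your proposal follows essentially the same route as the paper's proof: the MPU canonical form supplies full-rank site-dependent $L_j,R_k$ (indeed $L_k=\1$, $R_k=\diag(s_{k,1},\dots,s_{k,D_k})$ with strictly positive Schmidt values, so no analogue of Assumption~\ref{assumption} is needed and any valid choice, not an attained infimum, suffices), and the tree of mergings via the LCU decomposition of $M_k$ plus the subspace amplitude-amplification unitary gives the same recursion and final bound. The only minor bookkeeping differences are that the paper's per-level additive cost also carries an $O(2^{j}\polylog D)$ term from the multi-controlled reflection acting across the merged block, and that exact termination of the amplification at an integer number of rotations requires the small ancilla-rotation adjustment of $C$; neither affects the stated $O(N^{1+\log_2 q}\poly D)$ depth or the $O(N\log D)$ ancilla count.
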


The complexity is now governed by the scaling of the MPU conditioning number $q$. A practical upper bound on $q$ can be obtained from the operator Schmidt values of the MPU. Specifically, consider its (properly normalized) Choi state $\ket{U_N}$. Unitarity ensures that $\langle U_N | U_N \rangle = \sum_i s_{k,i}^2 = d^{-N}\Tr(U_N^\dagger U_N) = 1$ where $s_{k,i}$ are the Schmidt values at bipartition $[1,\dots,k:k+1,\dots,N]$. Define the smallest nonzero Schmidt value as
\begin{align}
    s_{\min} \coloneqq \min_{k,i} \{ s_{k,i} : s_{k,i}\ne 0 \}
\end{align}
In the End Matter, we construct an explicit choice for $L_k,R_k$, for which the MPU conditioning number can be related to the MPU Schmidt values, yielding
\begin{align}
    q \le \sqrt{D}/s_{\min} \;.
\end{align}
This gives a general and efficient criterion to verify polynomial circuit complexity for an arbitrary MPU.

\prlsection{Outlook}We have addressed the question of how to implement MPUs as quantum circuits, providing an explicit construction of polynomial depth for MPUs built from a repeated bulk tensor with open boundary. Our method extends to general MPUs under a mild spectral condition, expressed in terms of the bond dimension and smallest nonzero Schmidt value of the Choi state. More broadly, our results show how the representation of unitaries as tensor networks, a form that constrains their entanglement structure, implies their efficient physical realization. It remains open to optimize the circuit depth of our algorithm using measurements and (global) feedforward, which could result in speedups. For projected entangled-pair unitaries (PEPUs) in higher spatial dimensions, our algorithm remains applicable provided the boundary state of local regions factorizes -- an analog of the uncorrelated $L,R$ condition that holds automatically in 1D. In this setting, additional care is required, as the corresponding conditioning parameter $q$ may increase, as the merging step in higher dimensions involves regions with growing boundary size. It remains an interesting direction for future work to identify and characterize subclasses of two-dimensional PEPUs that can be efficiently implemented using such methods. 

\prlsection{Acknowledgments}We are grateful to L.~Fidkowski, J.~Haah, M.~Florido-Llin\'{a}s, Y.~Liu, A.~Moln\'{a}r, D.~P\'{e}rez-Garc\'{i}a, A.~Ruiz-de-Alarc\'{o}n, and N.~Schuch for insightful discussions. J.I.C. acknowledges funding from the Federal Ministry of Education and Research Germany (BMBF) via the project Almanaq. Work at the Max Planck Institute of Quantum Optics is part of the Munich Quantum Valley, which is supported by the Bavarian state government with funds from the Hightech Agenda Bayern Plus.

\bibliography{my_refs}

\section*{End Matter}

\prlsection{Local isometries: Uniform bulk}Here we show that the operators of \cref{eq:def_isometries} are indeed isometries satisfying the properties mentioned in the main text. Consider the $N=3$ site MPU and $\sigma$, $\tau$ arbitrary single-site density operators. Then from unitarity $U^\dagger U = \1$ it is immediate that
\begin{align}
    \Tr_{1,3} \left[ (\sigma \otimes \1 \otimes \tau) U^\dagger U\right] = \1 \;.
\end{align}
In graphical representation,
\begin{align} \label{eq:LR_unitarity}
        \begin{array}{c}
        \begin{tikzpicture}[scale=0.5]
		      \foreach \x in {2,...,2}{
            \SideIdentityTensor{-\x*\doubledx,0}{\small $R^2$}{\stradius}{3};
        }
		      \foreach \x in {3,...,3}{
                \DoubleATensor{(-\doubledx*\x,0)}{-1};
                \draw (-\doubledx*\x,-0.8) node {\small $A$};
                \draw (-\doubledx*\x,0.8) node {\small $A^\dagger$};
            \myarrow{-\doubledx*\x,-0.65-0.8}{1}
            \myarrow{-\doubledx*\x,0.85+0.8}{1}
        }
		      \foreach \x in {4,...,4}{
            \SideIdentityTensor{-\x*\doubledx,0}{\small $L^2$}{\stradius}{-3};
        }
        \end{tikzpicture}
                \end{array}
        = 
        \begin{array}{c}
        \begin{tikzpicture}[scale=0.5]
		      \foreach \x in {1,...,1}{
                \DoubleIdentityTensor{(-\identitydx*\x,0)}{}{0};
            \myarrow{-\identitydx*\x,0}{1}
        }
        \end{tikzpicture}
        \end{array}
        \;,
    \end{align}
where
\begin{align}
    L^2 \coloneqq 
    \begin{tikzpicture}[scale=0.5,xscale=-1,baseline={([yshift=-2.8ex] current bounding box.center)}]
		\foreach \x in {-1,...,-1}{
            \draw[thick] (-\doubledx*\x,-0.8) -- (0,-0.8);
		      \filldraw[color=black, fill=whitetensorcolor, thick] (-\doubledx*\x,-0.8) circle (0.5);
            \draw (-\doubledx*\x,-0.8) node {\small$l$};
            \draw[thick] (-\doubledx*\x,0.8) -- (0,0.8);
		      \filldraw[color=black, fill=whitetensorcolor, thick] (-\doubledx*\x,0.8) circle (0.5);
            \draw (-\doubledx*\x,0.8) node {\small$l^*$};
        }
		      \foreach \x in {0,...,0}{
        \draw[thick, fill=whitetensorcolor, rounded corners=2pt] (-.3*\doubledx,2.8) rectangle (0.3*\doubledx,1.8);
	    \draw (\x*\doubledx,2.3) node {\small $\sigma$};
                \DoubleLongLine{-\doubledx*\x,0};
                \DoubleATensor{(-\doubledx*\x,0)}{1};
                \draw (-\doubledx*\x,-0.8) node {\small $A$};
                \draw (-\doubledx*\x,0.8) node {\small $A^\dagger$};
            \draw[thick] (-\doubledx*\x-1,0.8) -- (-\doubledx*\x-1.5,0.8);
            \myarrow{(-\doubledx*\x-1.25,0.8)}{4};
            \draw[thick] (-\doubledx*\x-1,-0.8) -- (-\doubledx*\x-1.5,-0.8);
            \myarrow{-\doubledx*\x-1,-0.8}{3};
        }
    \end{tikzpicture}
 \;, \quad
    R^2 \coloneqq 
    \begin{tikzpicture}[scale=0.5,baseline={([yshift=-2.8ex] current bounding box.center)}]
		\foreach \x in {-1,...,-1}{
            \draw[thick] (-\doubledx*\x,-0.8) -- (0,-0.8);
		      \filldraw[color=black, fill=whitetensorcolor, thick] (-\doubledx*\x,-0.8) circle (0.5);
            \draw (-\doubledx*\x,-0.8) node {\small$r$};
            \draw[thick] (-\doubledx*\x,0.8) -- (0,0.8);
		      \filldraw[color=black, fill=whitetensorcolor, thick] (-\doubledx*\x,0.8) circle (0.5);
            \draw (-\doubledx*\x,0.8) node {\small$r^*$};
        }
		      \foreach \x in {0,...,0}{
        \draw[thick, fill=whitetensorcolor, rounded corners=2pt] (-.3*\doubledx,2.8) rectangle (0.3*\doubledx,1.8);
	    \draw (\x*\doubledx,2.3) node {\small $\tau$};
                \DoubleLongLine{-\doubledx*\x,0};
                \DoubleATensor{(-\doubledx*\x,0)}{1};
                \draw (-\doubledx*\x,-0.8) node {\small $A$};
                \draw (-\doubledx*\x,0.8) node {\small $A^\dagger$};
            \draw[thick] (-\doubledx*\x-1,0.8) -- (-\doubledx*\x-1.5,0.8);
            \myarrow{(-\doubledx*\x-1.25,0.8)}{4};
            \draw[thick] (-\doubledx*\x-1,-0.8) -- (-\doubledx*\x-1.5,-0.8);
            \myarrow{-\doubledx*\x-1,-0.8}{3};
        }
    \end{tikzpicture}
    \;.
\end{align}
Note that both $L^2,R^2$ are positive-semidefinite; thus, their square roots $L,R$ are well-defined. \Cref{eq:LR_unitarity} thus establishes $V^{[L,R]\dagger}_1 V^{[L,R]}_1 = \1$. Repeating the same argument for $N = n+2$ and tracing out the first and last sites establishes that  $V^{[L,R]\dagger}_n V^{[L,R]}_n = \1$ while clearly $L,R$ are independent of $N$ and $n$. The same argument applies for tracing only the first or last site, thus yielding that $V^{[l,R]\dagger}_n,V^{[L,r]\dagger}_n$ are also isometries. The resulting $L,R$ depend on the choice of $\sigma,\tau$ and their support, which could be an arbitrary number of sites instead of a single site considered previously. This freedom can be used to optimize the constant $q_{\rm{unif}}$ that governs the degree of the polynomial in the uniform-bulk MPU circuit [\cref{eq:MPU_compl_unif}].

It remains only to show that there exists a full-rank choice of $L,R$. This is equivalent to \cref{assumption}. To see this, first notice that $\rank [L(\sigma)] \le \rank[L(\1)]$ for all states $\sigma$. This is because, for any matrix $X$ and positive-semidefinite $\sigma$, $\rank(X^\dagger \sigma X) \le \rank(X^\dagger X)$ with equality if $\sigma$ is full rank. Thus, if a full-rank $L$ exists, it is obtained for $\sigma \propto \1$. \Cref{assumption} states $\rank[L(\1)]=D$, i.e., it is full rank. A similar argument applies for $R$.

\prlsection{Local isometries: General MPUs}We now consider the nonuniform case [\cref{eq:isometries_inhomo}], with a similar derivation. Consider $\sigma$, $\tau$ arbitrary density operators supported over sites $1,\dots,j-1$ and $k+1,\dots,N$, respectively. From unitarity,
\begin{align}
        \begin{array}{c}
        \begin{tikzpicture}[scale=0.5]
		      \foreach \x in {0,...,0}{
            \SideIdentityTensor{\x*\doubledx,0}{\small $R^2_k$}{\stradius+.1}{3};
        }
		      \foreach \x in {1,...,1}{
                \DoubleATensor{(-\doubledx*\x,0)}{0};
                \draw (-\doubledx*\x,-0.8) node {\small $A_{k}$};
                \draw (-\doubledx*\x,0.8) node {\small $A^\dagger_{k}$};
            \myarrow{-\doubledx*\x,-0.65-0.8}{1}
            \myarrow{-\doubledx*\x,0.85+0.8}{1}
        }
		      \foreach \x in {2,...,2}{
                \DoubleDots{-\doubledx*\x,0}{\doubledx/2};
        }
		      \foreach \x in {3,...,3}{
                \DoubleATensor{(-\doubledx*\x,0)}{-1};
                \draw (-\doubledx*\x,-0.8) node {\small $A_{j}$};
                \draw (-\doubledx*\x,0.8) node {\small $A^\dagger_{j}$};
            \myarrow{-\doubledx*\x,-0.65-0.8}{1}
            \myarrow{-\doubledx*\x,0.85+0.8}{1}
        }
		      \foreach \x in {4,...,4}{
            \SideIdentityTensor{-\x*\doubledx,0}{\small $L^2_j$}{\stradius+0.1}{-3};
        }
        \end{tikzpicture}
                \end{array}
        = 
        \begin{array}{c}
        \begin{tikzpicture}[scale=0.5]
		      \foreach \x in {1,...,1}{
                \DoubleIdentityTensor{(-\identitydx*\x,0)}{}{0};
            \myarrow{-\identitydx*\x,0}{1}
        }
		      \foreach \x in {2,...,2}{
                \DoubleDots{-\identitydx*\x,0}{\identitydx/2};
        }
		      \foreach \x in {3,...,3}{
                \DoubleIdentityTensor{(-\identitydx*\x,0)}{}{0};
            \myarrow{-\identitydx*\x,0}{1}
        }
        \end{tikzpicture}
        \end{array}
        \;,
    \end{align}
where
\begin{align} 
    L^2_j \coloneqq 
    \begin{tikzpicture}[scale=0.4,xscale=-1,baseline={([yshift=-2.2ex] current bounding box.center)}]
        \draw[thick, fill=whitetensorcolor, rounded corners=2pt] (-2.3*\doubledx,2.8) rectangle (0.4*\doubledx,1.8);
	    \draw (-1*\doubledx,2.3) node {\small $\sigma$};
		      \foreach \x in {0,...,0}{
                \DoubleLongLine{-\doubledx*\x,0};
                \DoubleATensor{(-\doubledx*\x,0)}{1};
                \draw (-\doubledx*\x,-0.8) node {\scriptsize $A_{1}$};
                \draw (-\doubledx*\x,0.8) node {\scriptsize $A^\dagger_{1}$};
        }
		      \foreach \x in {1,...,1}{
                \DoubleDots{-\doubledx*\x,0}{\doubledx/2}
        }
		      \foreach \x in {2,...,2}{
                \DoubleLongLine{-\doubledx*\x,0}
                \DoubleATensor{(-\doubledx*\x,0)}{0}
                \draw (-\doubledx*\x+1,-1.6) node {\scriptsize $A_{j-1}$};
                \draw (-\doubledx*\x+1,0.2) node {\scriptsize $A^\dagger_{j-1}$};
            \draw[thick] (-\doubledx*\x-1,0.8) -- (-\doubledx*\x-1.5,0.8);
            \myarrow{(-\doubledx*\x-1.25,0.8)}{4};
            \draw[thick] (-\doubledx*\x-1,-0.8) -- (-\doubledx*\x-1.5,-0.8);
            \myarrow{-\doubledx*\x-1,-0.8}{3};
        }
    \end{tikzpicture}
 \;, \;
    R^2_k \coloneqq 
    \begin{tikzpicture}[scale=0.4,baseline={([yshift=-2.2ex] current bounding box.center)}]
        \draw[thick, fill=whitetensorcolor, rounded corners=2pt] (-2.3*\doubledx,2.8) rectangle (0.4*\doubledx,1.8);
	    \draw (-1*\doubledx,2.3) node {\small $\tau$};
		      \foreach \x in {0,...,0}{
                \DoubleLongLine{-\doubledx*\x,0};
                \DoubleATensor{(-\doubledx*\x,0)}{1};
                \draw (-\doubledx*\x,-0.8) node {\scriptsize $A_{\!N}$};
                \draw (-\doubledx*\x,0.8) node {\scriptsize $A^\dagger_{\!N}$};
        }
		      \foreach \x in {1,...,1}{
                \DoubleDots{-\doubledx*\x,0}{\doubledx/2}
        }
		      \foreach \x in {2,...,2}{
                \DoubleLongLine{-\doubledx*\x,0}
                \DoubleATensor{(-\doubledx*\x,0)}{0}
                \draw (-\doubledx*\x+1,-1.6) node {\scriptsize $A_{k+1}$};
                \draw (-\doubledx*\x+1,0) node {\scriptsize $A^\dagger_{k+1}$};
            \draw[thick] (-\doubledx*\x-1,0.8) -- (-\doubledx*\x-1.5,0.8);
            \myarrow{(-\doubledx*\x-1.25,0.8)}{4};
            \draw[thick] (-\doubledx*\x-1,-0.8) -- (-\doubledx*\x-1.5,-0.8);
            \myarrow{-\doubledx*\x-1,-0.8}{3};
        }
    \end{tikzpicture}
    \;.
\end{align}
Note that both operators are again positive-definite by construction and thus their square roots are well-defined, implying $V^\dagger_{jk} V_{jk} = \1$ as claimed. Since $L_j$ and $R_k$ depend on the choices of $\sigma$ and $\tau$, one can optimize over this choice, as in \cref{eq:def_q_k}.

It remains to show that there is a choice with $L_k,R_k$ full rank. This is true by bringing, without loss of generality, the MPU into a canonical form, which corresponds to the usual MPS canonical form of its Choi state~\cite{styliaris2025matrix}. In this representation, by choosing $\sigma$ and $\tau$ maximally mixed states, we get~\cite{styliaris2025matrix}
\begin{align} \label{eq:choice_schmidt}
    L_k = \1, \quad  R_k = \diag(s_{k,1},\dots,s_{k,D_k}) \quad \forall k \;,
\end{align}
where the $s_{k,i}$ are \emph{strictly positive} and they exactly correspond to the nonzero Schmidt values of the Choi state $\ket{U_N}$ for the bipartition $[1,\dots,k:k+1,\dots,N]$.
%

\appendix

\setcounter{equation}{0}
\setcounter{figure}{0}
\setcounter{table}{0}
\makeatletter
\renewcommand{\thefigure}{S\arabic{figure}}

\section*{Supplemental Material}

\section{Proof of \cref{prop:main_hom} and \cref{prop:main}}

We now present the proof of the Theorems of the main text. For clarity, we begin with the more general nonuniform case and later adapt the argument to the uniform setting. This reverses the order in which the corresponding theorems appear in the main text.

Before proceeding, we introduce a few technical ingredients required for the proof. For completeness, we repeat some definitions, results, and proofs already discussed in the main text.

\subsection{MPU Tensors and Local Isometries}
We start with the definition of a (nonuniform) MPU.

\begin{definition}[MPU] \label{def:MPU_inhomo}
    A sequence of tensors $A_{1}, \dots, A_{N}$ defines an MPU if
\begin{align}
 U = 
        \begin{array}{c}
        \begin{tikzpicture}[scale=0.5]
		      \foreach \x in {1,...,1}{
                \GTensor{(-\singledx*\x-.5,0)}{1}{.5}{\small $A_N$}{1}
            \myarrow{-\singledx*\x-.5,-.65}{1};
            \myarrow{-\singledx*\x-.5,.85}{1};
        }
		      \foreach \x in {3,...,3}{
                \GTensor{(-\singledx*\x,0)}{1}{.5}{\small $A_2$}{0}
            \myarrow{-\singledx*\x,-.65}{1};
            \myarrow{-\singledx*\x,.85}{1};
        }
		      \foreach \x in {2,...,2}{
                \SingleDots{-\doubledx*\x-.5,0}{\doubledx*.8}
        }
		      \foreach \x in {4,...,4}{
                \GTensor{(-\singledx*\x,0)}{1}{.5}{\small $A_1$}{-1}
            \myarrow{-\singledx*\x,-.65}{1};
            \myarrow{-\singledx*\x,.85}{1};
        }
        \end{tikzpicture}
                \end{array}
\end{align}
is unitary, i.e.,
\begin{align}
U^\dagger U = 
        \begin{array}{c}
        \begin{tikzpicture}[scale=0.5]
		      \foreach \x in {1,...,1}{
                \DoubleATensor{(-\doubledx*\x,0)}{1}
                \draw (-\doubledx*\x,-0.8) node {\small $A_N$};
                \draw (-\doubledx*\x,0.8) node {\small $A^\dagger_{N}$};
            \myarrow{-\doubledx*\x,-0.65-0.8}{1}
            \myarrow{-\doubledx*\x,0.85+0.8}{1}
        }
		      \foreach \x in {3,...,3}{
                \DoubleATensor{(-\doubledx*\x,0)}{0}
                \draw (-\doubledx*\x,-0.8) node {\small $A_2$};
                \draw (-\doubledx*\x,0.8) node {\small $A^\dagger_{2}$};
            \myarrow{-\doubledx*\x,-0.65-0.8}{1}
            \myarrow{-\doubledx*\x,0.85+0.8}{1}
        }
		      \foreach \x in {2,...,2}{
                \DoubleDots{-\doubledx*\x,0}{\doubledx/2}
        }
		      \foreach \x in {4,...,4}{
                \DoubleATensor{(-\doubledx*\x,0)}{-1}
                \draw (-\doubledx*\x,-0.8) node {\small $A_1$};
                \draw (-\doubledx*\x,0.8) node {\small $A^\dagger_{1}$};
            \myarrow{-\doubledx*\x,-0.65-0.8}{1}
            \myarrow{-\doubledx*\x,0.85+0.8}{1}
        }
        \end{tikzpicture}
                \end{array}
         = 
                 \begin{array}{c}
        \begin{tikzpicture}[scale=0.5]
		      \foreach \x in {1,...,1}{
                \DoubleIdentityTensor{(-\identitydx*\x,0)}{}{0};
                \myarrow{-\identitydx*\x,0}{1}
        }
		      \foreach \x in {2,...,2}{
                \DoubleDots{-\identitydx*\x,0}{\identitydx/2};
        }
		      \foreach \x in {3,...,4}{
                \DoubleIdentityTensor{(-\identitydx*\x,0)}{}{0};
                \myarrow{-\identitydx*\x,0}{1}
        }
        \end{tikzpicture}
        \end{array}
        \;.
\end{align}
\end{definition}

As discussed in the main text and End Matter, the following isometry condition can be understood as the counterpart of unitarity on the level of individual tensors.

\begin{lemma}[Local isometries] \label{lem:isometries}
Let $A_1,\dots, A_N$ be an MPU and define the sets of positive-definite operators
\begin{align} \label{eq:def_LR}
    \mathcal L_j \coloneqq \Bigg\{
    \begin{tikzpicture}[scale=0.4,xscale=-1,baseline={([yshift=-2.2ex] current bounding box.center)}]
        \draw[thick, fill=whitetensorcolor, rounded corners=2pt] (-2.3*\doubledx,2.8) rectangle (0.4*\doubledx,1.8);
	    \draw (-1*\doubledx,2.3) node {\small $\sigma$};
		      \foreach \x in {0,...,0}{
                \DoubleLongLine{-\doubledx*\x,0};
                \DoubleATensor{(-\doubledx*\x,0)}{1};
                \draw (-\doubledx*\x,-0.8) node {\scriptsize $A_{1}$};
                \draw (-\doubledx*\x,0.8) node {\scriptsize $A^\dagger_{1}$};
        }
		      \foreach \x in {1,...,1}{
                \DoubleDots{-\doubledx*\x,0}{\doubledx/2}
        }
		      \foreach \x in {2,...,2}{
                \DoubleLongLine{-\doubledx*\x,0}
                \DoubleATensor{(-\doubledx*\x,0)}{0}
                \draw (-\doubledx*\x+1,-1.6) node {\scriptsize $A_{j-1}$};
                \draw (-\doubledx*\x+1,0.2) node {\scriptsize $A^\dagger_{j-1}$};
            \draw[thick] (-\doubledx*\x-1,0.8) -- (-\doubledx*\x-1.5,0.8);
            \myarrow{(-\doubledx*\x-1.25,0.8)}{4};
            \draw[thick] (-\doubledx*\x-1,-0.8) -- (-\doubledx*\x-1.5,-0.8);
            \myarrow{-\doubledx*\x-1,-0.8}{3};
        }
    \end{tikzpicture}
\succ 0 \Bigg\}_{\displaystyle \sigma} \;, \;
    \mathcal R_k \coloneqq \Bigg\{
    \begin{tikzpicture}[scale=0.4,baseline={([yshift=-2.2ex] current bounding box.center)}]
        \draw[thick, fill=whitetensorcolor, rounded corners=2pt] (-2.3*\doubledx,2.8) rectangle (0.4*\doubledx,1.8);
	    \draw (-1*\doubledx,2.3) node {\small $\tau$};
		      \foreach \x in {0,...,0}{
                \DoubleLongLine{-\doubledx*\x,0};
                \DoubleATensor{(-\doubledx*\x,0)}{1};
                \draw (-\doubledx*\x,-0.8) node {\scriptsize $A_{\!N}$};
                \draw (-\doubledx*\x,0.8) node {\scriptsize $A^\dagger_{\!N}$};
        }
		      \foreach \x in {1,...,1}{
                \DoubleDots{-\doubledx*\x,0}{\doubledx/2}
        }
		      \foreach \x in {2,...,2}{
                \DoubleLongLine{-\doubledx*\x,0}
                \DoubleATensor{(-\doubledx*\x,0)}{0}
                \draw (-\doubledx*\x+1,-1.6) node {\scriptsize $A_{k+1}$};
                \draw (-\doubledx*\x+1,0) node {\scriptsize $A^\dagger_{k+1}$};
            \draw[thick] (-\doubledx*\x-1,0.8) -- (-\doubledx*\x-1.5,0.8);
            \myarrow{(-\doubledx*\x-1.25,0.8)}{4};
            \draw[thick] (-\doubledx*\x-1,-0.8) -- (-\doubledx*\x-1.5,-0.8);
            \myarrow{-\doubledx*\x-1,-0.8}{3};
        }
    \end{tikzpicture}
    \succ 0 \Bigg\}_{\displaystyle \tau}
\end{align}
where $\sigma, \tau$ are density operators. Then for all $j \le k$, $L^2_j \in \mathcal L_j$, $R^2_k \in \mathcal R_k$,
    \begin{align}
        V_{jk} \coloneqq 
        \begin{array}{c}
        \begin{tikzpicture}[scale=0.5]
		      \foreach \x in {1,...,1}{
                \GTensor{(-\singledx*\x-.5,0)}{1}{.5}{\small $A_k$}{1}
            \myarrow{-\singledx*\x-.5,-.65}{1};
            \myarrow{-\singledx*\x-.5,.85}{1};
            \myarrow{-\singledx*\x+1,1.85}{1};
            \draw [thick,rounded corners] (-\singledx*\x,0) -- (-\singledx*\x+1,0) -- (-\singledx*\x+1,2);
		      \filldraw[color=black, fill=whitetensorcolor, thick] (-\singledx*\x+1,.9) circle (0.6);
            \draw (-\singledx*\x+1,0.9) node {\small$R_{k}$};
        }
		      \foreach \x in {3,...,3}{
            \draw [thick,rounded corners] (-\singledx*\x,0) -- (-\singledx*\x-1.5,0) -- (-\singledx*\x-1.5,2);
                \GTensor{(-\singledx*\x,0)}{1}{.5}{\small $A_j$}{-1}
            \myarrow{-\singledx*\x,.85}{1};
            \myarrow{-\singledx*\x,-.65}{1};
            \myarrow{-\singledx*\x-1.5,1.85}{1};
		      \filldraw[color=black, fill=whitetensorcolor, thick] (-\singledx*\x-1.5,.9) circle (0.6);
            \draw (-\singledx*\x-1.5,0.9) node {\small$L_{j}$};
        }
		      \foreach \x in {2,...,2}{
                \SingleDots{-\doubledx*\x-.5,0}{\doubledx*.8}
        }
        \end{tikzpicture}
                \end{array}
        \;.
\end{align}
is an isometry, i.e., $V^\dagger_{jk} V_{jk} = \1$.
\end{lemma}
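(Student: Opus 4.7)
The plan is to reduce the local isometry condition $V_{jk}^\dagger V_{jk} = \mathbb{1}$ to the global unitarity $U^\dagger U = \mathbb{1}_N$ by suitably tracing out the sites outside $[j,k]$. Concretely, I would pick arbitrary density operators $\sigma$ on sites $1,\dots,j-1$ and $\tau$ on sites $k+1,\dots,N$ (so that $L_j^2$ resp.\ $R_k^2$, built from them as in \cref{eq:def_LR}, lie in $\mathcal L_j$ resp.\ $\mathcal R_k$), and consider
\begin{equation*}
\operatorname{Tr}_{[1,j-1]\cup[k+1,N]}\!\Bigl[\bigl(\sigma \otimes \mathbb{1}_{[j,k]} \otimes \tau\bigr)\, U^\dagger U \Bigr] = \operatorname{Tr}(\sigma)\operatorname{Tr}(\tau)\, \mathbb{1}_{[j,k]} = \mathbb{1}_{[j,k]},
\end{equation*}
where the first equality is unitarity together with the cyclic property of the outside partial trace.

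The next step is to interpret the left-hand side graphically. The two-layer tensor network for $U^\dagger U$, with $\sigma$ sandwiched between the layers on the left-outside sites and $\tau$ on the right-outside sites, factorizes: the contraction of the $j-1$ leftmost MPO tensors against $\sigma$ produces exactly a $D_{j-1}\times D_{j-1}$ matrix $L_j^2\in \mathcal L_j$ sitting on the left bond of region $[j,k]$, and analogously on the right one obtains $R_k^2\in\mathcal R_k$. What remains is the bulk network of $A_j,\dots,A_k$ doubled with its conjugate and contracted through the $L_j^2, R_k^2$ boundary blocks.

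Because $\mathcal L_j$ and $\mathcal R_k$ only contain strictly positive operators, the Hermitian square roots $L_j,R_k\succ 0$ are well defined. Splitting $L_j^2 = L_j^\dagger L_j$ and $R_k^2 = R_k^\dagger R_k$ and distributing one factor to each of the two MPO layers, the remaining diagram is, by inspection, precisely $V_{jk}^\dagger V_{jk}$ with the $V_{jk}$ of the lemma. Combined with the previous step this yields $V_{jk}^\dagger V_{jk} = \mathbb{1}_{[j,k]}$, as claimed.

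The main obstacle is essentially bookkeeping: one has to check the orientations and hermitian conjugation conventions so that the Hermitian-square-root split on the boundary really reconstitutes $V_{jk}^\dagger V_{jk}$ rather than $V_{jk} V_{jk}^\dagger$ or some conjugate variant. Everything else is tautological, because the sets $\mathcal L_j,\mathcal R_k$ have been defined precisely as the positive boundary operators arising from the above partial-trace construction, so the lemma is really a restatement of global unitarity in a local, ``state-smeared'' form.
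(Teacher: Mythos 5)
Your proposal is correct and follows essentially the same route as the paper: sandwich $U^\dagger U=\1$ with arbitrary density operators $\sigma,\tau$ on the outside sites, take the partial trace, identify the resulting boundary contractions as $L_j^2\in\mathcal L_j$ and $R_k^2\in\mathcal R_k$, and split their positive square roots between the two MPO layers to obtain $V_{jk}^\dagger V_{jk}=\1$. The bookkeeping you flag (distributing one square-root factor per layer) is exactly what the paper's graphical argument does, so nothing is missing.
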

\begin{proof}
The proof was given in the End Matter, but we repeat it here for completeness. Let $\sigma$, $\tau$ be arbitrary density operators supported over sites $1,\dots,j-1$ and $k+1,\dots,N$, respectively. Then from unitarity $U^\dagger U = \1$ it is immediate that
\begin{align}
    \Tr_{1,\dots,j-1, k+1,\dots, N} \left[ (\sigma \otimes \1 \otimes \tau) U^\dagger U\right] = \1 \;.
\end{align}
In graphical representation,
\begin{align}
        \begin{array}{c}
        \begin{tikzpicture}[scale=0.5]
		      \foreach \x in {0,...,0}{
            \SideIdentityTensor{\x*\doubledx,0}{\small $R^2_k$}{\stradius}{3};
        }
		      \foreach \x in {1,...,1}{
                \DoubleATensor{(-\doubledx*\x,0)}{0};
                \draw (-\doubledx*\x,-0.8) node {\small $A_{k}$};
                \draw (-\doubledx*\x,0.8) node {\small $A^\dagger_{k}$};
            \myarrow{-\doubledx*\x,-0.65-0.8}{1}
            \myarrow{-\doubledx*\x,0.85+0.8}{1}
        }
		      \foreach \x in {2,...,2}{
                \DoubleDots{-\doubledx*\x,0}{\doubledx/2};
        }
		      \foreach \x in {3,...,3}{
                \DoubleATensor{(-\doubledx*\x,0)}{-1};
                \draw (-\doubledx*\x,-0.8) node {\small $A_{j}$};
                \draw (-\doubledx*\x,0.8) node {\small $A^\dagger_{j}$};
            \myarrow{-\doubledx*\x,-0.65-0.8}{1}
            \myarrow{-\doubledx*\x,0.85+0.8}{1}
        }
		      \foreach \x in {4,...,4}{
            \SideIdentityTensor{-\x*\doubledx,0}{\small $L^2_j$}{\stradius}{-3};
        }
        \end{tikzpicture}
                \end{array}
        = 
        \begin{array}{c}
        \begin{tikzpicture}[scale=0.5]
		      \foreach \x in {1,...,1}{
                \DoubleIdentityTensor{(-\identitydx*\x,0)}{}{0};
            \myarrow{-\identitydx*\x,0}{1}
        }
		      \foreach \x in {2,...,2}{
                \DoubleDots{-\identitydx*\x,0}{\identitydx/2};
        }
		      \foreach \x in {3,...,3}{
                \DoubleIdentityTensor{(-\identitydx*\x,0)}{}{0};
            \myarrow{-\identitydx*\x,0}{1}
        }
        \end{tikzpicture}
        \end{array}
        \;,\addtocounter{assumption}{-1} 
    \end{align}
where $L_j, R_k$ are as in \cref{eq:def_LR}. Since $\sigma,\tau \succeq 0$, also $L_j,R_k \succeq 0$. Their square roots are thus well-defined, implying $V^\dagger_{jk} V_{jk} = \1$.
\end{proof}

Every MPU can be brought into a (nonuniform) canonical form by vectorizing and imposing the MPS (left) canonical form~\cite{perez_garcia2007matrix}, which results in the following~\cite{styliaris2025matrix}.

\begin{definition}[MPU canonical form]
An MPU is in canonical form if all its tensors satisfy
\begin{align} \label{eq:gauge_nonuniform}
        \frac{1}{d}
            \begin{array}{c}
            \begin{tikzpicture}[scale=0.5]
    		      \foreach \x in {1,...,1}{
                  \ETensor{-\doubledx*\x,0}{0};
                  \draw (-\doubledx*\x,-.8) node {\small $A_k$};
                  \draw (-\doubledx*\x,.8) node {\small $A^\dagger_k$};
                 }
    		      \foreach \x in {2,...,2}{
                  \SideIdentityTensor{-\doubledx*\x,0}{}{}{-1};
                 }
            \end{tikzpicture}
            \end{array}
        =
            \begin{array}{c}
            \begin{tikzpicture}[scale=.5]
                \SideIdentityTensor{0,0}{}{}{-2};
            \end{tikzpicture}
            \end{array}
        \;.
\end{align}
\noindent For the boundary tensor $A_1$ ($A_N$), the left (right) auxiliary space is interpreted as trivial.
\end{definition}
Since the MPS canonical form is obtained by successive singular-value decompositions, the resulting bond dimension at each bipartition is the minimal possible, i.e., all singular values are strictly positive. Thus, taking the MPU tensors in canonical form and setting $\sigma,\tau$ maximally mixed, we get, after a unitary gauge transformation,
\begin{align} \label{eq:app:choice_schmidt}
    L_k = \1, \quad  R_k = \diag(s_{k,1},\dots,s_{k,D_k}) \quad \forall k \;,
\end{align}
as claimed in \cref{eq:choice_schmidt}. Since also $\1 \in \mathcal L_k$, this shows that $\mathcal L_k, \mathcal R_k$ are always non-empty.

\subsection{Unitary decomposition of an operator}

We will need this simple lemma; we include its proof for completeness.

\begin{lemma} \label{lem:lcu}
    For every matrix $M \in \mathcal M\left( \mathbb C^{d} \right)$ there exists a decomposition
    \begin{align}
        M = \sum_{i=1}^{H} c_i W_i
    \end{align}
    with $W_i^\dagger W_i = \1$, $c_i>0$, and $H \le d^2$ such that
    \begin{align}
        \sum_i c_i = \| M \|_1 \;.
    \end{align}
\end{lemma}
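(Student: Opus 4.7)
The natural starting point is the singular value decomposition $M = U \Sigma V^{\dagger}$ with $U,V$ unitary and $\Sigma = \diag(\sigma_1,\dots,\sigma_d)$, $\sigma_i\ge 0$. By definition of the trace norm, $s := \|M\|_1 = \sum_i \sigma_i$. If $s=0$ then $M=0$ and the empty decomposition suffices, so assume $s>0$.

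The key observation is that each normalized singular value $\sigma_i/s$ lies in $[0,1]$, because $\sigma_i \le \sum_j \sigma_j = s$. Hence there exist angles $\theta_i\in[0,\pi]$ with $\sigma_i/s = \tfrac{1}{2}\bigl(1+\cos\theta_i\bigr) = \tfrac{1}{2} + \tfrac{1}{4}\bigl(e^{i\theta_i}+e^{-i\theta_i}\bigr)$. Collecting this entrywise on the diagonal and multiplying by $s$ gives
\begin{equation}
\Sigma \;=\; \frac{s}{2}\,\1 \;+\; \frac{s}{4}\,D_{+} \;+\; \frac{s}{4}\,D_{-},
\qquad D_{\pm} := \diag\!\left(e^{\pm i\theta_1},\dots,e^{\pm i\theta_d}\right),
\end{equation}
where $D_{\pm}$ are (diagonal) unitaries.

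Conjugating back with $U$ and $V^{\dagger}$ and absorbing them into each term yields
\begin{equation}
M \;=\; \frac{s}{2}\,\underbrace{UV^{\dagger}}_{W_1} \;+\; \frac{s}{4}\,\underbrace{UD_{+}V^{\dagger}}_{W_2} \;+\; \frac{s}{4}\,\underbrace{UD_{-}V^{\dagger}}_{W_3},
\end{equation}
with $W_1,W_2,W_3$ unitary, positive coefficients $c_1=s/2$, $c_2=c_3=s/4$, and $c_1+c_2+c_3 = s = \|M\|_1$. This gives $H=3$ terms, which satisfies $H\le d^2$ for every $d\ge 2$; the case $d=1$ is handled trivially by writing the scalar $m$ as $|m|\,e^{i\arg m}$ ($H=1$).

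The only point requiring care is the exact bookkeeping of the constant: the naive decomposition $\Sigma = \sum_i \sigma_i\, \ket{i}\!\bra{i}$ uses rank-one projectors rather than unitaries, so one must lift the decomposition to the whole unitary group. The cosine identity above is precisely what ensures that after this lift the coefficients still sum to $\|M\|_1$ rather than to something larger. No further obstacle arises.
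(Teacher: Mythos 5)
Your proposal is correct, and it takes a genuinely different route from the paper. The paper also starts from the SVD, but then decomposes each rank-one term $\ket{u_i}\bra{v_i}$ separately as a convex combination of unitaries of the form $W\diag(1,\pm 1,\dots,\pm 1)V^\dagger$, invoking Carath\'eodory's theorem to cap the number of extremal points at $d$ per singular value, which yields $H \le d^2$ terms whose coefficients sum to $\sum_i s_i = \|M\|_1$. You instead treat the singular-value matrix globally: normalizing by $s = \|M\|_1$ puts every diagonal entry in $[0,1]$, and the identity $x = \tfrac12 + \tfrac14\bigl(e^{i\theta}+e^{-i\theta}\bigr)$ lets you write $\Sigma = \tfrac{s}{2}\1 + \tfrac{s}{4}D_+ + \tfrac{s}{4}D_-$ with diagonal phase unitaries, so after conjugation you need only $H=3$ unitaries with coefficient sum exactly $s$. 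Your bookkeeping is right (the coefficients are positive for $s>0$, the three factors are unitary, and the $d=1$ and $M=0$ edge cases are handled), so the result is strictly sharper: $H=3$ instead of $H\le d^2$, with no Carath\'eodory step. In the context where the lemma is used (the merging lemma's linear-combination-of-unitaries construction, where $d=D^2$), your version would shrink the auxiliary LCU register from dimension up to $D^4$ to a constant dimension $3$; this does not change the stated asymptotic depth bounds, since the coefficient sum --- which controls the amplitude-amplification count via $C \le \|M_k\|_1 = q_k$ --- is identical in both constructions, but it is a cleaner and more economical argument.
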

\begin{proof}
 From SVD, we first decompose
 \begin{align}
     M = W S V^\dagger = \sum_i \ket{u_i} \bra{v_i} s_i \;.
 \end{align}
 Each $\ket{u_i} \bra{v_i}$ is in the convex hull of unitary matrices. This follows simply by writing
 \begin{align}
     \ket{u_1} \bra{v_1} = W \diag(1,0,0,\dots) V^\dagger
 \end{align}
 and then expressing the diagonal matrix as a convex combination of $\diag (1,\pm 1,\pm 1,\dots )$. Although this seemingly requires at most $2^{d-1}$ extremal points, using Caratheodory's theorem, it can be reduced to at most $d$. Thus we can write $\ket{u_i} \bra{v_i} = \sum_{i=1}^{r_j} P_{ij} W_{ij}$ for some unitaries $W_{ij}$, $P_{ij} > 0$ with $\sum_j P_{ij} = 1$, and $H_i \le d$. This implies
 \begin{align}
     M = \sum_{i=1}^{\rank M}\sum_{j=1}^{H_i} s_i P_{ij} W_{ij}
 \end{align}
and therefore
\begin{align}
    \sum_{ij} s_i P_{ij} = \sum_i s_i = \| M \|_1 \;.
\end{align}
\end{proof}

\subsection{Amplitude amplification}

We will need a generalization of the oblivious amplitude amplification algorithm~\cite{berry2014exponential}. The reason is that, in Lemma 3.6 of Ref.~\cite{berry2014exponential}, $V$ therein is unitary, while in our setting it will be only an isometry. We thus show the following technical lemma, extending a result of Ref.~\cite{berry2014exponential}.

\begin{lemma}[Subspace amplitude amplification] \label{lem:amp_amp}
    Let $U$ be a unitary on $\mathcal H_S \otimes \mathcal H_A$, $M$ an operator acting on $\mathcal H_S$, and $\mathcal S$ a subspace of $\mathcal H_S$. Denote $P_{\mathcal S^\perp}$ the orthogonal projector onto the orthogonal complement of $\mathcal S$. Suppose that:
    \begin{enumerate}[(i)]
        \item $\bra{\psi'} M^\dagger M \ket{\psi} = \bra{\psi'} \psi \rangle$ for all $\ket{\psi},\ket{\psi'} \in \mathcal S$.
        \item For all $\ket{\psi} \in \mathcal S$,
        \begin{align}
            U (\ket{\psi}_S\ket{0}_A) = \sin\theta \ket{\Phi} + \cos\theta \ket{\Phi^\perp}
        \end{align}
        with $\sin\theta\ne 0$, $\cos\theta \ne 0$. Here $\ket{\Phi} \coloneqq (M\ket{\psi}_S) \ket{0}_A$ satisfying $\bra{\Phi^\perp} \Phi \rangle = 0$ and $_A\bra{0} \Phi^\perp \rangle = 0$.
    \end{enumerate}
    Denoting $\ket{\Psi} \coloneqq \ket{\psi}_S\ket{0}_A$, define its orthogonal vector via
    \begin{align}
        U \ket{\Psi^\perp} \coloneqq \cos\theta \ket{\Phi} - \sin \theta \ket{\Phi^\perp}
    \end{align}
    which depends on $\ket{\psi}$. Then:
    \begin{enumerate}[(a)]
        \item For all $\ket{\psi}, \ket{\psi'} \in \mathcal S$,
        \begin{align} \label{eq:orth_psi_perp}
        _S\bra{\psi'}_A\bra{0}  \Psi^\perp \rangle  = 0 \;.
        \end{align}
        \item For all $\ket{\psi} \in \mathcal S$, the unitary
    \begin{align}
        R_{\Phi} & \coloneqq \1_S \otimes 2 \ket{0}_A\bra{0} - \1
    \end{align}
    reflects along $\ket{\Phi}$ on the $\{\ket{\Phi}, \ket{\Phi^\perp}\}$ subspace, while
    \begin{align}
        R_{\Psi} & \coloneqq \left( \1_S \otimes 2\ket{0}_A \bra{0} - \1 \right) \left( \1 - P_{\mathcal S^\perp} \otimes 2 \ket{0}_A\bra{0} \right)
    \end{align}
    reflects along $\ket{\Psi}$ on the $\{\ket{\Psi}, \ket{\Psi^\perp}\}$ subspace.
    \end{enumerate}
\end{lemma}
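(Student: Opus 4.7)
The plan is to prove (a) first and then use it, together with a careful case analysis on how $\ket{\Psi^\perp}$ sits inside the tensor-product Hilbert space, to establish (b).

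For part (a), I would exploit unitarity of $U$ to write $\langle\psi'|_S\langle 0|_A|\Psi^\perp\rangle = \langle U(\ket{\psi'}_S\ket{0}_A)\,|\,U\ket{\Psi^\perp}\rangle$ and expand both sides using condition (ii) applied to $\ket{\psi}$ and to $\ket{\psi'}$ (with the same angle $\theta$) and the definition $U\ket{\Psi^\perp}=\cos\theta\ket{\Phi}-\sin\theta\ket{\Phi^\perp}$. The resulting four terms involve the overlaps $\langle\Phi'|\Phi\rangle,\langle\Phi'|\Phi^\perp\rangle,\langle\Phi'^\perp|\Phi\rangle,\langle\Phi'^\perp|\Phi^\perp\rangle$. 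The two cross-overlaps vanish because $\ket{\Phi'}$ and $\ket{\Phi}$ have $\ket{0}_A$ on the ancilla while $_A\langle 0|\Phi^\perp\rangle=0=\,_A\langle 0|\Phi'^\perp\rangle$ by hypothesis. Condition (i) gives $\langle\Phi'|\Phi\rangle = \langle\psi'|M^\dagger M|\psi\rangle=\langle\psi'|\psi\rangle$. For the remaining piece $\langle\Phi'^\perp|\Phi^\perp\rangle$, I would take the inner product of $U(\ket{\psi'}_S\ket{0}_A)$ with $U(\ket{\psi}_S\ket{0}_A)$ using unitarity of $U$ on the left-hand side, which yields $\langle\psi'|\psi\rangle=\sin^2\theta\langle\psi'|\psi\rangle+\cos^2\theta\langle\Phi'^\perp|\Phi^\perp\rangle$, hence $\langle\Phi'^\perp|\Phi^\perp\rangle=\langle\psi'|\psi\rangle$ (using $\cos\theta\neq 0$). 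Plugging back, the coefficients $\sin\theta\cos\theta$ and $-\sin\theta\cos\theta$ cancel, giving (a).

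For part (b), the statement about $R_\Phi$ is nearly immediate: $R_\Phi\ket{\Phi}=\ket{\Phi}$ since $\ket{\Phi}=(M\ket{\psi})_S\ket{0}_A$ sits entirely in the $\ket{0}_A$ sector (where $2\ket{0}\bra{0}_A-\1_A$ acts as $+\1$), while $R_\Phi\ket{\Phi^\perp}=-\ket{\Phi^\perp}$ because $_A\langle 0|\Phi^\perp\rangle=0$ places $\ket{\Phi^\perp}$ in the $-1$ eigenspace of the ancilla reflection.

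For $R_\Psi$, I would verify $R_\Psi\ket{\Psi}=\ket{\Psi}$ directly: since $\ket{\psi}\in\mathcal S$ kills $P_{\mathcal S^\perp}$, the right factor acts as identity on $\ket{\Psi}$, and the left factor fixes $\ket{\psi}_S\ket{0}_A$. For $R_\Psi\ket{\Psi^\perp}=-\ket{\Psi^\perp}$, the key move is to decompose $\ket{\Psi^\perp}$ into three components according to the orthogonal decomposition of the system Hilbert space into $\mathcal S,\mathcal S^\perp$ combined with the ancilla split into $\ket{0}_A$ and $\ket{0}_A^\perp$: namely $\ket{\alpha_1}\in\mathcal S\otimes\ket{0}_A$, $\ket{\alpha_2}\in\mathcal S^\perp\otimes\ket{0}_A$, and $\ket{\beta}$ with ancilla orthogonal to $\ket{0}_A$. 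Part (a) forces $\ket{\alpha_1}=0$. I then apply the two factors of $R_\Psi$ in turn: the right factor $\1-P_{\mathcal S^\perp}\otimes 2\ket{0}\bra{0}_A$ flips the sign of $\ket{\alpha_2}$ and leaves $\ket{\beta}$ intact; the left factor $\1_S\otimes 2\ket{0}\bra{0}_A-\1$ then leaves $\ket{\alpha_2}$ (now signed) as is up to sign, more precisely acts as $+\1$ on the $\ket{0}_A$ sector and $-\1$ on its complement, and a short arithmetic check shows both $\ket{\alpha_2}$ and $\ket{\beta}$ end up with a net factor of $-1$. The main delicate point is part (a), specifically obtaining $\langle\Phi'^\perp|\Phi^\perp\rangle=\langle\psi'|\psi\rangle$ via a single-use of the unitarity of $U$ on a polarized inner product — this is what genuinely generalizes the standard oblivious amplitude amplification argument to the isometric setting, and it is where the assumption $\cos\theta\neq 0$ becomes essential.
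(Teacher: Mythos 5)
Your proposal is correct and follows essentially the same route as the paper: part (a) via expanding $\bra{\psi'}\bra{0}U^\dagger U\ket{\Psi^\perp}$, using hypothesis (i), the ancilla orthogonality of $\ket{\Phi^\perp}$, and a second use of unitarity on $\bra{\Psi'}\Psi\rangle$ to get $\bra{\Phi'^\perp}\Phi^\perp\rangle=\bra{\psi'}\psi\rangle$ (with $\cos\theta\neq 0$). For part (b) you simply spell out, via the sector decomposition $\mathcal S\otimes\ket{0}_A$, $\mathcal S^\perp\otimes\ket{0}_A$, ancilla-orthogonal, the case check that the paper states tersely, and your sign bookkeeping is correct.
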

\begin{proof}
    \textbf{(a)} To simplify the notation, when not ambiguous, we drop the subscripts $S,A$. We have
    \begin{align*}
        \bra{\psi'}\bra{0}  \Psi^\perp \rangle &= \bra{\psi'}\bra{0}  U^\dagger U \ket{\Psi^\perp} \\
        & = \left( \sin \theta (\bra{\psi'} M^\dagger) \bra{0} + \cos\theta \bra{\Phi'^\perp} \right) \\
        &\qquad \qquad \times \left( \cos \theta (M \ket{\psi})\ket{0} - \sin\theta \ket{\Phi^\perp} \right) \\
        & = \sin\theta\cos\theta \left(\bra{\psi'} \psi \rangle - \bra{\Phi'^\perp} \Phi^\perp \rangle \right) \;,
    \end{align*}
    where $\ket{\Phi'^\perp}$ corresponds to $\ket{\psi'}$ and in the last step we used $(i)$ from the hypothesis and also $_A\langle0\ket{\Phi^\perp} = 0$, $_A\langle0\ket{\Phi'^\perp} = 0$. Finally, from the unitarity of $U$ we have
    \begin{align*}
        \bra{\psi'} \psi \rangle &= \bra{\Psi'} \Psi \rangle = \bra{\Psi'} U^\dagger U \ket{\Psi} \\
        & = \sin^2 \theta \bra{\psi'} \psi \rangle + \cos^2 \theta \bra{\Phi'^\perp} \Phi^\perp \rangle \;,
    \end{align*}
    where, in the last step, we again used $(i)$. For $\cos \theta \ne 0$, $\bra{\psi'} \psi \rangle = \bra{\Phi'^\perp} \Phi^\perp \rangle$ and the result follows.
    \\
    
    \textbf{(b)} Directly by its definition and using the assumption $_A\bra{0} \Phi^\perp \rangle = 0$,
    \begin{subequations} \label{subeqs:phi_reflection}
    \begin{align}
        R_{\Phi} \ket{\Phi} & = \ket{\Phi} \;, \\
        R_{\Phi} \ket{\Phi^\perp} & = -\ket{\Phi^\perp} \;. 
    \end{align}
    \end{subequations}
    Analogously, using $_B\bra{0} \Psi^\perp \rangle = 0$ and \cref{eq:orth_psi_perp},
    \begin{subequations}\label{subeqs:psi_reflection}
    \begin{align}
        R_{\Psi} \ket{\Psi} &= \ket{\Psi} \;, \\
        R_{\Psi} \ket{\Psi^\perp} & = - \ket{\Psi^\perp} \;.
    \end{align}
    \end{subequations}
\end{proof}

As a result, one can perform amplitude amplification by utilizing the reflection operators $R_{\Phi^\perp}$, $R_{\Psi}$:

\begin{corollary} \label{cor:amp_amp}
    In the setup of \cref{lem:amp_amp}, for any $\ell \in \mathbb Z$ the unitary
    \begin{align}
        G \coloneqq - U R_{\Psi} U^\dagger R_{\Phi}
    \end{align}
    transforms
    \begin{align}
        G^\ell U (\ket{\psi}_S\ket{0}_A) &= G^\ell\left( \sin\theta \ket{\Phi} + \cos\theta \ket{\Phi^\perp} \right) \nonumber \\
         &= \sin[(2\ell +1)\theta] \ket{\Phi} + \cos[(2\ell +1)\theta] \ket{\Phi^\perp} \;.
    \end{align}
\end{corollary}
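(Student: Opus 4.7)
The plan is to reduce the claim to the standard two-dimensional rotation picture of Grover-type amplitude amplification, leveraging the reflection structure already established in \cref{lem:amp_amp}(b).

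First, I would identify the two-dimensional invariant subspace $V \coloneqq \mathrm{span}\{\ket{\Phi},\ket{\Phi^\perp}\}$. By hypothesis $U\ket{\Psi} = \sin\theta\ket{\Phi} + \cos\theta\ket{\Phi^\perp}$, and by definition $U\ket{\Psi^\perp} = \cos\theta\ket{\Phi} - \sin\theta\ket{\Phi^\perp}$, so $\{U\ket{\Psi},U\ket{\Psi^\perp}\}$ is an orthonormal basis of the same subspace $V$ (using $\sin\theta\cos\theta \ne 0$).

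Next, from \cref{lem:amp_amp}(b), $R_\Phi$ preserves $V$ and acts there as the reflection fixing $\ket{\Phi}$ and negating $\ket{\Phi^\perp}$. Likewise $R_\Psi$ acts on $\mathrm{span}\{\ket{\Psi},\ket{\Psi^\perp}\}$ as the reflection fixing $\ket{\Psi}$ and negating $\ket{\Psi^\perp}$; conjugating by $U$, the operator $UR_\Psi U^\dagger$ preserves $V$ and acts there as the reflection fixing $U\ket{\Psi}$ and negating $U\ket{\Psi^\perp}$. Consequently $G = -UR_\Psi U^\dagger R_\Phi$ preserves $V$. A direct computation of $G\ket{\Phi}$ and $G\ket{\Phi^\perp}$, using these four reflection rules together with the inverse expansions $\ket{\Phi} = \sin\theta\, U\ket{\Psi} + \cos\theta\, U\ket{\Psi^\perp}$ and $\ket{\Phi^\perp} = \cos\theta\, U\ket{\Psi} - \sin\theta\, U\ket{\Psi^\perp}$, then shows that in the basis $\{\ket{\Phi},\ket{\Phi^\perp}\}$ the restriction of $G$ to $V$ is the planar rotation matrix of angle $2\theta$. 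This is the general fact that a product of two reflections in 2D is a rotation by twice the angle between their axes; the overall minus sign in the definition of $G$ contributes the extra $\pi$ needed to cancel the $\pi - 2\theta$ produced by the reflection product and leave a clean rotation by $2\theta$.

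The stated identity then follows by induction on $\ell$. The base case $\ell = 0$ is precisely assumption $(ii)$ of \cref{lem:amp_amp}, namely $U(\ket{\psi}_S\ket{0}_A) = \sin\theta\ket{\Phi} + \cos\theta\ket{\Phi^\perp}$. For the inductive step, applying the rotation by $2\theta$ to $\sin[(2\ell+1)\theta]\ket{\Phi} + \cos[(2\ell+1)\theta]\ket{\Phi^\perp}$ yields $\sin[(2\ell+3)\theta]\ket{\Phi} + \cos[(2\ell+3)\theta]\ket{\Phi^\perp}$ by the angle-addition identities. The main obstacle has already been dispatched in \cref{lem:amp_amp}: ensuring that $R_\Psi$, despite the subtle correction $P_{\mathcal{S}^\perp} \otimes 2\ket{0}_A\bra{0}$ required because $V$ is only an isometry image rather than a unitary image, still acts as a genuine reflection on $\mathrm{span}\{\ket{\Psi},\ket{\Psi^\perp}\}$. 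Once that is granted, the corollary reduces to the standard Grover arithmetic.
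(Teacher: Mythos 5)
Your proposal is correct and follows essentially the same route as the paper: it uses the reflection properties from Lemma~\ref{lem:amp_amp}(b) together with the expansions of $U\ket{\Psi}$, $U\ket{\Psi^\perp}$ (and their inverses) to show $G$ acts as a rotation by $2\theta$ on the invariant plane $\mathrm{span}\{\ket{\Phi},\ket{\Phi^\perp}\}$, then iterates — the paper does the single-step computation $GU\ket{\Psi}=\sin(3\theta)\ket{\Phi}+\cos(3\theta)\ket{\Phi^\perp}$ explicitly and appeals to standard amplitude amplification for the rest, which is the same argument you spell out by induction.
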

\noindent For the special case $\mathcal S = \mathcal H_S$, clearly $P_{\mathcal S^\perp} = 0$ and thus $R_{\Psi} = R_{\Phi}$, recovering oblivious amplitude amplification~\cite{berry2014exponential}.
\begin{proof}
    The result follows by standard amplitude amplification using the results of part (b) in \cref{lem:amp_amp}. We repeat here the analysis for completeness.

    From the definitions of \cref{lem:amp_amp} we have
\begin{subequations}
\begin{align}
    U \ket{\Psi} &= \sin \theta \ket{\Phi} + \cos \theta \ket{\Phi^\perp} \;,\\
    U \ket{\Psi^\perp} &= \cos \theta \ket{\Phi} - \sin \theta \ket{\Phi^\perp} \;,\\
    U^\dagger \ket{\Phi} &= \sin \theta \ket{\Psi} + \cos \theta \ket{\Psi^\perp} \;, \\
    U^\dagger\ket{\Phi^\perp} &= \cos \theta \ket{\Psi} - \sin \theta \ket{\Psi^\perp} \;.
\end{align}
\end{subequations}
Using Eqs.~\eqref{subeqs:phi_reflection} and \eqref{subeqs:psi_reflection} together with the above, a direct calculation gives
\begin{align}
    G U \ket{\Psi} = \sin (3 \theta) \ket{\Phi} + \cos (3 \theta) \ket{\Phi^\perp} \;.
\end{align}
\end{proof}

\subsection{Proof of Theorem~\ref{prop:main}}

Let us restate the Theorem. Recall that
\begin{align}
    q &\coloneqq \min_{k} q_k  \le \sqrt{D}/s_{\min} \;, \quad \text{where} \\
    q_k &\coloneqq \inf_{\substack{R^2_{k} \in \mathcal R_k \\ L^2_{k+1} \in \mathcal L_{k+1}}}  \sqrt{\tr \left[ R^{-2}_{k} (L^{-2}_{k+1})^{T} \right]}
\end{align}
%

\newcounter{savetheorem}
\setcounter{savetheorem}{\value{theorem}}

\setcounter{theorem}{1}

\begin{theorem} \label{prop:app:main}
    An MPU over $N$ sites can be implemented with a quantum circuit of depth $O(N^{1+\log_2 q} \poly D)$ and $O(N \log D$) auxiliary qudits.
\end{theorem}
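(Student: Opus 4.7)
The plan is to build $U_N$ bottom-up along a balanced binary tree of depth $\lceil \log_2 N \rceil$, following the scheme in the main text and generalizing the uniform-bulk construction to site-dependent tensors. At the leaves I prepare in parallel the single-site isometries $V_{k,k}^{[L_k,R_k]}$ guaranteed by \cref{lem:isometries}, using the canonical choice $L_k=\mathbbm{1}$, $R_k=\mathrm{diag}(s_{k,1},\ldots,s_{k,D_k})$ of \cref{eq:app:choice_schmidt}. Each such isometry acts on one physical qudit plus an $O(\log D)$-qubit auxiliary register that stores its dangling bond, and any exact decomposition gives an $O(\poly D)$-depth unitary. The initial product $V_{1,1}^{[l,R_1]} \otimes V_{2,2}^{[L_2,R_2]} \otimes \cdots \otimes V_{N,N}^{[L_N,r]}$ is then realized in parallel depth $O(\poly D)$, consuming $O(N\log D)$ auxiliary qudits.

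\textbf{Merging step.} The recursive step takes two adjacent isometries $V^{[x,R_k]}_n$ and $V^{[L_{k+1},y]}_m$ and combines them into $V^{[x,y]}_{n+m}$ by contracting their shared bond registers with the rank-one map $M = \ket{00}\bra{\mathbbm{1}}(R_k^{-1}\otimes L_{k+1}^{-1})$ acting on those $2\log_2 D$ qubits. I decompose $M$ as a linear combination of unitaries via \cref{lem:lcu}, so $M=\sum_i c_i W_i$ with $\sum_i c_i = \|M\|_1$; since $M$ is rank one, its trace norm equals its Frobenius norm, giving $\|M\|_1 = \sqrt{\tr[R_k^{-2}(L_{k+1}^{-2})^T]} \le q_k$. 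The standard LCU circuit $U \coloneqq B^\dagger W^{\mathrm{ctrl}} B$, with $B\ket{0}_A = C^{-1/2}\sum_i \sqrt{c_i}\,\ket{i}_A$ on an $O(\log D)$-qubit ancilla, then satisfies hypothesis (ii) of \cref{lem:amp_amp} with $\sin\theta = 1/C$. Hypothesis (i) is immediate because $M(V^{[x,R_k]}_n \otimes V^{[L_{k+1},y]}_m) = \ket{00}\otimes V^{[x,y]}_{n+m}$ is itself an isometry on its image $\mathcal S$. The reflection $R_\Psi$ of \cref{cor:amp_amp} needs $P_{\mathcal S^\perp}$, which I implement by inverting the given unitary for $V^{[x,R_k]}_n \otimes V^{[L_{k+1},y]}_m$, reflecting about the $\ket{0}$ subspace of its auxiliary input, and re-applying it; then $\ell \le C \le q_k$ applications of $G = -U R_\Psi U^\dagger R_\Phi$ realize the merging deterministically.

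\textbf{Recursion and counting.} Each merging uses two calls to its two child isometries together with $O(\poly D)$ overhead for $B$, $R_\Phi$, $R_\Psi$, and at most $q_k \le q$ amplification rounds, yielding
\begin{equation}
T(V_{n+m}) \le 2q\, T(V_n \otimes V_m) + O(\poly D).
\end{equation}
Unrolling across the $\lceil\log_2 N\rceil$ tree levels gives $T(U_N) = O\bigl(N^{1+\log_2 q}\poly D\bigr)$, as claimed. The $O(N\log D)$ ancilla bound is controlled by the base layer: LCU and reflection registers at each internal node are $O(\log D)$ each and, being uncomputed to $\ket{0}$ after the amplified merging succeeds deterministically, can be reused along the tree. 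Finally the bound $q \le \sqrt{D}/s_{\min}$ is obtained by plugging the canonical choice into the definition of $q_k$, which gives $q_k^2 = \tr[R_k^{-2}] = \sum_i s_{k,i}^{-2} \le D/s_{\min}^2$.

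\textbf{Main obstacle.} The technical heart of the proof is making each merging deterministic. Naive post-selection on the LCU ancilla succeeds with probability $1/C^2$ per node, which compounds to an exponentially small success probability across the tree; and standard oblivious amplitude amplification is not directly applicable because the inner operator $V_n\otimes V_m$ is only an isometry on a proper subspace $\mathcal S$ of the combined register, not a unitary on it. The crux is therefore \cref{lem:amp_amp}: showing that the modified reflection $R_\Psi$, built from the orthogonal projector $P_{\mathcal S^\perp}$ accessible through the isometry itself, reflects exactly about $\ket{\Psi}$ inside the two-dimensional invariant subspace $\mathrm{span}\{\ket{\Phi},\ket{\Phi^\perp}\}$, uniformly for every unknown input state $\ket{\psi}\in\mathcal S$. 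Once this is established, the amplitude $\sin\theta=1/C$ is input-independent, and the tree-level recursion closes with the claimed polynomial depth.
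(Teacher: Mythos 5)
Your proposal follows essentially the same route as the paper: single-site isometries at the leaves, the rank-one merging operator $M$ expanded as a linear combination of unitaries with $\sum_i c_i=\|M\|_1=\sqrt{\tr[R_k^{-2}(L_{k+1}^{-2})^T]}$, the subspace generalization of oblivious amplitude amplification with $R_\Psi$ built from the dilation of the child isometries, and a $\lceil\log_2 N\rceil$-level recursion giving $O(N^{1+\log_2 q}\poly D)$ depth and $O(N\log D)$ ancillas. The structure and all key lemmas match \cref{lem:lcu}, \cref{lem:amp_amp}, \cref{cor:amp_amp} and \cref{lem:merging}.

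One step, however, would fail as literally stated: you claim that $\ell\le C$ applications of $G$ ``realize the merging deterministically,'' but exact amplification requires an integer $\ell$ with $(2\ell+1)\theta=\pi/2$ [cf.\ \cref{eq:repetitions_grover}], which generically does not exist for $\theta=\arcsin(1/C)$. The residual amplitude on $\ket{\Phi^\perp}$ can be of order $1/C$ per merge, and since the input state is unknown there is no post-selection or restart available, so these errors compound over the $N-1$ merges; the paper repairs this by tensoring $M_k$ with $\cos\phi\,\1+i\sin\phi\,Z$ on one extra ancilla qubit to tune $C$ so that an integer $\ell$ exists, and your proof needs this (or an equivalent exactness fix). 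Two smaller accounting points: (i) your recursion omits the $O\!\left[(n+m)\polylog D\right]$ cost of the reflection about the $\ket{0}$-inputs of the dilation inside $R_\Psi$ (a multi-target controlled phase over all $2(n+m)$ bond registers); this term is dominated in the geometric sum and does not change the final scaling, but it should appear. (ii) By committing to the canonical choice $L_k=\1$, $R_k=\diag(s_{k,1},\dots,s_{k,D_k})$ of \cref{eq:app:choice_schmidt} you only obtain the exponent $1+\log_2(\sqrt{D}/s_{\min})$; to get the stated $1+\log_2 q$ with $q_k$ defined as an infimum, you should instead fix (near-)optimal $L_k,R_k$ in both the leaf isometries and the merging operators, which your argument permits without change.
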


\setcounter{theorem}{\value{savetheorem}}

The key ingredient for our circuit decomposition is an algorithm to merge neighboring isometries -- that is, given the ability to realize $V_{j,k-1} \otimes V_{kl}$, to implement $V_{jl}$. This is established by combining the local isometries of \cref{lem:isometries} with our generalization of oblivious amplitude amplification, \cref{lem:amp_amp}:

\begin{lemma}[Merging lemma] \label{lem:merging}
    If a unitary realizing the isometry $V_{jk}\otimes V_{k+1,l}$ [cf. \cref{eq:def_isometries}] can be implemented in depth $T(V_{jk}\otimes V_{k+1,l})$ then $V_{jl}$ can be implemented in depth
\begin{multline} 
        T(V_{jl}) \le  q_k \big[ 2 T(V_{jk}\otimes V_{k+1,l})  \\ + O[(l-j)\polylog D]  + O(\poly D) \big]
\end{multline}
using $O(\log D)$ additional auxiliary qudits. 
\end{lemma}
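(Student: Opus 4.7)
The plan is to combine the tensor-network merging identity with the LCU-plus-amplification toolkit developed in the preceding lemmas. First I would invoke the algebraic identity $M_k(V_{jk}\otimes V_{k+1,l}) = \ket{00}\otimes V_{jl}$, where $M_k \coloneqq \ket{00}\bra{\1}(R_k^{-1}\otimes L_{k+1}^{-1})$ acts on the two inner bond legs (cf.\ \cref{eq:isometries_merging}); this reduces the task to applying the nonunitary $M_k$ to the output of the already-implemented isometry $V \coloneqq V_{jk}\otimes V_{k+1,l}$, up to a disposable $\ket{00}$ ancilla. I would then block-encode $M_k$ via \cref{lem:lcu}, writing $M_k = \sum_i c_i W_i$ with $H \le D^4$ two-qudit unitaries $W_i$ and $\sum_i c_i = \|M_k\|_1$. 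Since $M_k$ has rank one, $\|M_k\|_1$ equals its sole nonzero singular value, which for the optimal $(R_k,L_{k+1})$ evaluates to $C \coloneqq \|M_k\|_1 = q_k$. Preparing the coefficient state $B\ket{0}_A = C^{-1}\sum_i\sqrt{c_i}\ket{i}_A$ on $O(\log D)$ ancilla qudits and implementing the controlled unitary $W^{\mathrm{ctrl}}$ both cost $O(\poly D)$ gates; setting $U_{\mathrm{LCU}} \coloneqq B^\dagger W^{\mathrm{ctrl}} B$, every $\ket{\psi} \in \mathcal S \coloneqq \image V$ satisfies $U_{\mathrm{LCU}}(\ket{\psi}_S\ket{0}_A) = C^{-1}(M_k\ket{\psi})\ket{0}_A + \sqrt{1-C^{-2}}\ket{\Phi^\perp}$ with $\bra{0}_A\Phi^\perp\rangle = 0$, which is precisely hypothesis (ii) of \cref{lem:amp_amp} with $\sin\theta = 1/C$.

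Next I would check hypothesis (i) of \cref{lem:amp_amp}: for $\ket{\psi} = V\ket{\phi}$ and $\ket{\psi'} = V\ket{\phi'}$ in $\mathcal S$, the merging identity gives $V^\dagger M_k^\dagger M_k V = V_{jl}^\dagger V_{jl} = \1$ by \cref{lem:isometries}, so inner products on $\mathcal S$ are preserved. Invoking \cref{cor:amp_amp} with $\ell = O(C) = O(q_k)$ amplification rounds then deterministically produces $G^\ell U_{\mathrm{LCU}}(V\ket{\phi}\otimes\ket{0}_A) = (V_{jl}\ket{\phi})\otimes\ket{00}\otimes\ket{0}_A$, from which $V_{jl}\ket{\phi}$ is recovered by discarding the constant ancilla registers. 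Each round $G = -U_{\mathrm{LCU}} R_\Psi U_{\mathrm{LCU}}^\dagger R_\Phi$ uses two calls to $U_{\mathrm{LCU}}$ at $O(\poly D)$ each, the trivial ancilla reflection $R_\Phi$, and the reflection $R_\Psi$, whose implementation of $P_{\mathcal S^\perp}$ uncomputes and recomputes $V$ (cost $2T(V)$) and applies a multi-controlled reflection on the $O((l-j)\polylog D)$ auxiliary input qubits used by $V$. Summing the initial $V$ call with $\ell$ rounds yields $T(V_{jl}) \le q_k\bigl[2T(V) + O((l-j)\polylog D) + O(\poly D)\bigr]$ as claimed, with $O(\log D)$ additional ancilla supplied by the LCU register $B$ and the $\ket{00}$ output.

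The principal obstacle I anticipate is ensuring that the rotation angle $\theta = \arcsin(1/C)$ is a state-independent constant on $\mathcal S$, since subspace amplitude amplification fails if different inputs rotate at different rates. This uniformity rests crucially on the exact algebraic identity $M_k V = V_{jl}\otimes\ket{00}$ across $\mathcal S$, which depends on the canonical-form choice of $(R_k,L_{k+1})$ that makes $V_{jl}$ exactly an isometry, and on the rank-one structure of $M_k$ (so that its LCU coefficient sum coincides with its unique nonzero singular value). A secondary technicality is exact deterministic termination, which requires $(2\ell+1)\theta = \pi/2$ for integer $\ell$; this can be arranged by padding the LCU with a dummy ``reject'' unitary to tune $C$ appropriately, with the overhead absorbed into the $O(\poly D)$ term.
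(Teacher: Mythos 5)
Your proposal matches the paper's proof essentially step for step: the merging identity $M_k(V_{jk}\otimes V_{k+1,l})=\ket{00}\otimes V_{jl}$, the LCU decomposition of $M_k$ via \cref{lem:lcu} with $C=\|M_k\|_1=q_k$, verification of the hypotheses of \cref{lem:amp_amp} (including hypothesis (i) from \cref{lem:isometries}), subspace amplitude amplification via \cref{cor:amp_amp} with $\ell\le C$, the same cost accounting with $R_\Psi$ implemented by uncomputing/recomputing $V$, and a tunable padding of $C$ to make $(2\ell+1)\theta=\pi/2$ exact (the paper does this with an extra single-qubit rotation tensored to $M_k$, which is equivalent in effect to your LCU-padding fix). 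The argument is correct and requires no changes beyond these cosmetic differences.
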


\begin{proof}
    By the MPU canonical form of \cref{lem:isometries}, to merge $V_{jk}$ and $V_{k+1,l}$ we need to apply
\begin{align}
    M_k \coloneqq \ket{00} \bra{ \1 } (R^{-1}_{k} \otimes L^{-1}_{k+1}) = 
        \begin{tikzpicture}[scale=0.55,baseline={([yshift=0.0ex] current bounding box.center)}]
		      \foreach \x in {0,...,0}{
            \myarrow{0,-.75}{1};
            \myarrow{1.5,-.75}{1};
            \draw [thick,rounded corners] (0,-1) -- (0,.9) -- (1.5,.9) -- (1.5,-1);
		      \filldraw[color=black, fill=whitetensorcolor, thick] (0,0) circle (0.6);
            \draw (0,0) node {\small$R^{-1}_{k}$};
		      \filldraw[color=black, fill=whitetensorcolor, thick] (1.5,0) circle (0.6);
            \draw (1.5,0) node {\small$L^{-1}_{k\!+\!1}$};
            \draw [thick] (0,2.4) -- (0,1.9);
            \draw (0,1.5) node {\small$\ket{0}$};
            \draw [thick] (1.5,2.4) -- (1.5,1.9);
            \draw (1.5,1.5) node {\small$\ket{0}$};
            \myarrow{0,2.2}{1};
            \myarrow{1.5,2.2}{1};
        }
        \end{tikzpicture}
\end{align}
acting on the two auxiliary $D$-dimensional systems to be joined together:
    \begin{multline}
        M_k (V_{jk}\otimes V_{k+1,l}) = \ket{00} \otimes V_{jl}  = \\
        \\
                \begin{array}{c}
        \begin{tikzpicture}[scale=0.5]
		      \foreach \x in {1,...,1}{
                \GTensor{(-\singledx*\x-.5,0)}{1}{.5}{\small $A_k$}{1}
            \myarrow{-\singledx*\x-.5,-.65}{1};
            \myarrow{-\singledx*\x-.5,.85}{1};
            \draw [thick,rounded corners] (-\singledx*\x,0) -- (-\singledx*\x+1,0) -- (-\singledx*\x+1,2);
		      \filldraw[color=black, fill=whitetensorcolor, thick] (-\singledx*\x+1,.9) circle (0.6);
            \draw (-\singledx*\x+1,0.9) node {\small$R_{k}$};
        }
		      \foreach \x in {3,...,3}{
            \draw [thick,rounded corners] (-\singledx*\x,0) -- (-\singledx*\x-1.5,0) -- (-\singledx*\x-1.5,2);
                \GTensor{(-\singledx*\x,0)}{1}{.5}{\small $A_j$}{-1}
            \myarrow{-\singledx*\x,.85}{1};
            \myarrow{-\singledx*\x,-.65}{1};
            \myarrow{-\singledx*\x-1.5,1.85}{1};
		      \filldraw[color=black, fill=whitetensorcolor, thick] (-\singledx*\x-1.5,.9) circle (0.6);
            \draw (-\singledx*\x-1.5,0.9) node {\small$L_{j}$};
        }
		      \foreach \x in {2,...,2}{
                \SingleDots{-\doubledx*\x-.5,0}{\doubledx*.8}
        }
    \begin{scope}[shift={(7.6,0)}]
		      \foreach \x in {1,...,1}{
                \GTensor{(-\singledx*\x-.5,0)}{1}{.5}{\small $A_l$}{1}
            \myarrow{-\singledx*\x-.5,-.65}{1};
            \myarrow{-\singledx*\x-.5,.85}{1};
            \myarrow{-\singledx*\x+1,1.85}{1};
            \draw [thick,rounded corners] (-\singledx*\x,0) -- (-\singledx*\x+1,0) -- (-\singledx*\x+1,2);
		      \filldraw[color=black, fill=whitetensorcolor, thick] (-\singledx*\x+1,.9) circle (0.6);
            \draw (-\singledx*\x+1,0.9) node {\small$R_{k}$};
        }
		      \foreach \x in {3,...,3}{
            \draw [thick,rounded corners] (-\singledx*\x,0) -- (-\singledx*\x-1.5,0) -- (-\singledx*\x-1.5,2);
                \GTensor{(-\singledx*\x,0)}{1}{.5}{\scriptsize $A_{\!k\!+\!1}$}{-1}
            \myarrow{-\singledx*\x,.85}{1};
            \myarrow{-\singledx*\x,-.65}{1};
		      \filldraw[color=black, fill=whitetensorcolor, thick] (-\singledx*\x-1.5,.9) circle (0.6);
            \draw (-\singledx*\x-1.5,0.9) node {\small$L_{\!k\!+\!1}$};
        }
		      \foreach \x in {2,...,2}{
                \SingleDots{-\doubledx*\x-.5,0}{\doubledx*.8}
        }
    \end{scope}
        \begin{scope}[shift={(-2.2,1)}]
            \draw [thick] (0,2.4) -- (0,1.9);
            \draw (0,1.5) node {\small$\ket{0}$};
            \myarrow{0,2.2}{1};
        \end{scope}
        \begin{scope}[shift={(0.6,1)}]        
            \draw [thick] (1.5,2.4) -- (1.5,1.9);
            \draw (1.5,1.5) node {\small$\ket{0}$};
            \myarrow{1.5,2.2}{1};
        \end{scope}
    \begin{scope}[shift={(-.8,2.5)}]
		      \foreach \x in {0,...,0}{
            \draw [thick,rounded corners] (0,-1) -- (0,.9) -- (1.5,.9) -- (1.5,-1);
		      \filldraw[color=black, fill=whitetensorcolor, thick] (0,0) circle (0.6);
            \draw (0,0) node {\small$R^{-1}_{k}$};
		      \filldraw[color=black, fill=whitetensorcolor, thick] (1.5,0) circle (0.6);
            \draw (1.5,0) node {\small$L^{-1}_{k\!+\!1}$};
        }
    \end{scope}
        \end{tikzpicture}
                \end{array}
                 \;,
\end{multline}
We denote $S_1$ the space over which $M_k$ acts ($\dim S_1 = D^2$) and $S_2$ the remaining output space ($\dim S_2 = d^{l-j+1} D^2$) with $S \coloneqq S_1 S_2$ the joint ``system'' space. We first use \cref{lem:lcu} to expand
\begin{align}
    M_k = \sum_{i = 1}^H c_i W_i \;,
\end{align}
where $c_i > 0$, the sum runs over $H \le D^4$ unitaries and
\begin{align}
    C \coloneqq \sum_i c_i \le \| M_k \|_1 = q_k \;.
\end{align}

We will now use a linear combination of unitaries to turn the action of $M_k$ into a physical process. For this, we need an $R$-dimensional ``auxiliary'' space $A$ and the unitary $B$ acting on that space as
\begin{align}
 B \ket{0}_A \coloneqq \frac{1}{C} \sum_i \sqrt{c_i} \ket{i} \;.
\end{align}
Define the control unitary
\begin{align}
    W^{\rm{ctrl}} \coloneqq \sum_{i} \left(W_i \right)_{S_1} \otimes \ket{i}_A\bra{i}
\end{align}
and the unitary
\begin{align}
    U \coloneqq B^\dagger W^{\rm{ctrl}} B
\end{align}
acting over $S_1 A$. Consider any $\ket{\psi}_{S} \in \mathcal S$ in the image $ \mathcal S \coloneqq \image \left( V_{j,k-1} \otimes V_{kl} \right) \subseteq \mathcal H_{S}$ and denote
\begin{align}
 \ket{\Psi} = \ket{\psi}_{S} \ket{0}_A \;.   
\end{align}
A direct calculation gives
\begin{align}
    U \ket{\Psi} = \frac{1}{C} \ket{\Phi} + \sqrt{1 - \frac{1}{C^2}} \ket{\Phi^\perp} \;,
\end{align}
where
\begin{align}
    \ket{\Phi} \coloneqq (M_k\ket{\psi}_{S}) \ket{0}_A
\end{align}
such that
\begin{align}
    _A\langle 0 \ket{\Phi^\perp} = 0 \;.
\end{align}
Moreover $M_k$ satisfies $\bra{\psi'} M_k^\dagger M_k \ket{\psi} = \bra{\psi'} \psi \rangle$ for all $\ket{\psi},\ket{\psi'} \in \mathcal S$ due of \cref{lem:isometries}. The above shows that we can use the subspace amplitude amplification Lemma~\ref{lem:amp_amp} with $\sin \theta = 1/C$.

Let us now bound the depth $T(G)$ of implementing a single amplitude amplification operator from \cref{cor:amp_amp}. We have
\begin{align}
    T(U) & = O(\poly D)
\end{align}
because $U$ acts on the $S_1A$ register, while
\begin{align}
        T(R_{\Phi}) &= O(\polylog D)
\end{align}
since $R_{\Phi}$ amounts to a multi-control phase gate over the auxiliary $A$ register~\cite{nielsen2002quantum}. The same unitary is a component of $R_{\Psi}$ but here we additionally need to implement reflection $\left( \1 - P_{\mathcal S^\perp} \otimes 2 \ket{0}_A\bra{0} \right)$. For the latter we utilize the unitary $\widetilde{V}$ which implements the isometry $V_{jk}\otimes V_{k+1,l}$, i.e.,
\begin{align} \label{eq:dillation_isometry}
    \widetilde V \ket{0}^{\otimes 2(l-j+1)} \coloneqq V_{jk}\otimes V_{k+1,l} \;,
\end{align}
where each $\ket{0}$ above denotes a $D$-dimensional qudit. The desired reflection then becomes
\begin{align}
    \left( \1 - P_{\mathcal S^\perp} \otimes 2 \ket{0}_A\bra{0} \right) = \widetilde V F \widetilde V^\dagger 
\end{align}
where $F$ is a multi-control multi-target phase gate, with control over the auxiliary $A$ register and target a $D^{2(l-j+1)}$-dimensional register (corresponding to $\ket{0}^{\otimes 2(l-j+1)}$ in \cref{eq:dillation_isometry}). We thus have
\begin{align}
    T(C) = O[(l-j) \polylog D]
\end{align}
and therefore
\begin{align}
    T(R_{\Psi}) = 2 T(V_{jk}\otimes V_{k+1,l}) + O[(l-j)\polylog D] \;.
\end{align}
Putting everything together, the depth of a single $G$ rotation is
\begin{multline}
       T(G) \le 2 T(V_{jk}\otimes V_{k+1,l})  \\ + O[(l-j)\polylog D]  + O(\poly D) \;.
\end{multline}

Next, we repeatedly apply $G$ in order to decouple the $A$ register, as in ordinary amplitude amplification. The required number of repetitions is $\ell \le C$. This follows by
\begin{align} \label{eq:repetitions_grover}
    (2 \ell +1) \theta = \pi/2 \;,
\end{align}
in case $C$ is such that there is an integer $\ell$ satisfying the above equation. Otherwise, we can use an additional auxiliary qubit and modify
\begin{align}
 M_k' = M_k \otimes   \left( \cos \phi \1 + i \sin \phi Z \right) \;.
\end{align}
This additional rotation over the qubit redefines $C' = C\left(\left| \cos \phi \right| + \left| \sin \phi \right| \right)$. Tuning $\phi$ can then be used to satisfy \cref{eq:repetitions_grover} for an integer $\ell$.
\end{proof}

Our algorithm has the following steps.

\begin{enumerate}[(i)]
    \item Start by implementing the isometries $\bigotimes_k V_{kk}$.
    \item Iterate the Merging Lemma~\ref{lem:merging} in a tree geometry, i.e., first implement $V_{12} \otimes V_{34} \otimes \dots$, then $V_{14} \otimes V_{58} \otimes \dots$, and so on, until the MPU $V_{1N}$ is complete.
\end{enumerate}

We now bound the total depth of the algorithm by forming a recursion. Step (i) can be implemented in depth
\begin{align}
    T_1 = O(\poly D) \;.
\end{align}
Step (ii) need to be iterated $L = \lceil \log N \rceil$ times ($\log$ is base 2); denote $T_j$ ($j=1,\dots,L$) the corresponding depth. From the Merging Lemma~\ref{lem:merging},
\begin{align}
    T_j \le  q[ 2T_{j-1} + O(2^{j} \polylog D) + O(\poly D )]
\end{align}
Summing the series, we get
\begin{align}
    T_L =  O[(2q)^L \poly D ] \;.
\end{align}
This is because we have a recursion of the form $T_{j} = T_{j-1} + Q_j$ and thus (assuming $2q>1$)
\begin{align}
    T_L &= (2q)^{L-1} T_1 + \sum_{j=2}^L Q_j \sum_{n=0}^{L-j} (2q)^{n}  \nonumber \\
    & \le (2q)^{L-1} T_1 + \sum_{j=2}^L (2q)^{L-j+1} Q_j \nonumber \\
    & \le O[ (2q)^L \poly D]
\end{align}
Thus the total depth is
\begin{align}
    T_{\log N} = O(N^{\log q + 1} \poly D) \;.
\end{align}
The required total number of auxiliary qudits is $O(N \log D)$.

\subsection{Proof of \cref{prop:main_hom}}

We will prove Theorem 1', which is a refinement of \cref{prop:main_hom}. In the uniform setting, $D$ is a constant, so we omit it from the scaling. We need the following Lemma.

\begin{lemma}[Local isometries: Uniform case] \label{lem:isometries_unif}
Let the tensor $A$ with boundary vectors $l,r$ define a uniform-bulk MPU. Define the sets of positive-definite operators
\begin{align}
    \mathcal L_m &\coloneqq \Bigg\{
    \begin{tikzpicture}[scale=0.5,xscale=-1,baseline={([yshift=-2.8ex] current bounding box.center)}]
        \draw[thick, fill=whitetensorcolor, rounded corners=2pt] (-2.3*\doubledx,2.8) rectangle (0.4*\doubledx,1.8);
	    \draw (-1*\doubledx,2.3) node {\small $\sigma$};
		\foreach \x in {-1,...,-1}{
            \draw[thick] (-\doubledx*\x,-0.8) -- (0,-0.8);
		      \filldraw[color=black, fill=whitetensorcolor, thick] (-\doubledx*\x,-0.8) circle (0.5);
            \draw (-\doubledx*\x,-0.8) node {\small$l$};
            \draw[thick] (-\doubledx*\x,0.8) -- (0,0.8);
		      \filldraw[color=black, fill=whitetensorcolor, thick] (-\doubledx*\x,0.8) circle (0.5);
            \draw (-\doubledx*\x,0.8) node {\small$l^*$};
        }
              \foreach \x in {0,...,0}{
                \DoubleLongLine{-\doubledx*\x,0};
                \DoubleATensor{(-\doubledx*\x,0)}{1};
                \draw (-\doubledx*\x,-0.8) node {\small $A$};
                \draw (-\doubledx*\x,0.8) node {\small $A^\dagger$};
        }
		      \foreach \x in {1,...,1}{
                \DoubleDots{-\doubledx*\x,0}{\doubledx/2}
            \draw (-\doubledx*\x,-1.2) node {\small $m$};
        }
		      \foreach \x in {2,...,2}{
                \DoubleLongLine{-\doubledx*\x,0}
                \DoubleATensor{(-\doubledx*\x,0)}{0}
                \draw (-\doubledx*\x,.8) node {\small $A^\dagger$};
                \draw (-\doubledx*\x,-0.8) node {\small $A$};
            \draw[thick] (-\doubledx*\x-1,0.8) -- (-\doubledx*\x-1.5,0.8);
            \myarrow{(-\doubledx*\x-1.25,0.8)}{4};
            \draw[thick] (-\doubledx*\x-1,-0.8) -- (-\doubledx*\x-1.5,-0.8);
            \myarrow{-\doubledx*\x-1,-0.8}{3};
        }
    \end{tikzpicture}
\succ 0 \Bigg\}_{\displaystyle \sigma} \;,
\end{align}
\begin{align}
    \mathcal R_m &\coloneqq \Bigg\{
    \begin{tikzpicture}[scale=0.5,baseline={([yshift=-2.8ex] current bounding box.center)}]
        \draw[thick, fill=whitetensorcolor, rounded corners=2pt] (-2.3*\doubledx,2.8) rectangle (0.4*\doubledx,1.8);
	    \draw (-1*\doubledx,2.3) node {\small $\tau$};
		\foreach \x in {-1,...,-1}{
            \draw[thick] (-\doubledx*\x,-0.8) -- (0,-0.8);
		      \filldraw[color=black, fill=whitetensorcolor, thick] (-\doubledx*\x,-0.8) circle (0.5);
            \draw (-\doubledx*\x,-0.8) node {\small$r$};
            \draw[thick] (-\doubledx*\x,0.8) -- (0,0.8);
		      \filldraw[color=black, fill=whitetensorcolor, thick] (-\doubledx*\x,0.8) circle (0.5);
            \draw (-\doubledx*\x,0.8) node {\small$r^*$};
        }
		      \foreach \x in {0,...,0}{
                \DoubleLongLine{-\doubledx*\x,0};
                \DoubleATensor{(-\doubledx*\x,0)}{1};
                \draw (-\doubledx*\x,-0.8) node {\small $A$};
                \draw (-\doubledx*\x,0.8) node {\small $A^\dagger$};
        }
		      \foreach \x in {1,...,1}{
                \DoubleDots{-\doubledx*\x,0}{\doubledx/2}
            \draw (-\doubledx*\x,-1.2) node {\small $m$};
        }
		      \foreach \x in {2,...,2}{
                \DoubleLongLine{-\doubledx*\x,0}
                \DoubleATensor{(-\doubledx*\x,0)}{0}
                \draw (-\doubledx*\x,-0.8) node {\small $A$};
                \draw (-\doubledx*\x,0.8) node {\small $A^\dagger$};
            \draw[thick] (-\doubledx*\x-1,0.8) -- (-\doubledx*\x-1.5,0.8);
            \myarrow{(-\doubledx*\x-1.25,0.8)}{4};
            \draw[thick] (-\doubledx*\x-1,-0.8) -- (-\doubledx*\x-1.5,-0.8);
            \myarrow{-\doubledx*\x-1,-0.8}{3};
        }
    \end{tikzpicture}
    \succ 0 \Bigg\}_{\displaystyle \tau}
\end{align}
where $\sigma, \tau$ are density operators of over $m$ qudits. Define $\mathcal L \coloneqq \bigcup_{m} \mathcal L_m$ and $\mathcal R \coloneqq \bigcup_m \mathcal R_m$. Then for all $n\ge1$, $L^2  \in \mathcal L$, $R^2 \in \mathcal R$,
    \begin{align}
        V^{[L,R]}_{n} \coloneqq 
        \begin{array}{c}
        \begin{tikzpicture}[scale=0.5]
		      \foreach \x in {1,...,1}{
                \GTensor{(-\singledx*\x-.5,0)}{1}{.5}{\small $A_k$}{1}
            \myarrow{-\singledx*\x-.5,-.65}{1};
            \myarrow{-\singledx*\x-.5,.85}{1};
            \myarrow{-\singledx*\x+1,1.85}{1};
            \draw [thick,rounded corners] (-\singledx*\x,0) -- (-\singledx*\x+1,0) -- (-\singledx*\x+1,2);
		      \filldraw[color=black, fill=whitetensorcolor, thick] (-\singledx*\x+1,.9) circle (0.6);
            \draw (-\singledx*\x+1,0.9) node {\small$R$};
        }
		      \foreach \x in {3,...,3}{
            \draw [thick,rounded corners] (-\singledx*\x,0) -- (-\singledx*\x-1.5,0) -- (-\singledx*\x-1.5,2);
                \GTensor{(-\singledx*\x,0)}{1}{.5}{\small $A$}{-1}
            \myarrow{-\singledx*\x,.85}{1};
            \myarrow{-\singledx*\x,-.65}{1};
            \myarrow{-\singledx*\x-1.5,1.85}{1};
		      \filldraw[color=black, fill=whitetensorcolor, thick] (-\singledx*\x-1.5,.9) circle (0.6);
            \draw (-\singledx*\x-1.5,0.9) node {\small$L$};
        }
		      \foreach \x in {2,...,2}{
                \SingleDots{-\doubledx*\x-.5,0}{\doubledx*.8}
            \draw (-\doubledx*\x-.6,-0.5) node {\small $n$};
        }
        \end{tikzpicture}
                \end{array}
        \;.
\end{align}
is an isometry, i.e., $V^{[L,R]\dagger}_{n} V^{[L,R]}_{n} = \1$. Similarly for
    \begin{align}
        V^{[l,R]}_{n} \coloneqq 
        \begin{array}{c}
        \begin{tikzpicture}[scale=0.5]
		      \foreach \x in {1,...,1}{
                \GTensor{(-\singledx*\x-.5,0)}{1}{.5}{\small $A_k$}{1}
            \myarrow{-\singledx*\x-.5,-.65}{1};
            \myarrow{-\singledx*\x-.5,.85}{1};
            \myarrow{-\singledx*\x+1,1.85}{1};
            \draw [thick,rounded corners] (-\singledx*\x,0) -- (-\singledx*\x+1,0) -- (-\singledx*\x+1,2);
		      \filldraw[color=black, fill=whitetensorcolor, thick] (-\singledx*\x+1,.9) circle (0.6);
            \draw (-\singledx*\x+1,0.9) node {\small$R$};
        }
		      \foreach \x in {3,...,3}{
            \draw [thick,rounded corners] (-\singledx*\x,0) -- (-\singledx*\x-1.5,0) -- (-\singledx*\x-1.5,1);
                \GTensor{(-\singledx*\x,0)}{1}{.5}{\small $A$}{-1}
            \myarrow{-\singledx*\x,.85}{1};
            \myarrow{-\singledx*\x,-.65}{1};
		      \filldraw[color=black, fill=whitetensorcolor, thick] (-\singledx*\x-1.5,.9) circle (0.6);
            \draw (-\singledx*\x-1.5,0.9) node {\small$l$};
        }
		      \foreach \x in {2,...,2}{
                \SingleDots{-\doubledx*\x-.5,0}{\doubledx*.8}
            \draw (-\doubledx*\x-.6,-0.5) node {\small $n$};
        }
        \end{tikzpicture}
                \end{array}
        \;.
\end{align}
and
    \begin{align}
        V^{[L,r]}_{n} \coloneqq 
        \begin{array}{c}
        \begin{tikzpicture}[scale=0.5]
		      \foreach \x in {1,...,1}{
                \GTensor{(-\singledx*\x-.5,0)}{1}{.5}{\small $A_k$}{1}
            \myarrow{-\singledx*\x-.5,-.65}{1};
            \myarrow{-\singledx*\x-.5,.85}{1};
            \draw [thick,rounded corners] (-\singledx*\x,0) -- (-\singledx*\x+1,0) -- (-\singledx*\x+1,1);
		      \filldraw[color=black, fill=whitetensorcolor, thick] (-\singledx*\x+1,.9) circle (0.6);
            \draw (-\singledx*\x+1,0.9) node {\small$r$};
        }
		      \foreach \x in {3,...,3}{
            \draw [thick,rounded corners] (-\singledx*\x,0) -- (-\singledx*\x-1.5,0) -- (-\singledx*\x-1.5,2);
                \GTensor{(-\singledx*\x,0)}{1}{.5}{\small $A$}{-1}
            \myarrow{-\singledx*\x,.85}{1};
            \myarrow{-\singledx*\x,-.65}{1};
            \myarrow{-\singledx*\x-1.5,1.85}{1};
		      \filldraw[color=black, fill=whitetensorcolor, thick] (-\singledx*\x-1.5,.9) circle (0.6);
            \draw (-\singledx*\x-1.5,0.9) node {\small$L$};
        }
		      \foreach \x in {2,...,2}{
                \SingleDots{-\doubledx*\x-.5,0}{\doubledx*.8}
            \draw (-\doubledx*\x-.6,-0.5) node {\small $n$};
        }
        \end{tikzpicture}
                \end{array}
        \;.
\end{align}
\end{lemma}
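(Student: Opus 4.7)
The plan is to mirror the argument from the end matter and from Lemma~\ref{lem:isometries}, exploiting the fact that the uniform-bulk setting makes the operators $L^2$ and $R^2$ genuinely independent of the segment length $n$. The whole proof reduces to applying unitarity $U_N^\dagger U_N = \1$ for a suitable total system size and then partially tracing.

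For the interior case, I would fix $L^2 \in \mathcal{L}_{m_L}$ arising from some density operator $\sigma$ on $m_L$ sites, and $R^2 \in \mathcal{R}_{m_R}$ arising from some $\tau$ on $m_R$ sites, and consider the uniform-bulk MPU on $N = m_L + n + m_R$ sites. Sandwiching the identity $U_N^\dagger U_N = \1$ with $\sigma \otimes \1_n \otimes \tau$ (acting on the outer blocks) and tracing over these outer blocks yields, graphically, a double-layer contraction of $n$ consecutive bulk tensors with $L^2$ on the left leg and $R^2$ on the right leg, which equals the identity on the $n$-site physical space. This is exactly $V_n^{[L,R]\dagger} V_n^{[L,R]} = \1$. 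The crucial observation is that because the bulk tensor $A$ is the same at every position, the diagrams defining $L^2$ and $R^2$ do not know about $n$ or about where the $n$-site window is placed, so a single admissible pair $(L,R)$ works for every $n \ge 1$.

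For the half-boundary cases $V_n^{[l,R]}$ and $V_n^{[L,r]}$, the strategy is identical but uses a smaller total system. To handle $V_n^{[l,R]}$, I would take $N = n + m_R$, apply unitarity, sandwich with $\1_n \otimes \tau$ on the last $m_R$ sites and trace them out; the left boundary vector $l$ appears automatically as the MPU's native left boundary, while the right contraction reproduces $R^2$, giving $V_n^{[l,R]\dagger} V_n^{[l,R]} = \1$. The case $V_n^{[L,r]}$ is symmetric with $N = m_L + n$.

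The square roots $L, R$ are well-defined because $L^2$ and $R^2$ are manifestly positive semi-definite as images of positive operators under the completely positive map built from the MPU tensors; the definitions of $\mathcal{L}_m$ and $\mathcal{R}_m$ further restrict us to the strictly positive case, so $L$ and $R$ are in fact invertible, as needed later in the merging step. There is no real obstacle to the argument beyond bookkeeping the indices and the three separate total-system sizes $m_L+n+m_R$, $n+m_R$, $m_L+n$; in each case the desired isometry identity is a single application of unitarity followed by a partial trace against a positive operator, and the uniform-bulk structure guarantees the resulting $L,R$ can be reused across all $n$.
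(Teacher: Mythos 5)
Your proposal is correct and follows essentially the same route as the paper's proof: apply unitarity of the uniform-bulk MPU on $n+m_L+m_R$ sites, sandwich with $\sigma\otimes\1\otimes\tau$ and trace the outer blocks to obtain $V_n^{[L,R]\dagger}V_n^{[L,R]}=\1$, with the boundary cases recovered by setting $m_L=0$ or $m_R=0$ so that $l$ or $r$ appears natively. The remarks on positive-(semi)definiteness of $L^2,R^2$ and the $n$-independence of $L,R$ match the paper's treatment.
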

\begin{proof}
The proof is similar to \cref{lem:isometries}. Consider the $N=3$ site MPU and $\sigma$, $\tau$ arbitrary single-site density operators. Then from unitarity $U^\dagger U = \1$ it is immediate that
\begin{align}
    \Tr_{1,3} \left[ (\sigma \otimes \1 \otimes \tau) U^\dagger U\right] = \1 \;.
\end{align}
In graphical representation,
\begin{align} \label{eq:app:LR_unitarity}
        \begin{array}{c}
        \begin{tikzpicture}[scale=0.5]
		      \foreach \x in {2,...,2}{
            \SideIdentityTensor{-\x*\doubledx,0}{\small $R^2$}{\stradius}{3};
        }
		      \foreach \x in {3,...,3}{
                \DoubleATensor{(-\doubledx*\x,0)}{-1};
                \draw (-\doubledx*\x,-0.8) node {\small $A$};
                \draw (-\doubledx*\x,0.8) node {\small $A^\dagger$};
            \myarrow{-\doubledx*\x,-0.65-0.8}{1}
            \myarrow{-\doubledx*\x,0.85+0.8}{1}
        }
		      \foreach \x in {4,...,4}{
            \SideIdentityTensor{-\x*\doubledx,0}{\small $L^2$}{\stradius}{-3};
        }
        \end{tikzpicture}
                \end{array}
        = 
        \begin{array}{c}
        \begin{tikzpicture}[scale=0.5]
		      \foreach \x in {1,...,1}{
                \DoubleIdentityTensor{(-\identitydx*\x,0)}{}{0};
            \myarrow{-\identitydx*\x,0}{1}
        }
        \end{tikzpicture}
        \end{array}
        \;,
    \end{align}
where
\begin{align}
    L^2 \coloneqq 
    \begin{tikzpicture}[scale=0.5,xscale=-1,baseline={([yshift=-2.8ex] current bounding box.center)}]
		\foreach \x in {-1,...,-1}{
            \draw[thick] (-\doubledx*\x,-0.8) -- (0,-0.8);
		      \filldraw[color=black, fill=whitetensorcolor, thick] (-\doubledx*\x,-0.8) circle (0.5);
            \draw (-\doubledx*\x,-0.8) node {\small$l$};
            \draw[thick] (-\doubledx*\x,0.8) -- (0,0.8);
		      \filldraw[color=black, fill=whitetensorcolor, thick] (-\doubledx*\x,0.8) circle (0.5);
            \draw (-\doubledx*\x,0.8) node {\small$l^*$};
        }
		      \foreach \x in {0,...,0}{
        \draw[thick, fill=whitetensorcolor, rounded corners=2pt] (-.3*\doubledx,2.8) rectangle (0.3*\doubledx,1.8);
	    \draw (\x*\doubledx,2.3) node {\small $\sigma$};
                \DoubleLongLine{-\doubledx*\x,0};
                \DoubleATensor{(-\doubledx*\x,0)}{1};
                \draw (-\doubledx*\x,-0.8) node {\small $A$};
                \draw (-\doubledx*\x,0.8) node {\small $A^\dagger$};
            \draw[thick] (-\doubledx*\x-1,0.8) -- (-\doubledx*\x-1.5,0.8);
            \myarrow{(-\doubledx*\x-1.25,0.8)}{4};
            \draw[thick] (-\doubledx*\x-1,-0.8) -- (-\doubledx*\x-1.5,-0.8);
            \myarrow{-\doubledx*\x-1,-0.8}{3};
        }
    \end{tikzpicture}
 \;, \quad
    R^2 \coloneqq 
    \begin{tikzpicture}[scale=0.5,baseline={([yshift=-2.8ex] current bounding box.center)}]
		\foreach \x in {-1,...,-1}{
            \draw[thick] (-\doubledx*\x,-0.8) -- (0,-0.8);
		      \filldraw[color=black, fill=whitetensorcolor, thick] (-\doubledx*\x,-0.8) circle (0.5);
            \draw (-\doubledx*\x,-0.8) node {\small$r$};
            \draw[thick] (-\doubledx*\x,0.8) -- (0,0.8);
		      \filldraw[color=black, fill=whitetensorcolor, thick] (-\doubledx*\x,0.8) circle (0.5);
            \draw (-\doubledx*\x,0.8) node {\small$r^*$};
        }
		      \foreach \x in {0,...,0}{
        \draw[thick, fill=whitetensorcolor, rounded corners=2pt] (-.3*\doubledx,2.8) rectangle (0.3*\doubledx,1.8);
	    \draw (\x*\doubledx,2.3) node {\small $\tau$};
                \DoubleLongLine{-\doubledx*\x,0};
                \DoubleATensor{(-\doubledx*\x,0)}{1};
                \draw (-\doubledx*\x,-0.8) node {\small $A$};
                \draw (-\doubledx*\x,0.8) node {\small $A^\dagger$};
            \draw[thick] (-\doubledx*\x-1,0.8) -- (-\doubledx*\x-1.5,0.8);
            \myarrow{(-\doubledx*\x-1.25,0.8)}{4};
            \draw[thick] (-\doubledx*\x-1,-0.8) -- (-\doubledx*\x-1.5,-0.8);
            \myarrow{-\doubledx*\x-1,-0.8}{3};
        }
    \end{tikzpicture}
    \;.
\end{align}
Note that both $L^2,R^2$ are positive-semidefinite; thus, their square roots $L,R$ are well-defined. \Cref{eq:app:LR_unitarity} thus establishes $V^{[L,R]\dagger}_1 V^{[L,R]}_1 = \1$ for all $L \in \mathcal L_1$, $R \in \mathcal R_1$. Repeating the same argument over $n+m_1+m_2$ sites and tracing out $m_1 \ge 0$ sites on the left and $m_2 \ge 0$ on the right gives $V^{[L,R]\dagger}_n V^{[L,R]}_n = \1$ for all $L \in \mathcal L^2_{m_1}$, $R^2 \in \mathcal R_{m_2}$. Taking $m_1 = 0$ or $m_2 = 0$ respectively yields that $V^{[l,R]}_n$, $V^{[L,r]}_n$ are isometries.
\end{proof}

\begin{manualtheorem} \label{prop:app_hom}
    Suppose $U_N$ is a uniform-bulk MPU satisfying \cref{assumption}. Then $U_N$ can be implemented with a quantum circuit of depth $O(N^{1+\log_2 q_{\rm{unif}}})$ and $O(N$) auxiliary qudits, where
\begin{align}
    q_{\rm{unif}} \coloneqq \inf_{\substack{R^2 \in \mathcal R \\ L^2 \in \mathcal L}}   \sqrt{\tr \left[ R^{-2} (L^{-2})^{T} \right]} = O(1) \;.
\end{align}
\end{manualtheorem}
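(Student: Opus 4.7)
The plan is to mirror the proof of \cref{prop:app:main} but specialize everything to the uniform-bulk setting, where both the local isometry data and the merging operator are site-independent. First, I would invoke \cref{lem:isometries_unif} to produce tensors $L, R$ (system-size independent) such that $V^{[L,R]}_n$, $V^{[l,R]}_n$ and $V^{[L,r]}_n$ are isometries for every $n$. In particular, at the first stage of the circuit I implement the parallel layer $V^{[l,R]}_1 \otimes (V^{[L,R]}_1)^{\otimes N-2} \otimes V^{[L,r]}_1$; since $d$ and $D$ are constants, each such factor is a constant-sized isometry that can be dilated to an $O(1)$-depth unitary acting on $O(1)$ auxiliary qudits per site, for a total of $O(N)$ ancillae and $O(1)$ depth.

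Second, I would merge neighboring isometries in a balanced binary tree of $\lceil \log_2 N \rceil$ layers. The key observation, which distinguishes this from the nonuniform setting, is that the merging operator $M = \ket{00}\bra{\1}(R^{-1} \otimes L^{-1})$ is the same at every internal edge of the tree, so I may reuse the same unitary dilation and the same amplitude-amplification routine throughout. The merging procedure itself is identical to \cref{lem:merging}: write $M = \sum_i c_i W_i$ as a convex combination of $O(1)$ unitaries (\cref{lem:lcu}), implement a block encoding $U = B^\dagger W^{\mathrm{ctrl}} B$ on $O(1)$ auxiliary qudits, and apply the subspace amplitude amplification of \cref{lem:amp_amp,cor:amp_amp} with $\mathcal{S} = \image(V^{[x,R]}_n \otimes V^{[L,y]}_m)$. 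The hypotheses of \cref{lem:amp_amp} hold because $V^{[x,R]}_n \otimes V^{[L,y]}_m$ is an isometry by \cref{lem:isometries_unif} and because $M(V^{[x,R]}_n \otimes V^{[L,y]}_m) = V^{[x,y]}_{n+m}$ is also an isometry. The number of Grover-like rotations is bounded by $\lceil C \rceil \le \lceil \|M\|_1 \rceil = \lceil q_{\mathrm{unif}}\rceil$, giving the recursion $T_j \le 2 q_{\mathrm{unif}} T_{j-1} + O(1)$, which sums to $T_{\lceil \log_2 N\rceil} = O((2 q_{\mathrm{unif}})^{\log_2 N}) = O(N^{1+\log_2 q_{\mathrm{unif}}})$.

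The main obstacle, and the one step that genuinely uses \cref{assumption}, is to establish $q_{\mathrm{unif}} = O(1)$, i.e., that the infimum defining $q_{\mathrm{unif}}$ is achieved by finite (equivalently, full-rank) $L, R$. For this I would argue as in the end matter: for any density operator $\sigma$ one has $\rank L^2(\sigma) \le \rank L^2(\1/d)$, with equality when $\sigma$ is full rank, and an analogous statement for $R$. \Cref{assumption} precisely says that the single-site contraction of the MPU tensor against the boundary vectors has rank $D$, which, after the trace-closure argument of \cref{lem:isometries_unif}, forces $L^2(\1), R^2(\1)$ to be full rank. Consequently $L^{-1}, R^{-1}$ are well-defined bounded operators of size independent of $N$, so $\sqrt{\tr[R^{-2}(L^{-2})^T]} < \infty$. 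Plugging this constant into the recursion above yields the claimed polynomial depth and linear auxiliary-qudit count, completing the proof.
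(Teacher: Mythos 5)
Your proposal is correct and follows essentially the same route as the paper's proof: parallel implementation of the single-site isometries $V^{[l,R]}_1\otimes (V^{[L,R]}_1)^{\otimes N-2}\otimes V^{[L,r]}_1$, tree merging with the site-independent operator $M$ via the LCU construction and subspace amplitude amplification (\cref{lem:lcu,lem:amp_amp,cor:amp_amp,lem:merging}), and the rank argument showing that \cref{assumption} (with $\sigma,\tau\propto\1$) yields full-rank $L,R$, hence invertible $L^{-1},R^{-1}$ and a finite, $N$-independent $q_{\rm unif}$. The only minor slip is the additive term in your recursion, which should scale as $O(2^{j})$ at tree level $j$ (the reflection about the image of the merged isometry acts on all bond registers of the block, cf.\ the $O[(l-j)\polylog D]$ term in \cref{lem:merging}) rather than $O(1)$, but this does not affect the claimed $O(N^{1+\log_2 q_{\rm unif}})$ depth.
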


\noindent The infimum, which does not explicitly appear in the main text, is not necessary; any choice $R^2 \in \mathcal R$,  $L^2 \in \mathcal L$ is valid.

\begin{proof}
Our strategy is to leverage the uniform structure of the tensor network in order to get rid of the system-size dependence in the MPU conditioning number. This is essentially achieved in \cref{lem:isometries_unif}, since any choice of $L,R$ is independent of $N$. This implies the following simple modification of our algorithm.
\begin{enumerate}[(i)]
    \item Implement $V^{[l,R]}_1 \otimes \left(V^{[L,R]}_{1}\right)^{\otimes N-2} \otimes V^{[L,r]}_1$ in parallel. This requires $O(1)$ depth and $O(N)$ auxiliary qudits.
    \item Merge neighboring isometries via applying $M$ in a tree fashion, as in Fig.~\ref{fig_merging}, until the full MPU is obtained.
\end{enumerate}
The key point is that the merging for the bulk tensors in steps in (ii) is achieved by a \textit{uniform} operator $M$ (i.e., not depending on the position or the total system size),
\begin{align} \label{app:eq:merging}
    M \coloneqq \ket{00} \bra{ \1 } (R \otimes L) = 
        \begin{tikzpicture}[scale=0.5,baseline={([yshift=0.0ex] current bounding box.center)}]
		      \foreach \x in {0,...,0}{
            \myarrow{0,-.75}{1};
            \myarrow{1.5,-.75}{1};
            \draw [thick,rounded corners] (0,-1) -- (0,.9) -- (1.5,.9) -- (1.5,-1);
		      \filldraw[color=black, fill=whitetensorcolor, thick] (0,0) circle (0.6);
            \draw (0,0) node {\small$R$};
		      \filldraw[color=black, fill=whitetensorcolor, thick] (1.5,0) circle (0.6);
            \draw (1.5,0) node {\small$L$};
            \draw [thick] (0,2.4) -- (0,1.9);
            \draw (0,1.5) node {\small$\ket{0}$};
            \draw [thick] (1.5,2.4) -- (1.5,1.9);
            \draw (1.5,1.5) node {\small$\ket{0}$};
            \myarrow{0,2.2}{1};
            \myarrow{1.5,2.2}{1};
        }
        \end{tikzpicture}
        \;.
\end{align}
As such, the Merging Lemma~\ref{lem:merging} can be applied for the bulk tensors by replacing
\begin{align}
    q_k \mapsto q_{\rm unif} \;.
\end{align}

To conclude the proof, we need to show that the sets $\mathcal R, \mathcal L$ contain full-rank operators\footnote{For the nonuniform case, this was guaranteed by the (nonuniform) canonical form, which led to \cref{eq:choice_schmidt}. However, here it is crucial to maintain a uniform representation (i.e., the bulk tensors remain all the same), so we cannot assume this canonical form.} (i.e., they are non-empty) -- otherwise, it might be that the merging step results in an unwanted projector in the bond space, altering the resulting tensor network after merging. However, our \cref{assumption} guarantees that indeed the above sets are non-empty, e.g., by taking $\sigma$ and $\tau$ maximally mixed density operators.

To see this, first notice that $\rank [L(\sigma)] \le \rank[L(\1)]$ for all states $\sigma$ (similarly for $R$). This is since, for any matrix $X$ and positive-semidefinite $\sigma$, $\rank(X^\dagger \sigma X) \le \rank(X^\dagger X)$ with equality if $\sigma$ is full rank. Thus, if a full-rank $L$ exists, it is obtained for $\sigma \propto \1$. \Cref{assumption} states $\rank[L(\1)]=\rank[R(\1)]=D$.

\end{proof}

\section{MPUs satisfying \cref{assumption}}

In the main text, for the uniform-bulk MPUs, we assumed:
\begin{assumption}\label{assumption_sm}
\renewcommand\theassumption{1} 
It holds that
\setcounter{assumption}{0}
\begin{align}
 \rank \left(
        \begin{array}{c}
        \begin{tikzpicture}[scale=0.5]
		      \foreach \x in {5,...,5}{
                \GTensor{(-\singledx*\x,0)}{1}{.5}{\small $A$}{0}
            \myarrow{-\singledx*\x,-.85}{2};
            \myarrow{-\singledx*\x,.85}{1};
            \myarrow{-\singledx*\x+0.7,0}{4};
        }
		\foreach \x in {6,...,6}{
		      \filldraw[color=black, fill=whitetensorcolor, thick] (-\singledx*\x+0.4,0) circle (0.5);
            \draw (-\singledx*\x+0.4,0) node {\small$l$};
        }
        \end{tikzpicture}
                \end{array}
    \right) = 
 \rank \left(
        \begin{array}{c}
        \begin{tikzpicture}[scale=0.5]
		\foreach \x in {1,...,1}{
                \GTensor{(-\singledx*\x-.5,0)}{1}{.5}{\small $A$}{0}
            \myarrow{-\singledx*\x-.5,-.85}{2};
            \myarrow{-\singledx*\x-.5,.85}{1};
            \myarrow{-\singledx*\x-1.15,0}{3};
        }
		\foreach \x in {0,...,0}{
		      \filldraw[color=black, fill=whitetensorcolor, thick] (-\singledx*\x-0.9,0) circle (0.5);
            \draw (-\singledx*\x-0.9,0) node {\small$r$};
        } 
        \end{tikzpicture}
                \end{array}
        \right)
    = D
        \;.
\end{align}
\end{assumption}
\noindent The condition is automatically satisfied if $A$, viewed as an MPS tensor, is injective. However, there is a much more general class for which the assumption holds true. For this, we need to introduce:

\begin{definition}[Block-diagonal uniform-bulk MPU] \label{def:app_homo_mpu}
    The bulk tensor $A$ with a boundary operator $b$ -- both independent of $N$ -- define an MPU if, for all system sizes $N \ge 1$, the MPO
\begin{align} \label{app:eq:U_N_hom}
U_N = 
    \begin{array}{c}
        \begin{tikzpicture}[scale=.5,baseline={([yshift=-0.75ex] current bounding box.center)}]
		      \foreach \x in {0,...,0}{
                \SingleTrRight{(0,0)}
        }
		      \foreach \x in {1,...,1}{
                \GTensor{(-\singledx*\x,0)}{1}{.5}{\small $A$}{0}
                \myarrow{-\singledx*\x,-.65}{1};
                \myarrow{-\singledx*\x,.85}{1};
        }
		      \foreach \x in {2,...,2}{
                \SingleDots{-\doubledx*\x-.5,0}{\doubledx*.8}
        }
		      \foreach \x in {3,...,3}{
                \GTensor{(-\singledx*\x,0)}{1}{.5}{\small $A$}{0}
                \myarrow{-\singledx*\x,-.65}{1};
                \myarrow{-\singledx*\x,.85}{1};
        }
		      \foreach \x in {4,...,4}{
                \bTensor{-\singledx*\x,0}{b}
        }
		      \foreach \x in {5,...,5}{
                \SingleTrLeft{(-\singledx*\x+.3,0)}
        }
        \end{tikzpicture}
        \end{array}
        \,.
\end{align}
is unitary.
\end{definition}
We now show that \cref{assumption_sm} is satisfied if $A$ above is in block-diagonal canonical form, with linearly independent blocks, and the corresponding blocks of the boundary tensor $b$ are all full rank. More specifically, suppose that
\begin{align}\label{eq:MPU_canonical}
A^{ij} = \bigoplus_{k = 1}^g A^{ij}_k \text{ and }b = \bigoplus_{k = 1}^g b_k,
\end{align}
where the tensors $A_k$, with $A_k^{ij} \in \mathcal{M} \left( \mathbb C^{D_k} \right)$, are independent elements in a basis of normal tensors 
\cite{cirac2017matrix2} and $b_k \in \mathcal{M} \left( \mathbb C^{D_k} \right)$ are invertible operators. We recall that since $A_k$ are independent elements in the basis of normal tensors, then after blocking we can construct another set of tensors $A^{-1}_1, A^{-1}_2 \dots A^{-1}_g$ such that \cite{perez_garcia2007matrix}
\begin{align}
\sum_{i, j} A_k^{ij} \otimes (A^{-1}_l)^{ij} = \delta_{k, l} \1_{D_k} \otimes \1_{D_k},
\end{align}
where $\1_{D_k}$ is the $D_k-$dimensional identity. Diagramatically, Eq.~\eqref{eq:basis_normal_tensor_inverse} can be expressed as
\begin{align}\label{eq:basis_normal_tensor_inverse}
\begin{tikzpicture}[scale=.5,baseline={([yshift=-0.75ex] current bounding box.center)}]
		      \foreach \x in {2,...,2}{
                \GVecTensor{(-\singledx*\x,0)}{1}{.5}{\small $A_k$}{0}
                \GBraTensor{(-\singledx*\x,1.5)}{1}{.5}{\small $A_l^{-1}$}{0}
        }
        \end{tikzpicture} = \delta_{k, l}
        \begin{tikzpicture}[scale=.5,baseline={([yshift=-0.75ex] current bounding box.center)}]
                \SideIdentityTensor{(0, 0)}{}{}{1}
                \SideIdentityTensor{(0.35, 0)}{}{}{-1}
                \draw[thick](-1, 0.8) -- (-0.6, 0.8);
                \draw[thick](-1, -0.8) -- (-0.6, -0.8);
                \draw[thick](0.95, -0.8) -- (1.35, -0.8);
                \draw[thick](0.95, 0.8) -- (1.35, 0.8);
        \end{tikzpicture}
        \;.
\end{align}

We first rewrite the MPU specified by Eq.~\eqref{eq:MPU_canonical} with open boundary conditions --- this MPU can equivalently be generated by the bulk tensor $\tilde{A}$
\begin{align}
    \tilde{A} = \bigoplus_{k = 1}^g \1_{D_k} \otimes A_k^{ij},
\end{align}
together with the left and right boundary vectors $\bra{l} = \bigoplus_{k = 1}^g \bra{l_k}$ and $\ket{r} = \bigoplus_{k =1 }^g \ket{r_k}$, respectively, where
\begin{align}
    \bra{l_k} = \sum_{\alpha, \beta = 1}^{D_k} (b_k)_{\alpha, \beta} \bra{\alpha, \beta}\text{ and } \ket{r_k} = \sum_{\alpha, \beta = 1}^{D_k} \ket{\alpha, \beta}.
\end{align}
In this open-boundary representation, the MPU has a bond-dimension $D = \sum_{k = 1}^b D_k^2$ with the bond-space being the vector space $\mathcal{H}_\text{bond} = \bigoplus_{k = 1}^g \mathbb{C}^{D_k}\otimes \mathbb{C}^{D_k} $. We now show that, for MPUs of this form, Assumption \ref{assumption_sm} is satisfied ---  to see this, let us assume the contrary, i.e., $\exists \ket{v} \in \mathcal{H}_\text{bond}$ such that
\begin{align}\label{eq:null_condition}
\bra{l} A^{ij} \ket{v} = 0 \ \forall i, j \;.
\end{align}
Since $\ket{v} \in \mathcal{H}_\text{bond}$, we can express it as $\ket{v} = \bigoplus_{k = 1}^g \ket{v_k}$ where $\ket{v_k}$ can be represented by a matrix $v_k \in \mathcal{M} \left( \mathbb C^{D_k} \right)$ via
\[
\ket{v_k} = \sum_{\alpha, \beta} (v_k)_{\alpha, \beta} \ket{\alpha, \beta}.
\]
Now, Eq.~\eqref{eq:null_condition} then implies that
\[
\sum_{k = 1}^g \bra{l_k} (\1_{D_k} \otimes A_k^{ij}) \ket{v_k} = 0.
\]
This can be diagramatically represented as
\[
\sum_{k = 1}^g \begin{array}{c}
        \begin{tikzpicture}[scale=.5,baseline={([yshift=-0.75ex] current bounding box.center)}]
		      \foreach \x in {0,...,0}{
                \SingleTrRight{(0,0)}
        }
        		      \foreach \x in {1,...,1}{
                \bTensor{-\singledx*\x,0}{$v_k$}
        }
		      \foreach \x in {2,...,2}{
                \GVecTensor{(-\singledx*\x,0)}{1}{.5}{\small $A_k$}{0}
                \draw (-\singledx*\x - 0.3*\singledx,.95) node {$i$};
                \draw (-\singledx*\x + 0.3*\singledx,.95) node {$j$};
        }
		      \foreach \x in {3,...,3}{
                \bTensor{-\singledx*\x,0}{$b_k$}
        }
		      \foreach \x in {4,...,4}{
                \SingleTrLeft{(-\singledx*\x+.3,0)}
        }
        \end{tikzpicture}
        \end{array} = 0,
\]
from which it follows that $\forall l \in \{1, 2 \dots g\}$,
\[
\sum_{k = 1}^g \begin{array}{c}
        \begin{tikzpicture}[scale=.5,baseline={([yshift=-0.75ex] current bounding box.center)}]
		      \foreach \x in {0,...,0}{
                \SingleTrRight{(0,0)}
        }
        		      \foreach \x in {1,...,1}{
                \bTensor{-\singledx*\x,0}{$v_k$}
        }
		      \foreach \x in {2,...,2}{
                \GVecTensor{(-\singledx*\x,0)}{1}{.5}{\small $A_k$}{0}
                \GBraTensor{(-\singledx*\x,1.5)}{1}{.5}{\small $A_l^{-1}$}{0}
        }
		      \foreach \x in {3,...,3}{
                \bTensor{-\singledx*\x,0}{$b_k$}
        }
		      \foreach \x in {4,...,4}{
                \SingleTrLeft{(-\singledx*\x+.3,0)}
        }
        \end{tikzpicture}
        \end{array} = 0 \;.
\]
Using Eq.~\eqref{eq:basis_normal_tensor_inverse}, we then obtain that $v_l b_l = 0$ --- since $b_l$ is full-rank and hence invertible (by assumption), this implies that $v_l=  0$ and therefore $\ket{v} = 0$. This analysis thus verifies that the condition in \cref{assumption_sm} involving the left boundary vector is satisfied. An identical analysis can be repeated for the condition in \cref{assumption_sm} involving the right boundary vector, thus establishing \cref{assumption_sm}.

\section{MPUs from $C^*$-Weak Hopf Algebras}

We will show here how MPUs can be constructed from $C^*$-weak Hopf algebras, and also that they can be prepared in $\text{poly}(N)$ time using the main result of this Letter.

\subsection{Review of $C^*$-Weak Hopf Algebras}\label{sec:weak_hopf_review}
We first recall the definition of a $C^*$-weak Hopf algebra, and some relevant properties.
\begin{definition}[$C^*$-Weak Hopf Algebra]\label{def:weak_hopf_Cstar}
A $C^*$-Weak Hopf Algebra $\mathcal{A}$ over complex numbers is a finite-dimensional $C^*$ algebra such that:
\begin{enumerate}
    \item[(a)] There is a linear map $\Delta : \mathcal{A} \to \mathcal{A} \otimes \mathcal{A}$, called the co-product, such that
    \[
        (\Delta \otimes \textnormal{id}) \circ  \Delta = (\textnormal{id} \otimes \Delta) \circ \Delta.
    \]
    \item[(b)] There is a linear map $\epsilon : \mathcal{A} \to \mathbb{C}$ called the counit such that
    \[
        (\epsilon \otimes \textnormal{id})\circ \Delta = (\textnormal{id} \otimes \epsilon)\circ \Delta = \textnormal{id}.
    \]
    \item[(c)] The co-product $\Delta$ is compatible with the algebra multiplication and involution, i.e., $\forall x, y \in \mathcal{A}$
    \[
    \Delta(xy) = \Delta(x) \Delta(y) \text{ and }\Delta(x^*) = \Delta(x)^*.
    \]
     The co-unit $\epsilon$ satisfies
    \[
    \epsilon(xyz) = \sum_{\alpha} \epsilon(xy_\alpha^{(1)}) \epsilon(y_{\alpha}^{(2)} z),
    \]
    where
    \[
    \sum_{\alpha} y_{\alpha}^{(1)} \otimes y_{\alpha}^{(2)} = \Delta(y),
    \]
    and the unit $1 \in \mathcal{A}$ satisfies
    \[
    ((\Delta \otimes \textnormal{id})\circ \Delta)(1) = (\Delta(1) \otimes 1)(1 \otimes \Delta(1)).
    \]
    \item[(d)] There is a linear bijection $S:\mathcal{A} \to \mathcal{A}$, called the antipode, such that $\forall x \in \mathcal{A}$
    \begin{align*}
    &\sum_{\alpha} S(x_{\alpha}^{(1)})x_\alpha^{(2)} = \sum_{\alpha}\epsilon(e_\alpha^{(1)} x)e_\alpha^{(2)}, \nonumber\\
    &\sum_{\alpha} x_{\alpha}^{(1)} S(x_\alpha^{(2)}) = \sum_\alpha \epsilon(x e_\alpha^{(2)}) e_\alpha^{(1)},
    \end{align*}
    where
    \begin{align*}
        &\Delta(x) = \sum_\alpha x_\alpha^{(1)}\otimes x_\alpha^{(2)}, \Delta(1) = \sum_\alpha e_\alpha^{(1)}\otimes e_\alpha^{(2)}.
    \end{align*}
\end{enumerate}
\end{definition}
Given a $C^*$-weak Hopf algebra $\mathcal{A}$, its dual $\mathcal{A}^* = \text{Hom}(\mathcal{A}, \mathbb{C}) = \{f:\mathcal{A} \to \mathbb{C} , f \text{ linear}\}$ can also be endowed the structure of a $C^*$-weak Hopf algebra. Given $f \in \text{Hom}(\mathcal{A}, \mathbb{C})$, $x \in \mathcal{A}$, we let $\langle f, x\rangle =f(x) \in \mathbb{C}$. 
\begin{lemma}[Ref.~\cite{bohm1999weak}]\label{lemma:dual_C*}
    Given a $C^*$-weak Hopf algebra $\mathcal{A}$, the dual $\mathcal{A}^* = \textnormal{Hom}(\mathcal{A}, \mathbb{C})$ with
    \begin{enumerate}
        \item[(a)] Multiplication between $f, g \in \mathcal{A}^*$ defined by
        \[
        \forall x \in \mathcal{A}: \langle fg , x\rangle = \langle f\otimes g, \Delta(x)\rangle,
        \]
        \item[(b)] Involution $f^*$ of $f \in \mathcal{A}^*$ defined by
        \[
        \forall x \in \mathcal{A}: \langle f^*, x\rangle = \overline{\langle f, S(x)^*\rangle},
        \]
        \item[(c)] Coproduct $\Delta_*:\mathcal{A}^* \to \mathcal{A}^* \otimes \mathcal{A}^*$ defined by
        \[
        \forall f \in \mathcal{A}^*, x, y \in \mathcal{A}: \langle \Delta_*(f), x\otimes y\rangle = \langle f, xy\rangle,
        \]
        \item[(d)] Antipode $S_*:\mathcal{A}^*\to \mathcal{A}^*$ defined by
        \[
        \forall f \in \mathcal{A}^*, x\in \mathcal{A}: \langle S_*(f), x\rangle = \langle f, S(x)\rangle,
        \]
        \item[(e)] Unit $\epsilon \in \mathcal{A}^*$ and co-unit $1 \in \mathcal{A}$,
    \end{enumerate}
    is a $C^*$-weak Hopf algebra.
\end{lemma}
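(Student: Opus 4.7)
The plan is to verify each clause of Definition~\ref{def:weak_hopf_Cstar} for $\mathcal{A}^*$ by systematically dualizing the corresponding axiom of $\mathcal{A}$ through the evaluation pairing $\langle f, x \rangle = f(x)$. The key structural observation is that the weak Hopf axioms are self-dual: multiplication and comultiplication swap roles, unit and counit swap roles, and the antipode passes through the pairing unchanged. Since $\mathcal{A}$ is finite-dimensional, $\mathcal{A}^*$ has the same dimension and there are no analytic subtleties in identifying $\mathcal{A}^{**} \cong \mathcal{A}$.

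First I would verify that $\mathcal{A}^*$ equipped with the product in (a) and unit $\epsilon$ forms an associative unital algebra. Associativity follows from coassociativity of $\Delta$: for $f,g,h \in \mathcal{A}^*$ and $x \in \mathcal{A}$,
\begin{equation*}
\langle (fg)h, x\rangle = \langle f \otimes g \otimes h, (\Delta\otimes\textnormal{id})\Delta(x)\rangle = \langle f(gh), x\rangle,
\end{equation*}
and the unit axiom for $\epsilon$ is a direct translation of Definition~\ref{def:weak_hopf_Cstar}(b). Dually, coassociativity of $\Delta_*$ and the counit property for the functional $f \mapsto \langle f, 1\rangle$ follow from associativity and unitality in $\mathcal{A}$. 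The compatibility axioms on $\mathcal{A}^*$---multiplicativity of $\Delta_*$, the weak counit condition, and the weak unit condition---are then obtained by pairing the analogous weak Hopf axioms of $\mathcal{A}$ with arbitrary test tensors in $\mathcal{A}^* \otimes \mathcal{A}^* \otimes \mathcal{A}^*$ and reading off the resulting equalities.

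Next I would verify the antipode axioms for $S_*$: after pairing both sides with $x \in \mathcal{A}$, rewriting $\langle S_*(f), x\rangle = \langle f, S(x)\rangle$, and unfolding $\Delta_*$ in Sweedler notation, both identities reduce to the antipode axioms of $\mathcal{A}$ applied to $x$, up to relabeling. The involution $*$ on $\mathcal{A}^*$ is antilinear by construction, involutive by virtue of the identity $S \circ * \circ S \circ * = \textnormal{id}$ valid in any $C^*$-weak Hopf algebra, and anti-multiplicative with respect to the dual product because $\Delta$ is a $*$-homomorphism and $S$ reverses the order of comultiplication.

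The main obstacle is the $C^*$-property $\|f^*f\| = \|f\|^2$. Purely algebraic duality yields only the involutive algebra structure, not a $C^*$-norm, so one must produce a faithful $*$-representation of $\mathcal{A}^*$ on a Hilbert space. The standard route is to invoke the existence of a faithful positive Haar functional $h$ on $\mathcal{A}$---guaranteed for finite-dimensional $C^*$-weak Hopf algebras---which equips $\mathcal{A}$ with an inner product $\langle x, y\rangle_h = h(x^* y)$; the dual action of $\mathcal{A}^*$ on $\mathcal{A}$ then realizes $\mathcal{A}^*$ as a $*$-subalgebra of bounded operators, and the resulting operator norm is the desired $C^*$-norm. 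This analytic part is the crux of the argument in Ref.~\cite{bohm1999weak}, and my proposal would cite rather than reconstruct it.
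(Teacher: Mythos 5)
The paper does not actually prove this lemma: it is stated as an imported result, attributed entirely to Ref.~\cite{bohm1999weak}, and no argument is given in either the main text or the supplement. So there is no in-paper proof to compare against; the relevant comparison is with the cited literature, and your sketch follows exactly that standard route. The algebraic part of your outline is sound: the weak Hopf axioms are self-dual under the evaluation pairing, associativity/coassociativity and unit/counit swap roles, the antipode axioms for $S_*$ reduce to those for $S$, and the dual involution is involutive and anti-multiplicative precisely because of the identity $S(x^*)^* = S^{-1}(x)$ (which the paper itself later quotes from the same reference) together with $\Delta$ being a $*$-map and $S$ an anti-coalgebra map. You also correctly isolate the genuinely nontrivial step, namely that the $*$-algebra $\mathcal{A}^*$ carries a $C^*$-norm, and the Haar-functional/GNS-type argument you indicate is indeed how this is handled in Ref.~\cite{bohm1999weak}; deferring that analytic crux to the citation puts your write-up at (slightly above) the level of rigor of the paper, which cites the entire statement. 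One small caution if you ever fill in the details: the claim that the dual action of $\mathcal{A}^*$ on $\mathcal{A}$ (with the Haar inner product) is a $*$-representation with respect to the involution defined via $S$ is itself a nontrivial computation using invariance properties of the Haar functional, so it belongs to the cited part rather than to the ``routine dualization'' part.
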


Next, we will consider representations of $C^*$-weak Hopf algebras. Recall that a representation of a finite-dimensional $C^*$ algebra $\mathcal{A}$ is a linear map $\phi : \mathcal{A} \to \mathcal{M}(\mathbb{C}^{D})$ such that
\[
\phi(xy) = \phi(x)\phi(y), \; \phi(x^*) = \phi(x)^\dagger, \; \text{and } \phi(1) = 1.
\]
We will denote by $\text{Irr}(\mathcal{A})$ the equivalence classes of irreducible representations of $\mathcal{A}$. Every $C^*$ algebra is semisimple, i.e., $\mathcal{A} \cong \bigoplus_{a\in \text{Irr}(\mathcal{A})} \mathcal{M}(\mathbb{C}^{D_a})$ for some $D_a$. Therefore, any representation of $\mathcal{A}$, $\phi$, permits a decomposition of the form
\[
\phi(x) \cong \bigoplus_{a \in \text{Irr}(\mathcal{A})} \phi_a(x) \otimes I^{\otimes m_a},
\]
where $m_a$ is the multiplicity of $\phi_a$ in $\phi$, and $\phi_a$ is a representative of the irrep class $a$. Since, by lemma \ref{lemma:dual_C*}, the dual $\mathcal{A}^*$ is also a $C^*$ algebra, all representations $\psi:\mathcal{A}^* \to \mathcal{M}(\mathbb{C}^D)$ also permit a representation
\[
\psi(f)\cong \bigoplus_{a \in \text{Irr}(\mathcal{A}^*)} \psi_a(x) \otimes I^{\otimes m_a},
\]
where $m_a$ is the multiplicity of $\psi_a$ in $\psi$ and $\psi_a$ is a representation of the irrep class $a$. With every irrep class of the dual algebra $\mathcal{A}^*$, $a \in \text{Irr}(\mathcal{A}^*)$, we can associate $\tau_a \in \mathcal{A}$ which satisfies
\[
\langle f, \tau_a \rangle = \text{Tr}(\psi_a(f)) \ \forall \ f \in \mathcal{A}^*.
\]
Note that $\tau_a$ is independent of the exact choice of the representative $\psi_a$. The elements $\{\tau_a\}_{a \in \text{Irr}(\mathcal{A}^*)}$, called the \emph{cocentral elements} of the algebra $\mathcal{A}$, will play an important role in our construction of MPUs in the next subsection.

Given representations of $\mathcal{A}^*$, we can build new representation using the coproduct and the antipode for $\mathcal{A}^*$:
\begin{enumerate}
    \item[(1)] Suppose $\psi_1:\mathcal{A}^* \to \mathcal{M}(\mathbb{C}^{D_1})$, $\psi_2 :\mathcal{A}^* \to \mathcal{M}(\mathbb{C}^{D_2})$ are two representations of $\mathcal{A}^*$. Consider $\psi:\mathcal{A}^* \to \text{End}(\mathbb{C}^{D_1})\otimes \text{End}(\mathbb{C}^{D_2})$, $\psi(f) = (\psi_1 \otimes \psi_2) \circ \Delta_*(f)$ --- note however that $\psi(f)$ is not a valid representation since $\psi(\epsilon) \neq I$. However, note that $\psi(\epsilon)^2 = \psi(\epsilon)$ and $\psi(\epsilon) = \psi(\epsilon)^\dagger$, so $\psi(\epsilon)$ is an orthogonal projector --- furthermore, $\forall f\in \mathcal{A}^*$, $\psi(f) \psi(\epsilon) = \psi(\epsilon)\psi(f) = \psi(f)$. Thus, we can restrict $\psi$ to be a map from $\mathcal{A}^*$ to the image of $\psi(\epsilon)$ --- this representation will be denoted by $\psi_1 \boxtimes \psi_2$. This allows us to obtain ``fusion rules" since for $a, b \in \text{Irr}(\mathcal{A}^*)$,
    \[
    \psi_a \boxtimes \psi_b = \bigoplus_{c\in \text{Irr}(\mathcal{A}^*)} \psi_c \otimes I^{\otimes N_{a, b}^c}.
    \]
    From this, we obtain that
    \begin{align*}
        \text{Tr}((\psi_a \boxtimes \psi_b) (f))&=\sum_{c\in \text{Irr}(\mathcal{A}^*)} N_{a, b}^c \text{Tr}(\psi_c(f)) \\
        &= \sum_{c\in \text{Irr}(\mathcal{A}^*)} N_{a, b}^c \langle f, \tau_c\rangle.
    \end{align*}
    However, we can also write
    \begin{align*}
        \langle f, \tau_a \tau_b\rangle = \langle \Delta_*(f), \tau_a \otimes \tau_b\rangle = \text{Tr}((\psi_a \boxtimes \psi_b)(f)),
    \end{align*}
    and thus we obtain that
    \begin{align}\label{eq:product_rules_cocentral}
        \tau_a \tau_b = \sum_{c \in \text{Irr}(\mathcal{A}^*)} N_{a, b}^c \tau_c.
    \end{align}
    \item[(2)] Suppose $\psi \in \mathcal{A}^* \to \mathcal{M}(\mathbb{C}^D)$ is a representation of $\mathcal{A}^*$, then consider $\bar{\psi}:\mathcal{A}^* \to \mathcal{M}(\mathbb{C}^D)$ defined via $\bar{\psi}(f) = \psi(S_*(f))^\text{T}$ then $\bar{\psi}$ is also a representation of $\mathcal{A}^*$. Furthermore, since $S$ is a bijection, $\forall a \in \text{Irr}(\mathcal{A}^*)$, $\bar{\psi}_a$ is a irreducible representation and consequently, $\bar{\psi}_a \cong \psi_{\bar{a}}$ for some $\bar{a} \in \text{Irr}(\mathcal{A}^*)$. We note that the algebra element $\tau_{\bar{a}} \in \mathcal{A}$, $\forall f\in \mathcal{A}^*$, satisfies
    \begin{align*}
    \langle f, \tau_{\bar{a}}\rangle &= \text{Tr}[\psi(S_*(f))^\text{T}]
    = \text{Tr}[\psi(S_*(f))] \\
    &= \langle S_*(f), \tau_a\rangle =\langle f, S(\tau_a)\rangle,
    \end{align*}
    and therefore $\tau_{\bar{a}} = S(\tau_a)$. Finally, using $S(x^*)^* = S^{-1}(x)$ [Ref.~\cite{bohm1999weak}], we obtain that
    \begin{align*}
    \langle f, \tau_{\bar{a}}^*\rangle &= \langle (f^*)^*, \tau_{\bar{a}}^*\rangle = \overline{\langle f^*, S(\tau_{\bar{a}}^*)^*\rangle}, \\
    &= \overline{\langle f^*, S^{-1}(\tau_{\bar{a}})\rangle}  
    = \overline{\langle f^*, \tau_a\rangle}, \\
    &= \overline{\text{Tr}(\psi_a(f^*))} = \overline{\text{Tr}(\psi_a(f)^\dagger)} = \langle f, \tau_a\rangle. 
    \end{align*}
    Therefore, we conclude that
    \begin{align}\label{eq:star_rule_concentral}
    \tau_{\bar{a}} = \tau_a^*.
    \end{align}
\end{enumerate}
Finally, we remark that Eqs.~\eqref{eq:product_rules_cocentral} and \eqref{eq:star_rule_concentral} imply that the set $\{\tau_a\}_{a\in \text{Irr}(\mathcal{A}^*)}$ is closed under product and $*$ operations. Consequently, $\mathcal{T} = \text{span}(\{\tau_a\}_{a\in \text{Irr}(\mathcal{A}^*)}) \subseteq \mathcal{A}$ is also a $C^*$ algebra. Furthermore, since $\mathcal{A}$ is, by assumption, finite-dimensional, so is $\mathcal{T}$. Therefore, $\mathcal{T}$ is a direct sum of matrix algebras and consequently it has an identity element $\tau_I = \sum_{a\in \text{Irr}(\mathcal{A}^*)} e_a \tau_a \in \mathcal{T}$ ($e_a \in \mathbb{C}$) which will play an important rule in the next section. We remark that this identity element could, in general, be different from the algebra identity $1$.

\subsection{Generating MPUs from $C^*$-Weak Hopf algebras}\label{sec:MPU_weak_hopf}
\noindent We begin by reviewing the procedure provided in Ref.~\cite{molnar2022matrix} for generating an MPO from a $C^*$-weak Hopf algebra.
\begin{lemma}[Ref.~\cite{molnar2022matrix}]\label{lemma:MPO_construction}
    Given a $C^*$-weak Hopf algebra $\mathcal{A}$ together with a representation of $\mathcal{A}$, $\phi:\mathcal{A} \to \mathcal{M}(\mathbb{C}^d)$, then for $x \in \mathcal{A}$, the operator $\textnormal{MPO}_\phi(x)$ defined by
    \[
    \textnormal{MPO}_\phi(x) = \phi^{\otimes n}\circ \Delta^{n - 1}(x),
    \]
    is a matrix product operator which can be expressed as
    \[
    \textnormal{MPO}_\phi(x) = 
    \begin{array}{c}
        \begin{tikzpicture}[scale=.5,baseline={([yshift=-0.75ex] current bounding box.center)}]
		      \foreach \x in {0,...,0}{
                \SingleTrRight{(0,0)}
        }
		      \foreach \x in {1,...,1}{
                \GTensor{(-\singledx*\x,0)}{1}{.5}{\small $A_\phi$}{0}
                \myarrow{-\singledx*\x,-.65}{1};
                \myarrow{-\singledx*\x,.85}{1};
        }
        		      \foreach \x in {3,...,3}{
                \GTensor{(-\singledx*\x,0)}{1}{.5}{\small $A_\phi$}{0}
                \myarrow{-\singledx*\x,-.65}{1};
                \myarrow{-\singledx*\x,.85}{1};
        }
		      \foreach \x in {2,...,2}{
                \SingleDots{-\doubledx*\x-.5,0}{\doubledx*.8}
        }
		      \foreach \x in {4,...,4}{
                \GTensor{(-\singledx*\x,0)}{1}{.5}{\small $A_\phi$}{0}
                \myarrow{-\singledx*\x,-.65}{1};
                \myarrow{-\singledx*\x,.85}{1};
        }
		      \foreach \x in {5,...,5}{
                \bTensor{-\singledx*\x,0}{\small $b_x$}
        }
		      \foreach \x in {6,...,6}{
                \SingleTrLeft{(-\singledx*\x+.3,0)}
        }
        \end{tikzpicture}
        \end{array}
    \]
    where 
    \[
    \begin{array}{c}
        \begin{tikzpicture}[scale=.5,baseline={([yshift=-0.75ex] current bounding box.center)}]
		      \foreach \x in {1,...,1}{
                \GTensor{(-\singledx*\x,0)}{1}{.5}{\small $A_\phi$}{0}
                \myarrow{-\singledx*\x,-.65}{1};
                \myarrow{-\singledx*\x,.85}{1};
        }
        \end{tikzpicture}
        \end{array} = \sum_\alpha \phi(e_\alpha) \otimes \psi(e^\alpha)
    \]
    with $\psi : \mathcal{A}^* \to \mathcal{M}(\mathbb{C}^D)$ being a representation of $\mathcal{A}^*$ such that there exists a matrix $b_x \in \mathcal{M}(\mathbb{C}^d)$ satisfying $\textnormal{Tr}[\psi(f) b_x] = f(x) \ \forall \ f \in \mathcal{A}^*$.
\end{lemma}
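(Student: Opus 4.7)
The plan is to expand both sides of the claimed identity in a dual pair of bases for $\mathcal{A}$ and $\mathcal{A}^*$, and to reduce the equality to the coalgebra--algebra duality of \cref{lemma:dual_C*} combined with the fact that $\psi$ is a representation of $\mathcal{A}^*$.

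First, I would fix a linear basis $\{e_\alpha\}$ of $\mathcal{A}$ with dual basis $\{e^\alpha\}\subset\mathcal{A}^*$, so that $\langle e^\alpha,e_\beta\rangle=\delta_{\alpha\beta}$. Contracting $n$ copies of the proposed bulk tensor $A_\phi=\sum_\alpha \phi(e_\alpha)\otimes\psi(e^\alpha)$ along the bond indices and closing the chain by tracing against $b_x$ yields
\begin{equation*}
\sum_{\alpha_1,\dots,\alpha_n}\phi(e_{\alpha_1})\otimes\cdots\otimes\phi(e_{\alpha_n})\,\Tr\!\left[\psi(e^{\alpha_1})\cdots\psi(e^{\alpha_n})\,b_x\right].
\end{equation*}
Because $\psi$ is a representation of $\mathcal{A}^*$, the product of matrices collapses to $\psi(e^{\alpha_1}\cdots e^{\alpha_n})$, and the hypothesis $\Tr[\psi(f)\,b_x]=\langle f,x\rangle$ then replaces the trace by the scalar $\langle e^{\alpha_1}\cdots e^{\alpha_n},\,x\rangle$.

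Second, I would invoke the coproduct--multiplication duality $\langle fg,y\rangle=\langle f\otimes g,\Delta(y)\rangle$ of \cref{lemma:dual_C*} and iterate it $n-1$ times, using coassociativity of $\Delta$, to obtain $\langle e^{\alpha_1}\cdots e^{\alpha_n},\,x\rangle=\langle e^{\alpha_1}\otimes\cdots\otimes e^{\alpha_n},\,\Delta^{n-1}(x)\rangle$. Substituting back, the tensor-network contraction becomes
\begin{equation*}
\sum_{\alpha_1,\dots,\alpha_n}\langle e^{\alpha_1}\otimes\cdots\otimes e^{\alpha_n},\,\Delta^{n-1}(x)\rangle\;\phi(e_{\alpha_1})\otimes\cdots\otimes\phi(e_{\alpha_n}),
\end{equation*}
which, applying the dual-basis expansion $y=\sum_\alpha \langle e^\alpha,y\rangle\,e_\alpha$ to each tensor factor of $\Delta^{n-1}(x)\in\mathcal{A}^{\otimes n}$ and then pushing through $\phi^{\otimes n}$, is exactly $\phi^{\otimes n}\!\left(\Delta^{n-1}(x)\right)$. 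This identifies the tensor-network expression with the MPO $\mathrm{MPO}_\phi(x)$ and completes the argument.

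The only step requiring genuine care is the iterated duality: verifying $\langle f_1\cdots f_n,\,x\rangle=\langle f_1\otimes\cdots\otimes f_n,\,\Delta^{n-1}(x)\rangle$ is a short induction whose inductive step relies crucially on coassociativity, so that the pairing is independent of the order in which intermediate coproducts are taken. Once this identity is in hand, the rest reduces to bookkeeping in the fixed dual basis and to the observation that $\psi$ respects algebra multiplication. The existence of a suitable $b_x$ is built into the hypothesis of the lemma; a canonical sufficient choice is to take $\psi$ to contain the left regular representation of $\mathcal{A}^*$, for which the trace pairing on the representation space is nondegenerate and an explicit $b_x$ can be written down.
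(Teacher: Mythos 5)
Your proposal is correct and follows essentially the same route as the paper's proof: expand the contraction in a dual pair of bases, use the representation property of $\psi$ and the defining property of $b_x$ to turn the bond trace into $\langle e^{\alpha_1}\cdots e^{\alpha_n},x\rangle$, convert this via the (coassociative) multiplication--coproduct duality into a pairing with $\Delta^{n-1}(x)$, and resolve the identity with the dual basis to recover $\phi^{\otimes n}\circ\Delta^{n-1}(x)$. The only difference is presentational: you spell out the iterated duality induction explicitly, whereas the paper absorbs it into the single line $(e^{\alpha_1}\cdots e^{\alpha_n})(x)=(e^{\alpha_1}\otimes\cdots\otimes e^{\alpha_n})\circ\Delta^{n-1}(x)$.
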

\begin{proof}
    We will denote by $\{e_\alpha\}$ a basis for $\mathcal{A}$ and $\{e^\alpha\}$ a basis for $\mathcal{A}^*$, where the two sets are chosen such that $e^\alpha(e_\beta) = \delta_{\alpha, \beta}$. Explicitly evaluating the tensor contraction, we obtain that
    \begin{align*}
    &\begin{array}{c}
        \begin{tikzpicture}[scale=.5,baseline={([yshift=-0.75ex] current bounding box.center)}]
		      \foreach \x in {0,...,0}{
                \SingleTrRight{(0,0)}
        }
		      \foreach \x in {1,...,1}{
                \GTensor{(-\singledx*\x,0)}{1}{.5}{\small $A_\phi$}{0}
                \myarrow{-\singledx*\x,-.65}{1};
                \myarrow{-\singledx*\x,.85}{1};
        }
        		      \foreach \x in {3,...,3}{
                \GTensor{(-\singledx*\x,0)}{1}{.5}{\small $A_\phi$}{0}
                \myarrow{-\singledx*\x,-.65}{1};
                \myarrow{-\singledx*\x,.85}{1};
        }
		      \foreach \x in {2,...,2}{
                \SingleDots{-\doubledx*\x-.5,0}{\doubledx*.8}
        }
		      \foreach \x in {4,...,4}{
                \GTensor{(-\singledx*\x,0)}{1}{.5}{\small $A_\phi$}{0}
                \myarrow{-\singledx*\x,-.65}{1};
                \myarrow{-\singledx*\x,.85}{1};
        }
		      \foreach \x in {5,...,5}{
                \bTensor{-\singledx*\x,0}{\small $b_x$}
        }
		      \foreach \x in {6,...,6}{
                \SingleTrLeft{(-\singledx*\x+.3,0)}
        }
        \end{tikzpicture}
        \end{array} 
        \\ &= \phi^{\otimes n}  \sum_{\alpha_1,\alpha_2  \dots, \alpha_n} \big(e_{\alpha_1} \otimes e_{\alpha_2}  \otimes \dots e_{\alpha_n})\times \nonumber\\
        &\qquad \qquad \qquad \qquad \text{Tr}[b_x\psi(e^{\alpha_1}e^{\alpha_2}\dots e^{\alpha_n})].
    \end{align*}
    Furthermore, by choice of $b_x$, 
    \begin{align*}
    \text{Tr}[b_x \psi(e^{\alpha_1}e^{\alpha_2} \dots e^{\alpha_n}) ] &= (e^{\alpha_1}e^{\alpha_2} \dots e^{\alpha_n})(x) \\
    &= (e^{\alpha_1} \otimes e^{\alpha_2} \dots e^{\alpha_n})\circ \Delta^{n - 1}(x),
    \end{align*}
    and consequently
    \begin{align*}
    &\begin{array}{c}
        \begin{tikzpicture}[scale=.5,baseline={([yshift=-0.75ex] current bounding box.center)}]
		      \foreach \x in {0,...,0}{
                \SingleTrRight{(0,0)}
        }
		      \foreach \x in {1,...,1}{
                \GTensor{(-\singledx*\x,0)}{1}{.5}{\small $A_\phi$}{0}
                \myarrow{-\singledx*\x,-.65}{1};
                \myarrow{-\singledx*\x,.85}{1};
        }
        		      \foreach \x in {3,...,3}{
                \GTensor{(-\singledx*\x,0)}{1}{.5}{\small $A_\phi$}{0}
                \myarrow{-\singledx*\x,-.65}{1};
                \myarrow{-\singledx*\x,.85}{1};
        }
		      \foreach \x in {2,...,2}{
                \SingleDots{-\doubledx*\x-.5,0}{\doubledx*.8}
        }
		      \foreach \x in {4,...,4}{
                \GTensor{(-\singledx*\x,0)}{1}{.5}{\small $A_\phi$}{0}
                \myarrow{-\singledx*\x,-.65}{1};
                \myarrow{-\singledx*\x,.85}{1};
        }
		      \foreach \x in {5,...,5}{
                \bTensor{-\singledx*\x,0}{\small $b_x$}
        }
		      \foreach \x in {6,...,6}{
                \SingleTrLeft{(-\singledx*\x+.3,0)}
        }
        \end{tikzpicture}
        \end{array} \\
        &= \phi^{\otimes n} \bigg(\sum_{\alpha_1, \dots, \alpha_n} e_{\alpha_1} e^{\alpha_1}\otimes e_{\alpha_2} e^{\alpha_2} \dots e_{\alpha_n} e^{\alpha_n} \bigg) \circ \Delta^{n - 1}(x).
    \end{align*}
    Using this together with fact that $\sum_\alpha e_\alpha e^\alpha = 1$ establishes the lemma statement.
\end{proof}
\noindent Of special interest are the MPOs generated by the cocentral elements $\tau_a \in \mathcal{A}$ corresponding to the Irrep classes of $\mathcal{A}^*$.
\begin{lemma}\label{lemma:mpo_irrep}
    If $\phi :\mathcal{A} \in \textnormal{End}(\mathbb{C}^d)$ is an injective representation, the matrix product operators $O_a = \textnormal{MPO}_\phi(\tau_a)$ have the following properties:
    \begin{enumerate}
        \item[(a)] The MPO $O_a$ can be represented as
        \[
         O_a = \begin{array}{c}
        \begin{tikzpicture}[scale=.5,baseline={([yshift=-0.75ex] current bounding box.center)}]
		      \foreach \x in {0,...,0}{
                \SingleTrRight{(0,0)}
        }
		      \foreach \x in {1,...,1}{
                \GTensor{(-\singledx*\x,0)}{1}{.5}{\small $A_a$}{0}
                \myarrow{-\singledx*\x,-.65}{1};
                \myarrow{-\singledx*\x,.85}{1};
        }
        		      \foreach \x in {3,...,3}{
                \GTensor{(-\singledx*\x,0)}{1}{.5}{\small $A_a$}{0}
                \myarrow{-\singledx*\x,-.65}{1};
                \myarrow{-\singledx*\x,.85}{1};
        }
		      \foreach \x in {2,...,2}{
                \SingleDots{-\doubledx*\x-.5,0}{\doubledx*.8}
        }
		      \foreach \x in {4,...,4}{
                \GTensor{(-\singledx*\x,0)}{1}{.5}{\small $A_a$}{0}
                \myarrow{-\singledx*\x,-.65}{1};
                \myarrow{-\singledx*\x,.85}{1};
        }
		      \foreach \x in {5,...,5}{
                \SingleTrLeft{(-\singledx*\x+.3,0)}
        }
        \end{tikzpicture}
        \end{array},
        \]
        where $A_a$ are injective tensors and $\{O_a\}_{a \in \textnormal{Irr}(\mathcal{A}^*)}$ generate linearly independent MPOs for sufficiently large $N$.
        \item[(b)] The MPOs $O_a$ satisfy
        \[
        O_a O_b = \sum_c N_{a, b}^c O_c \text{ and } O_a^\dagger = O_{\bar{a}}.
        \]
    \end{enumerate}
\end{lemma}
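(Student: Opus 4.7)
The plan is to establish both parts by exploiting the algebraic structure of the coproduct $\Delta$, the antipode $S$, and the representation $\phi$, together with \cref{lemma:MPO_construction} and the relations \eqref{eq:product_rules_cocentral} and \eqref{eq:star_rule_concentral}.

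For part (a), I would first specialize \cref{lemma:MPO_construction} to $x=\tau_a$. The defining property of the cocentral element, $\langle f,\tau_a\rangle=\mathrm{Tr}(\psi_a(f))$, shows that taking $\psi=\psi_a$ (the irrep of $\mathcal{A}^*$ corresponding to $a$) and trivial boundary $b_{\tau_a}=\1$ meets the required condition $\mathrm{Tr}[\psi(f) b_{\tau_a}]=f(\tau_a)$. This immediately produces a uniform-bulk MPO with bulk tensor $A_a=\sum_\alpha \phi(e_\alpha)\otimes \psi_a(e^\alpha)$ and trivial boundary conditions. Injectivity of $A_a$ follows by showing that after blocking sufficiently many sites, the resulting tensor spans a subspace corresponding to the full bond operator algebra $\mathcal{M}(\mathbb{C}^{D_a})$: injectivity of $\phi$ guarantees that $\phi^{\otimes n}$ separates elements of $\mathcal{A}^{\otimes n}$, while irreducibility of $\psi_a$ together with Burnside/Jacobson density ensures that products $\psi_a(e^{\alpha_1})\psi_a(e^{\alpha_2})\cdots\psi_a(e^{\alpha_n})$ span $\mathcal{M}(\mathbb{C}^{D_a})$ for $n$ large. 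For linear independence of $\{O_a\}$ at sufficiently large $N$, I would argue that any relation $\sum_a c_a O_a=0$ is equivalent to $\phi^{\otimes N}\!\circ\!\Delta^{N-1}\bigl(\sum_a c_a\tau_a\bigr)=0$; for $N$ large enough this map is injective on $\mathrm{span}\{\tau_a\}$ (essentially the same blocking argument), and since characters of inequivalent irreducible representations of $\mathcal{A}^*$ are linearly independent, the $\tau_a$'s themselves are linearly independent in $\mathcal{A}$, forcing $c_a=0$.

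For part (b), the product rule is a direct consequence of $\Delta$ being an algebra homomorphism (Def.~\ref{def:weak_hopf_Cstar}(c)) and $\phi$ being a $*$-representation, so that $\phi^{\otimes N}\!\circ\!\Delta^{N-1}:\mathcal{A}\to \mathcal{M}(\mathbb{C}^d)^{\otimes N}$ is an algebra homomorphism. Applying it to the identity $\tau_a\tau_b=\sum_c N_{a,b}^c\tau_c$ from Eq.~\eqref{eq:product_rules_cocentral} yields
\begin{equation*}
O_a O_b=\mathrm{MPO}_\phi(\tau_a\tau_b)=\sum_c N_{a,b}^c\, O_c.
\end{equation*}
For the involution, the same map preserves the $*$-structure because both $\Delta$ and $\phi$ do, giving $\mathrm{MPO}_\phi(x^*)=\mathrm{MPO}_\phi(x)^\dagger$. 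Combined with Eq.~\eqref{eq:star_rule_concentral}, this yields $O_a^\dagger=\mathrm{MPO}_\phi(\tau_a^*)=\mathrm{MPO}_\phi(\tau_{\bar a})=O_{\bar a}$.

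The main obstacle is the injectivity statement in (a): the algebraic identities in (b) are essentially formal consequences of the weak Hopf axioms, but injectivity of $A_a$ and the linear independence of $\{O_a\}$ require a careful blocking argument that combines faithfulness of $\phi$ on $\mathcal{A}$ with Burnside density for $\psi_a$ on $\mathcal{A}^*$. This is precisely where the hypothesis that $\phi$ is an injective representation enters in a nontrivial way, and handling the interplay between the two dual algebras in the tensor $A_a=\sum_\alpha \phi(e_\alpha)\otimes\psi_a(e^\alpha)$ is the delicate step.
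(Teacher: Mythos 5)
Your proposal is correct in substance and, for the construction of $A_a$ (choosing $\psi=\psi_a$, $b_{\tau_a}=\1$ in \cref{lemma:MPO_construction}) and for part (b) (pushing Eqs.~\eqref{eq:product_rules_cocentral} and \eqref{eq:star_rule_concentral} through the $*$-homomorphism $\phi^{\otimes N}\circ\Delta^{N-1}$), it coincides with the paper's proof. Where you genuinely diverge is in part (a). For injectivity of $A_a$ you block $n$ sites and invoke density of products $\psi_a(e^{\alpha_1})\cdots\psi_a(e^{\alpha_n})$; the paper instead proves \emph{single-site} injectivity, noting that irreducibility already gives $\mathrm{span}\{\psi_a(e^\alpha)\}_\alpha=\psi_a(\mathcal{A}^*)=\mathcal{M}(\mathbb{C}^{D_a})$ (no products or blocking needed), while injectivity of $\phi$ makes $\{\phi(e_\alpha)\}$ linearly independent so that $X\mapsto\sum_\alpha \mathrm{Tr}[X\psi_a(e^\alpha)]\,\phi(e_\alpha)$ is injective. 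Your blocked version establishes only normality of $A_a$; since the lemma (and its later use in the block-diagonal form behind \cref{assumption}) asserts injective tensors, you should tighten this to the one-site statement, which costs nothing. For linear independence of $\{O_a\}$ you argue via injectivity of $\mathrm{MPO}_\phi$ on $\mathrm{span}\{\tau_a\}$ plus linear independence of the irreducible characters of the semisimple algebra $\mathcal{A}^*$; this is a genuinely different route from the paper, which instead shows the tensors $A_a$, $A_b$ are not related by any bond-space similarity (again using linear independence of $\{\phi(e_\alpha)\}$) and then appeals to the standard theory of inequivalent normal tensors generating independent MPOs for large $N$. Your route is arguably more self-contained — in fact $\phi^{\otimes N}\circ\Delta^{N-1}$ is injective on all of $\mathcal{A}$ for every $N\ge 1$, since $(\epsilon\otimes\cdots\otimes\epsilon\otimes\mathrm{id})\circ\Delta^{N-1}=\mathrm{id}$ by the counit axiom and $\phi^{\otimes N}$ is injective, so no "large $N$" or blocking is needed; do replace your vague appeal to "the same blocking argument" by this observation. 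The paper's route, while leaning on the fundamental theorem for normal tensors, buys the stronger structural fact that the $A_a$ are pairwise gauge-inequivalent, which is what is actually reused later when verifying \cref{assumption} for the block-diagonal MPU.
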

\begin{proof}
    (a) To construct $O_a = \text{MPO}_\phi(\tau_a)$, we choose $\psi \cong \psi_a$ in lemma \ref{lemma:MPO_construction} and since $\forall f \in \mathcal{A}^*: f(\tau_a) = \text{Tr}(\psi(f))$, this allows us to choose $b(\tau_a) = \1$. Thus $O_a$ is an MPO with periodic boundary conditions generated by the tensor $A_a$ given by
    \[
    \begin{array}{c}
        \begin{tikzpicture}[scale=.5,baseline={([yshift=-0.75ex] current bounding box.center)}]
              \foreach \x in {1,...,1}{
                \GTensor{(-\singledx*\x,0)}{1}{.5}{\small $A_a$}{0}
                \myarrow{-\singledx*\x,-.65}{1};
                \myarrow{-\singledx*\x,.85}{1};
        }
        \end{tikzpicture}
    \end{array} = \sum_\alpha \phi(e_\alpha) \otimes \psi_a(e^\alpha)
    \]
    Next, to show that $A_a$, seen as a map from $\mathbb{C}^{D_a} \otimes \mathbb{C}^{D_a} \to \mathbb{C}^d \otimes \mathbb{C}^d$, is an injective tensor, we note that for $X \in \mathbb{C}^{D_a \times D_a}$, suppose $Y$ is given by
    \[
    \begin{array}{c}
        Y := \begin{tikzpicture}[scale=.5,baseline={([yshift=-0.75ex] current bounding box.center)}]
		      \foreach \x in {0,...,0}{
                \SingleTrRight{(0,0)}
        }
		      \foreach \x in {1,...,1}{
                \GTensor{(-\singledx*\x,0)}{1}{.5}{\small $A_a$}{0}
                \myarrow{-\singledx*\x,-.65}{1};
                \myarrow{-\singledx*\x,.85}{1};
        }
		      \foreach \x in {2,...,2}{
                \bTensor{-\singledx*\x,0}{\small $X$}
        }
		      \foreach \x in {3,...,3}{
                \SingleTrLeft{(-\singledx*\x+.3,0)}
        }
        \end{tikzpicture}
        \end{array} = \sum_\alpha \text{Tr}[X \psi_a(e^{\alpha})] \phi(e_\alpha).
    \]
    Note that since $\phi$ is injective, the operators $\{\phi(e_\alpha)\}_\alpha$ are linearly independent and thus given $Y$, we can uniquely obtain $\text{Tr}[X\psi_a(e^\alpha)]$. Furthermore, since $\psi_a$ is an irreducible representation, $\{\psi_a(e^\alpha)\}_\alpha$ are a complete, although not necessarily linearly independent, basis. Hence $\text{Tr}[X\psi_a(e^\alpha)]$ unqiuely determines $X$.

    Next, we show that $\{O_a\}_{a \in \text{Irr}(\mathcal{A}^*)}$ are linearly independent for sufficiently large $N$ --- for this, it is sufficient to show that the tensors $\{A_a\}_{a \in \text{Irr}(\mathcal{A}^*)}$ are not related to each other by a similarity transformation on the bond-space. To show this, let us assume that this is not the case, i.e., $\exists X$ such that for $a \neq b$,
    \[
    \begin{array}{c}
     \begin{tikzpicture}[scale=.5,baseline={([yshift=-0.75ex] current bounding box.center)}]
		      \foreach \x in {0,...,0}{
                \GTensor{(-\singledx*\x,0)}{1}{.5}{\small $A_a$}{0}
                \myarrow{-\singledx*\x,-.65}{1};
                \myarrow{-\singledx*\x,.85}{1};
        }
		      \foreach \x in {1,...,1}{
                \bTensor{-\singledx*\x,0}{\small $X$}
        }
                \foreach \x in {-1,...,-1}{
                \bTensor{-\singledx*\x,0}{\scriptsize $X^{\!-\!1}$}
        }
        \end{tikzpicture} =  \begin{tikzpicture}[scale=.5,baseline={([yshift=-0.75ex] current bounding box.center)}]
		      \foreach \x in {0,...,0}{
                \GTensor{(-\singledx*\x,0)}{1}{.5}{\small $A_b$}{0}
                \myarrow{-\singledx*\x,-.65}{1};
                \myarrow{-\singledx*\x,.85}{1};
        }
        \end{tikzpicture}
        \end{array}
        \;.
    \]
    Again, using the fact that $\phi$ is injective and thus $\{\phi(e_\alpha)\}_{\alpha}$ are linearly independent, we obtain that this implies $X \psi_a(e^\alpha) X^{-1} = \psi_b(e^\alpha) \implies \forall f \in \mathcal{A}^*: X \psi_a(f) X^{-1} = \psi_b(f)$ which contradicts the fact that $\psi_a$ and $\psi_b$ belong to different Irrep classes.

    (b) follows directly from the fact that $O_a = \phi^{\otimes n} \circ \Delta^{n - 1}(\tau_a)$ and thus
    \begin{align*}
    O_a O_b &= (\phi^{\otimes n} \circ \Delta^{n - 1}(\tau_a)) (\phi^{\otimes n} \circ \Delta^{n - 1}(\tau_a)) \\
    &= \phi^{\otimes n} \circ \Delta^{n - 1}(\tau_a \tau_b) = \sum_c N_{a, b}^c \phi^{\otimes n} \circ \Delta^{n - 1}(\tau_c) \\
    &= \sum_c N_{a, b}^c O_c.
    \end{align*}
    Similarly, 
    \begin{align*}
    O_a^\dagger &= (\phi^{\otimes n} \circ \Delta^{n- 1}(\tau_a))^\dagger \\
    &= \phi^{\otimes n}\circ \Delta^{n- 1}(\tau_a^*) = \phi^{\otimes n}\circ \Delta^{n- 1}(\tau_{\bar{a}}) = O_{\bar{a}}.
    \end{align*}
\end{proof}
\noindent Finally, we construct MPUs from $C^*$-weak Hopf algebras. These correspond to unitary elements in the $*$-algebra $\text{span}(\{\tau_a\}_{a\in \text{Irr}(\mathcal{A}^*)})$.
\begin{proposition}\label{prop:mpu_final_hopf}
    Consider the $C^*$-algebra $\mathcal{T} =\textnormal{span}(\{\tau_a\}_{a\in \mathcal{A}^*})$. Suppose $\tau_I = \sum_a e_a \tau_a$ ($e_a \in \mathbb{C}$) is the identity element of $\mathcal{T}$ and $u = \sum_a u_a \tau_a \in \mathcal{T}$ ($u_a \in \mathbb{C}$) is an element satisfying $u^* u =  uu^* =  \tau_I$. Then for any injective representation $\phi:\mathcal{A} \to \mathcal{M}(\mathbb{C}^d)$, 
    \[
    U = \1^{\otimes N}+ \sum_{a \in \textnormal{Irr}(\mathcal{A}^*)} (u_a - e_a) \textnormal{MPO}_\phi(\tau_a),
    \]
    is an MPU with physical dimension $d$, with a translationally invariant bulk tensor $A$ and boundary tensor $b$ such that $A = \bigoplus_{k} A_k$ and $b = \bigoplus_k b_k$ where $A_k$ are injective and independent tensors and $b_k$ are full rank matrices.
\end{proposition}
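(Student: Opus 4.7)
The plan is to recognize that the map $x \mapsto \mathrm{MPO}_\phi(x) := \phi^{\otimes N}\circ \Delta^{N-1}(x)$ is a $*$-algebra homomorphism from $\mathcal{A}$ into $\mathcal{M}(\mathbb{C}^{d^N})$, reduce unitarity of $U$ to the algebraic relation $u^* u = u u^* = \tau_I$ inside $\mathcal{T}$, and then extract the claimed block-diagonal MPO form by combining the injective tensors of \cref{lemma:mpo_irrep} with a trivial bond-dimension-one block that realizes $\1^{\otimes N}$.

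First I would verify that $\mathrm{MPO}_\phi(xy) = \mathrm{MPO}_\phi(x)\mathrm{MPO}_\phi(y)$ and $\mathrm{MPO}_\phi(x^*) = \mathrm{MPO}_\phi(x)^\dagger$; both follow directly from $\Delta$ being a $*$-algebra homomorphism (property (c) of \cref{def:weak_hopf_Cstar}) together with $\phi$ being a $*$-representation. Setting $\Pi := \mathrm{MPO}_\phi(\tau_I) = \sum_a e_a O_a$ and using that $\tau_I$ is the identity of the $*$-subalgebra $\mathcal{T}$, one obtains $\tau_I^2 = \tau_I$ and $\tau_I^* = \tau_I$, so $\Pi$ is a Hermitian projector. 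Because $u, u^* \in \mathcal{T}$, one also has $\tau_I u = u \tau_I = u$, hence $\Pi\,\mathrm{MPO}_\phi(u) = \mathrm{MPO}_\phi(u)\,\Pi = \mathrm{MPO}_\phi(u)$, and analogously for $u^*$.

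Next I would regroup the defining expression as $U = (\1^{\otimes N} - \Pi) + \mathrm{MPO}_\phi(u)$, observing that the two summands are mutually annihilating: $(\1 - \Pi)\mathrm{MPO}_\phi(u) = \mathrm{MPO}_\phi(u)(\1 - \Pi) = 0$, with the analogous identity for $\mathrm{MPO}_\phi(u^*)$. Unitarity then reduces to a single line,
\begin{align}
U^\dagger U = (\1 - \Pi) + \mathrm{MPO}_\phi(u^* u) = (\1 - \Pi) + \mathrm{MPO}_\phi(\tau_I) = \1,
\end{align}
and $U U^\dagger = \1$ follows symmetrically from $u u^* = \tau_I$.

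Finally, to extract the stated MPO form I would realize $\1^{\otimes N}$ as a bond-one block with tensor $A_0^{ij} = \delta_{ij}$ and boundary $b_0 = 1$, and combine it with the injective blocks $A_a$ (and trivial boundaries $\1_{D_a}$) provided by \cref{lemma:mpo_irrep}. The resulting block-diagonal tensors
\begin{align}
A = A_0 \oplus \bigoplus_{a:\, u_a \ne e_a} A_a, \quad b = 1 \oplus \bigoplus_{a:\, u_a \ne e_a} (u_a - e_a)\, \1_{D_a}
\end{align}
reproduce $U$, with each surviving $b_k$ full rank by construction and each $A_k$ injective (trivially for $A_0$, and by \cref{lemma:mpo_irrep} for $A_a$). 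The main obstacle I anticipate is the independence of the blocks in full generality: independence among the $\{A_a\}_{a \in \mathrm{Irr}(\mathcal{A}^*)}$ is \cref{lemma:mpo_irrep}(a), but one must separately exclude accidental coincidences between the trivial block $A_0$ and any $1$-dimensional $A_a$ that may appear. This edge case can be handled by absorbing such a block into $A_0$ via a suitable shift of the boundary coefficient, using that the corresponding MPOs are linearly independent whenever $1 \in \mathcal{A}$ does not lie in the span of the relevant cocentral elements.
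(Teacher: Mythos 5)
Your proposal is correct and follows essentially the same route as the paper: you set $P=\mathrm{MPO}_\phi(\tau_I)$ and $V=\mathrm{MPO}_\phi(u)$, use that $P$ is an orthogonal projector with $VP=PV=V$ and $V^\dagger V=VV^\dagger=P$ so that $U=(\1^{\otimes N}-P)+V$ is unitary, and then read off the block-diagonal tensors from \cref{lemma:mpo_irrep}, exactly as the paper does. Your extra care about a possible coincidence between the trivial bond-dimension-one block and some one-dimensional $A_a$ (which can occur, e.g., for group algebras where $O_e=\1^{\otimes N}$) addresses an edge case the paper passes over silently, and your absorb-and-drop fix handles it correctly.
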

\begin{proof}
We define 
\begin{align*}
P &= \sum_{a \in \text{Irr}(\mathcal{A}^*)} e_a O_a = \phi^{\otimes n} \circ \Delta^{n - 1}(\tau_I)  \\
V &= \sum_{a \in \text{Irr}(\mathcal{A}^*)} u_a O_a = \phi^{\otimes n} \circ \Delta^{n - 1}(u).
\end{align*}
We then note that since $\tau_I^2 = \tau_I$ and $\tau_I^\dagger = \tau_I$, $P^2 = P$ and $P = P^\dagger$, so $P$ is an orthogonal projector. Furthermore, since $u \tau_I = \tau_I u = u$ and $u^*u = uu^* = \tau_I$, $V P = P V = V$ and $V^\dagger V = VV^\dagger = P$, i.e., $V$ is a unitary when restricted to the image of $P$. Suppose $Q = \1^{\otimes N} - P$,  it then immediately follows that $U = Q +  V$ is a unitary.

Finally, $U$ can be explicitly represented with
\begin{align*}
A^{ij} &= \delta_{ij} \oplus \bigg(\bigoplus_{a\in \text{Irr}(\mathcal{A}^*)} A_a^{ij}\bigg) \\
b &= 1 \oplus \bigg(\bigoplus_{a\in \text{Irr}(\mathcal{A}^*)} (e^{i\alpha} u_a - e_a) I_{D_a}\bigg).
\end{align*}
We remark that it is possible that, depending on the choice of $u$, $ u_a - e_a = 0$ for some $a \in \text{Irr}(\mathcal{A}^*)$ --- in that case, we can neglect this block in the representation of $U$.
\end{proof}

An explicit method for constructing unitary elements $u$ would be to construct elements that correspond to projectors in the algebra $\mathcal{T} = \text{span}(\{\tau_a\}_{a\in \text{Irr}(\mathcal{A}^*)})$. Abstractly, since $\mathcal{T}$ is a finite-dimensional $C^*$ algebra, it is isomorphic to a direct sum of matrix algebras, i.e., $\mathcal{T} \cong \bigoplus_{i = 1}^m \mathcal{M}(\mathbb{C}^{d_i})$. Consider the elements $p_1, p_2\dots p_i \in \mathcal{T}$ defined by the correspondence 
\[
p_i \cong 0 \oplus  \dots \oplus \1_{d_i} \oplus\dots \oplus  0.
\]
Then, $\{p_i\}_{i \in \{1, 2\dots m\}}$ are a complete set of orthogonal projectors in $\mathcal{T}$ i.e.~$p_i^* = p_i$, $p_i p_j = p_j p_i = \delta_{i, j} p_i$ and $\sum_{i = 1}^m p_i =\tau_I$. Practically, we can explicitly determine $p_i$ by the following process: assume the ansatz $p_i = \sum_{a \in \text{Irr}(\mathcal{A}^*)} P_{i, a}\tau_a$ for some $P_{i, a} \in \mathbb{C}$ and imposing the conditions $p_i = p_i^*$ and $p_{i}^2 = p_i$ which, together with the rules in Eqs.~\eqref{eq:product_rules_cocentral} and \eqref{eq:star_rule_concentral} yield
\begin{align}
    P_{i, a}^* = P_{i, \bar{a}} \text{ and } P_{i, c}\delta_{i, j} = \sum_{a, b \in \text{Irr}(\mathcal{A}^*)}N_{a, b}^c P_{i, a} P_{j, b}.
\end{align}
Solving these equations allows us to explicitly construct the elements $p_1, p_2\dots p_m$. Once the elements $p_1, p_2 \dots p_m$ are constructed, we can then construct certain unitary elements $u$ via
\[
u = \sum_{i = 1}^{m}e^{i\phi_i} p_i,
\]
which can be interpreted as a unitary element that applies different phases on the orthogonal subspaces corresponding to the projectors $p_i$.

\subsection{Example: MPUs from Lee-Yang Weak Hopf Algebra}
As a concrete example, we consider the Lee-Yang algebra $\mathcal{A} = \mathcal{M}(\mathbb{C}^2) \oplus \mathcal{M}(\mathbb{C}^3)$. We will denote the standard basis for this algebra by $e_\alpha$ where the index $\alpha =(k; a, b)$ with $k \in \{1, 2\}$ and
\begin{align*}
&e_{(1; a, b)} = \ket{a}\!\bra{b} \oplus 0 \text{ for }a, b \in \{1, 2\} \text{, and }\\
&e_{(2; a, b)} = 0\oplus \ket{a}\!\bra{b} \text{ for }a, b \in \{1, 2, 3\}.
\end{align*}
The product and $*$ operation on $e_{k; a, b}$ is defined in the usual way for matrices:
\begin{align}\label{eq:primal_product_star}
e_{(k;a, b)} e_{(k'; a', b')} = \delta_{k, k'}\delta_{a', b}e_{(k; a, b')} \text{ and }e_{(k; a, b)}^* = e_{(k; b, a)}.
\end{align}
The coproduct $\Delta$ for the Lee-Yang algebra, defined on the basis elements $e_{(k; a, b)}$, is given explicitly by \cite{ruiz2024matrix}
\begin{align}\label{eq:coprodct_lee_yang}
    \Delta(e_{(1; 1, 1)}) &:= e_{(1; 1, 1)} \otimes e_{(1; 1, 1)} + e_{(2; 1, 1)} \otimes e_{(2; 2, 2)},\nonumber \\
\Delta(e_{(1; 1, 2)}) &:= e_{(1; 1, 2)} \otimes e_{(1; 1, 2)} + \zeta^2 e_{(2; 1, 2)} \otimes e_{(2; 2, 1)} +\nonumber\\
&\quad \zeta e_{(2; 1, 3)} \otimes e_{(2; 2, 3)},\nonumber \\
\Delta(e_{(1; 2, 2)}) &:= e_{(1; 2, 2)} \otimes e_{(1; 2, 2)} + \zeta^4 e_{(2; 2, 2)} \otimes e_{(2; 1, 1)} + \nonumber \\
&\quad \zeta^3 e_{(2; 2, 3)} \otimes e_{(2; 1, 3)} + \zeta^3 e_{(2; 3, 2)} \otimes e_{(2; 3, 1)} +\nonumber\\
&\quad \zeta^2 e_{(2; 3, 3)} \otimes e_{(2; 3, 3)}, \nonumber \\
\Delta(e_{(2; 1, 1)}) &:= e_{(1; 1, 1)} \otimes e_{(2; 1, 1)} + e_{(2; 1, 1)} \otimes e_{(1; 2, 2)} + \nonumber\\
&\quad e_{(2; 1, 1)} \otimes e_{(2; 3, 3)}, \nonumber \\
\Delta(e_{(2; 1, 2)}) &:= e_{(1; 1, 2)} \otimes e_{(2; 1, 2)} + e_{(2; 1, 2)} \otimes e_{(1; 2, 1)} + \nonumber\\
&\quad e_{(2; 1, 3)} \otimes e_{(2; 3, 2)},\nonumber \\
\Delta(e_{(2; 1, 3)}) &:= e_{(1; 1, 2)} \otimes e_{(2; 1, 3)} + e_{(2; 1, 1)} \otimes e_{(1; 2, 2)} +\nonumber\\
&\quad \zeta e_{(2; 1, 2)} \otimes e_{(2; 3, 1)} - \zeta^2 e_{(2; 1, 3)} \otimes e_{(2;3, 3)}, \nonumber \\
\Delta(e_{(2; 2, 2)}) &:= e_{(1; 2, 2)} \otimes e_{(2; 2, 2)} + e_{(2; 2, 2)} \otimes e_{(1; 1, 1)} \nonumber\\
&\quad + e_{(2; 3, 3)} \otimes e_{(2; 2, 2)}, \nonumber \\
\Delta(e_{(2; 2, 3)}) &:= e_{(1; 2, 2)} \otimes e_{(2; 2, 3)} + e_{(2; 2, 3)} \otimes e_{(1; 2, 1)} + \nonumber\\
&\quad \zeta e_{(2; 3, 2)} \otimes e_{(2; 2, 1)} - \zeta^2 e_{(2; 3, 3)} \otimes e_{(2; 2, 3)}, \nonumber \\
\Delta(e_{(2; 3, 3)}) &:= e_{(1; 2, 2)} \otimes e_{(2; 3, 3)} + e_{(2; 3, 3)} \otimes e_{(1; 2, 2)} +\nonumber\\
&\quad \zeta^2 e_{(2; 2, 2)} \otimes e_{(2; 1, 1)} - \zeta^3 e_{(2; 2, 3)} \otimes e_{(2; 1, 3)} \nonumber\\
&\quad - \zeta^3 e_{(2; 3, 2)} \otimes e_{(2; 3, 1)} + \zeta^4 e_{(2; 3, 3)} \otimes e_{(2; 3, 3)},
\end{align}
where $\zeta = [(\sqrt{5} - 1)/2]^{1/2}$. The antipode $S$ is defined by \cite{ruiz2024matrix}
\begin{align}\label{eq:antipode_lee_yang}
    &S(e_{(1; 1, 1)}) = e_{(1; 1, 1)}, S(e_{(1; 2, 2)}) = e_{(1; 2, 2)},\nonumber\\
    &S(e_{(1; 1, 2)}) = e_{(1; 2, 1)}, S(e_{(1; 2, 1)}) = e_{(1; 1, 2)},\nonumber \\
    &S(e_{(2; 1, 1)}) = e_{(2; 2, 2)}, S(e_{(2; 2, 2)}) = e_{(2; 1, 1)},\nonumber \\
    &S(e_{(2; 3, 3)}) = e_{(2; 3, 3)}, \nonumber\\
    &S(e_{(2; 1, 2)}) = \zeta^{-1} e_{(2; 2, 1)}, S(e_{(2; 2, 1)}) = \zeta^{1} e_{(2; 1, 2)}, \nonumber \\
    &S(e_{(2; 1, 3)}) = \zeta^{-2} e_{(2; 2, 3)}, S(e_{(2; 3, 1)}) = \zeta^2 e_{(2;3, 2)}, \nonumber \\
    &S(e_{(2; 2, 3)}) = \zeta^{-1}e_{(2; 1, 3)}, S(e_{(2; 3, 2)}) = \zeta e_{(2; 3, 1)}.
\end{align}

Next, we construct the dual algebra $\mathcal{A}^*$ --- we begin with noting that, as a linear vector space, $\mathcal{A}$ is self-dual, i.e., $\mathcal{A}^* = \mathcal{M}(\mathbb{C}^2) \oplus \mathcal{M}(\mathbb{C}^3)$. The action of an element in the dual, $f \in \mathcal{M}(\mathbb{C}^2) \oplus \mathcal{M}(\mathbb{C}^3)$, on an element in the algebra $x \in \mathcal{M}(\mathbb{C}^2) \oplus \mathcal{M}(\mathbb{C}^3)$ is defined by
\begin{align}\label{eq:dual_action}
f(x) = \langle f, x\rangle = \sum_{k \in \{1, 2\}} \text{Tr}(f_k^\dagger x_k),
\end{align}
where $f = f_1 \oplus f_2$ and $x = x_1 \oplus x_2$. This definition of $f(x)$ also allows us to construct the dual basis elements $e^\alpha \cong e^{(1; a, b)} \in \mathcal{M}(\mathbb{C}^2) \oplus   \mathcal{M}(\mathbb{C}^3)$ where $k \in \{1, 2\}$ and
\begin{align}\label{eq:dual_basis_def}
&e^{(1; a, b)} = \ket{a}\!\bra{b} \oplus 0 \text{ for }a, b \in \{1, 2\} \text{, and }\nonumber \\
&e^{(2; a, b)} = 0\oplus \ket{a}\!\bra{b} \text{ for }a, b \in \{1, 2, 3\}.
\end{align}
From Eq.~\eqref{eq:dual_action}, it then follows that
\[
\langle e^{(k; a, b)}, e_{(k'; a', b')}\rangle = \delta_{k, k'}\delta_{a, a'} \delta_{b, b'}.
\]

At this point, we emphasize a subtlety with the dual basis --- from the expressions for the dual basis elements in Eq.~\eqref{eq:dual_basis_def}, it could be tempting to conclude that the product and $*$ operations on the dual basis elements $e^{(k; a, b)}$ are defined similar to that of a matrix algebra, i.e., Eq.~\eqref{eq:primal_product_star}. However, we remark that the dual $\mathcal{A}^*$ is $\mathcal{M}(\mathbb{C}^2) \oplus \mathcal{M}(\mathbb{C}^3)$ only as a \emph{linear space} and not as an algebra. Therefore, as described in \cref{lemma:dual_C*}, the product on $\mathcal{A}^*$ is in fact defined by the co-product $\Delta$ and the $*$ on $\mathcal{A}^*$  by the antipode $S$. More specifically, if
\[
\Delta(e_{\alpha}) = \sum_{\gamma, \delta} \text{D}^\alpha_{\gamma, \delta} e_{\gamma}\otimes e_{\delta},
\]
where $\text{D}_{\gamma, \delta}^\alpha \in \mathbb{C}$ can be read off from Eq.~\eqref{eq:coprodct_lee_yang}, then from \cref{lemma:dual_C*}(c) it follows that
\begin{align}\label{eq:lee_yang_product_dual}
e^{\gamma} e^{\delta} = \sum_{\alpha}\text{D}_{\gamma, \delta}^\alpha e^{\alpha}. 
\end{align}
Similarly, the $*$ operation on the dual is defined by the antipode $S$ via \cref{lemma:dual_C*}(d). More specifically, if
\[
S(e_\alpha) = \sum_{\beta} \text{S}_\beta^\alpha e_\beta,
\]
where $\text{S}^\alpha_\beta \in \mathbb{C}$ can be read off from Eq.~\eqref{eq:antipode_lee_yang}, then
\begin{align}\label{eq:lee_yang_product_star}
e^{\beta*} = \sum_\alpha (\text{S}_\beta^\alpha)^* e^\alpha.
\end{align}
While Eqs.~\eqref{eq:lee_yang_product_dual} and \eqref{eq:lee_yang_product_star}, in principle, contains all the information needed to multiply and $*$ elements in $\mathcal{A}^*$, it is convenient to use a different basis for the dual, $\tilde{e}^{(k; a, b)}$ defined explicitly via \cite{liu2025parent}
\begin{align}\label{eq:new_basis_dual}
&e^{(1; 1, 1)} = \tilde{e}^{(1; 1, 1)}, e^{(1; 1, 2)} = \tilde{e}^{(2; 1, 1)}, e^{(1; 2, 1)} = \tilde{e}^{(2; 2, 2)}, \nonumber\\
&e^{(1; 2, 2)} = \tilde{e}^{(1; 2, 2)} + \tilde{e}^{(2; 3, 3)}, \nonumber \\
&e^{(2; 1, 1)} = \tilde{e}^{(1; 1, 2)}, e^{(2; 1, 2)} = \tilde{e}^{(2; 1, 2)}, e^{(2; 2, 1)} = \zeta^{-2}\tilde{e}^{(2; 2, 1)}, \nonumber \\
&e^{(2; 1, 3)} = \tilde{e}^{(2; 1, 3)}, e^{(2; 3, 1)} = \zeta \tilde{e}^{(2; 2, 3)}, e^{(2; 2, 2)} = \tilde{e}^{(1; 2, 1)}, \nonumber \\
&e^{(2; 2, 3)} = \zeta \tilde{e}^{(2; 3, 1)}, e^{(2; 3, 2)} = \tilde{e}^{(2; 3, 2)}, \nonumber \\
&e^{(2; 3, 3)} = \tilde{e}^{(1; 2, 2)} - \zeta^2 \tilde{e}^{(2;3, 3)}.
\end{align}
It can be checked that the product and $*$ rules for $\tilde{e}^{(k; a, b)}$ are similar to Eq.~\eqref{eq:primal_product_star} i.e.,
\begin{align}\label{eq:dual_product_star_new}
    \tilde{e}^{(k; a, b)} \tilde{e}^{(k'; a', b')} = \delta_{k, k'} \delta_{a', b} e^{(k; a, b')} \text{ and }\tilde{e}^{(k; a, b)*} = \tilde{e}^{(k; b, a)}.
\end{align}
Furthermore, we can also construct a basis for the algebra $\mathcal{A}$, $\tilde{e}_{(k;a, b)}$ which are dual to the basis $\tilde{e}^{(k; a, b)}$, i.e., they satisfy
\begin{align}\label{eq:new_basis}
\langle \tilde{e}^{(k; a, b)}, \tilde{e}_{(k'; a', b')}\rangle = \delta_{k, k'} \delta_{a, a'}\delta_{b, b'}.
\end{align}
From Eq.~\eqref{eq:new_basis}, it then follows that Eqs.~\eqref{eq:new_basis_dual} holds with the substitutions $e^{(k; a, b)} \to \tilde{e}_{(k; a, b)}$ and $\tilde{e}^{(k; a, b)} \to e_{(k; a, b)}$, thus yielding a conversion between the basis $\tilde{e}_{(k;a, b)}$ and $e_{(k; a, b)}$:
\begin{align}\label{eq:new_basis_primal}
&\tilde{e}_{(1; 1, 1)} = {e}_{(1; 1, 1)}, \tilde{e}_{(1; 1, 2)} = {e}_{(2; 1, 1)}, \tilde{e}_{(1; 2, 1)} = {e}_{(2; 2, 2)}, \nonumber\\
&\tilde{e}_{(1; 2, 2)} = {e}_{(1; 2, 2)} + {e}_{(2; 3, 3)}, \nonumber \\
&\tilde{e}_{(2; 1, 1)} = {e}_{(1; 1, 2)}, \tilde{e}_{(2; 1, 2)} = {e}_{(2; 1, 2)}, \tilde{e}_{(2; 2, 1)} = \zeta^{-2}{e}_{(2; 2, 1)}, \nonumber \\
&\tilde{e}_{(2; 1, 3)} = {e}_{(2; 1, 3)}, \tilde{e}_{(2; 3, 1)} = \zeta {e}_{(2; 2, 3)}, \tilde{e}_{(2; 2, 2)} = {e}_{(1; 2, 1)}, \nonumber \\
&\tilde{e}_{(2; 2, 3)} = \zeta {e}_{(2; 3, 1)}, \tilde{e}_{(2; 3, 2)} = {e}_{(2; 3, 2)}, \nonumber \\
&\tilde{e}_{(2; 3, 3)} = {e}_{(1; 2, 2)} - \zeta^2 {e}_{(2;3, 3)}.
\end{align}
We again remark that the product and $*$ rules for this new basis for the algebra $\mathcal{A}$, $\tilde{e}_{(k; a, b)}$, is \emph{not} given by Eq.~\eqref{eq:primal_product_star} --- to multiply or $*$ $\tilde{e}_{(k; a, b)}$, the most convenient is to use Eq.~\eqref{eq:new_basis_primal} to express them in terms of ${e}_{(k; a, b)}$ and then use the usual product and star rules in Eq.~\eqref{eq:primal_product_star}.

\prlsection{Irrep classes of $\mathcal{A}^*$}The product and $*$ rules of the modified basis $\tilde{e}^{(k; a, b)}$ in Eq.~\eqref{eq:dual_product_star_new} mimic that of a matrix algebra --- this allows us to immediately identify the two irreducible irrepresentations of $\mathcal{A}^*$, $\psi_e:\mathcal{M}(\mathbb{C}^2) \oplus \mathcal{M}(\mathbb{C}^3) \to \mathcal{M}(\mathbb{C}^2) $ and $\psi_\sigma: \mathcal{M}(\mathbb{C}^2) \oplus \mathcal{M}(\mathbb{C}^3) \to \mathcal{M}(\mathbb{C}^3) $, given by
\begin{align}\label{eq:irrep_classes_lee_yang}
    &\psi_e(\tilde{e}^{(1; a, b)}) = \ket{a}\!\bra{b}, \psi_e(\tilde{e}^{(2; a, b)}) = 0 \text{ and}, \nonumber\\
    &\psi_\sigma(\tilde{e}^{(1; a, b)} = 0, \psi_e(\tilde{e}^{(2; a, b)}) = \ket{a}\!\bra{b}.
\end{align}
We will label the irrep class corresponding to $\psi_e$ via $e$ and the irrep class corresponding to $\psi_\sigma$ by $\sigma$. We can now compute the cocentral elements $\tau_e$ and $\tau_\sigma$ corresponding to the two irrep classes --- recall that for $a \in \text{Irr}(\mathcal{A}^*)$ , $\tau_a$ is an element of the algebra $\mathcal{A}$ satisfying $\langle f, \tau_a\rangle = \text{Tr}[\psi_a(f)] \ \forall f \in \mathcal{A}^*$. Corresponding to $\psi_e$ defined in Eq.~\eqref{eq:irrep_classes_lee_yang}, 
\[\tau_e = \tilde{e}_{(1; 1, 1)} + \tilde{e}_{(1; 2, 2)},
\]
since for $ f = \sum_{k, a, b} f_{(k; a, b)}\tilde{e}^{(k; a, b)}$, 
\begin{align*}
    \text{Tr}(\psi_e(f)) &= \sum_{k, a, b} f_{(k; a, b)} \tilde{e}^{(k;a, b)}, \nonumber\\
    &= \sum_{k, a, b}f_{(k; a, b)}  \delta_{a, b} \delta_{k, 1} =\langle f, \tilde{e}_{(1; 1, 1)} + \tilde{e}_{(1; 2, 2)}\rangle.
\end{align*}
Similarly, corresponding to $\psi_\sigma$ defined in Eq.~\eqref{eq:irrep_classes_lee_yang}, 
\[
\tau_\sigma = \tilde{e}_{(2; 1, 1)} + \tilde{e}_{(2; 2, 2)} + \tilde{e}_{(2; 3, 3)}. 
\]
It is convenient to express $\tau_e, \tau_\sigma$ in terms of the basis $e_{(k; a, b)}$ using Eq.~\eqref{eq:new_basis_primal}:
\begin{align*}
    &\tau_e = e_{(1; 1, 1)} + e_{(1; 2, 2)} + e_{(2; 3, 3)} \text{ and},\\
    &\tau_\sigma = e_{(1; 1, 2)} + e_{(1; 2, 1)} + e_{(1; 2, 2)} - \zeta^2 e_{(2; 3, 3)}.
\end{align*}
From these explicit expressions, we can compute the product and $*$ rules for $\tau_e, \tau_\sigma$ which are often called the ``fusion rules" for Fibonacci anyons in physics literature:
\begin{align}\label{eq:taus_rules_lee_yang}
    &\tau_e^2 = \tau_e, \tau_e \tau_\sigma = \tau_\sigma \tau_e = \tau_\sigma,  \tau_\sigma^2 = \tau_e + \tau_\sigma \text{, and } \tau_e^* = \tau_e, \tau_\sigma^* = \tau_\sigma. 
\end{align}
Thus, as anticipated from the general discussion in \cref{sec:weak_hopf_review}, $\mathcal{T} = \text{span}(\{\tau_e, \tau_\sigma\})$ is itself a $C^*$ algebra --- furthermore, from Eq.~\eqref{eq:taus_rules_lee_yang}, it follows that $\tau_e$ is the identity element of $\mathcal{T}$. We can also explicitly construct unitary elements in $\mathcal{T}$. For this, we first note from Eq.~\eqref{eq:taus_rules_lee_yang} that the elements $p, q \in \mathcal{T}$ given by
\begin{align}
    p = \frac{1}{\sqrt{5}}\big(\zeta^2 \tau_e + \tau_\sigma\big),\quad q = \tau_e - p,
\end{align}
satisfy $p^2 = p, q^2 = q, pq = 0, p^* = p$ and $q^* = q$, i.e., $p, q$ can be identified as orthonormal complete set of projectors within $\mathcal{T}$. Furthermore, $p, q$ are also an alternative basis for $\mathcal{T}$, i.e., $\mathcal{T} = \text{span}(\{p, q\}$). Thus, a general unitary element $u \in \mathcal{T}$ will be specified by $\alpha, \beta \in \mathbb{R}$ and will be of the form
\begin{subequations}\label{eq:unitary_element_lee_yang}
\begin{align}
    u = e^{i\alpha} p + e^{i \beta} q = u_e \tau_e + u_\sigma \tau_\sigma,
\end{align}
where 
\begin{align}
    &u_e = \frac{1}{\sqrt{5}}\big({\zeta^2} e^{i\alpha} + \big(\sqrt{5} - \zeta^2\big)e^{i\beta} \big), \nonumber\\
    &u_\sigma = \frac{1}{\sqrt{5}}\big( e^{i\alpha} - e^{i\beta}\big).
\end{align}
\end{subequations}

\prlsection{Matrix Product Unitaries}Finally, we have all the ingredients needed to explicitly construct MPUs from the Lee-Yang Weak Hopf Algebra. We fix an injective representation $\phi$ for $\mathcal{A} = \mathcal{M}(\mathbb{C}^2) \oplus \mathcal{M}(\mathbb{C}^3)$ as the map $\phi: \mathcal{M}(\mathbb{C}^2) \oplus \mathcal{M}(\mathbb{C}^3) \to \mathcal{M}(\mathbb{C}^5)$ defined via
\begin{align}\label{eq:choice_phi}
\phi(x_1 \oplus x_2) = \begin{bmatrix}
    x_1 & 0 \\
    0 & x_2
\end{bmatrix}
\end{align}
As described in \cref{lemma:mpo_irrep}, $\text{MPO}_\phi(\tau_e)$,  $\text{MPO}_\phi(\tau_\sigma)$ are respectively generated by tensors $A_e, A_\sigma$ given by
\begin{align}
A_e = \sum_{k, a, b} \phi(\tilde{e}_{(k; a, b)}) \otimes \psi_e(\tilde{e}^{(k; a, b)}) = \sum_{a, b}  \phi(\tilde{e}_{(1; a, b)}) \otimes \ket{a}\!\bra{b}, \nonumber\\
A_\sigma = \sum_{k, a, b} \phi(\tilde{e}_{(k; a, b)}) \otimes \psi_\sigma(\tilde{e}^{(k; a, b)}) = \sum_{a, b}  \phi(\tilde{e}_{(2; a, b)}) \otimes \ket{a}\!\bra{b},
\end{align}
 and a periodic boundary condition. The first operator in the tensor product in the above equation corresponds to the physical space and has physical dimension $5$, and the second operator corresponds to the bond space and has bond dimension $2$ for $A_e$ and $3$ for $A_\sigma$. We can now use the choice of $\phi$ [Eq.~\eqref{eq:choice_phi}], $\psi_e$,  $\psi_\sigma$ [Eq.~\eqref{eq:irrep_classes_lee_yang}] as well as Eq.~\eqref{eq:new_basis_primal} to obtain explicitly the tensors $A_e, A_\sigma$. The tensor $A_e$, with bond-dimension $=2$ and physical dimension $=5$, is given by
\begin{align*}
    &\begin{array}{c}
        \begin{tikzpicture}[scale=.5,baseline={([yshift=-0.75ex] current bounding box.center)}]
              \foreach \x in {1,...,1}{
                \GTensor{(-\singledx*\x,0)}{1}{.5}{\small $A_e$}{0}
                \myarrow{-\singledx*\x,-.65}{1};
                \myarrow{-\singledx*\x,.85}{1};
                \draw (-\singledx*\x - 0.75*\singledx,0) node {\small $1$};
                \draw (-\singledx*\x + 0.7*\singledx,0) node {\small $1$};
        }
        \end{tikzpicture}
    \end{array} = \ket{1}\!\bra{1},
    \begin{array}{c}
        \begin{tikzpicture}[scale=.5,baseline={([yshift=-0.75ex] current bounding box.center)}]
              \foreach \x in {1,...,1}{
                \GTensor{(-\singledx*\x,0)}{1}{.5}{\small $A_e$}{0}
                \myarrow{-\singledx*\x,-.65}{1};
                \myarrow{-\singledx*\x,.85}{1};
                \draw (-\singledx*\x - 0.75*\singledx,0) node {\small $2$};
                \draw (-\singledx*\x + 0.7*\singledx,0) node {\small $2$};
        }
        \end{tikzpicture}
    \end{array} = \ket{2}\!\bra{2} + \ket{5}\!\bra{5}, \nonumber \\
    &\begin{array}{c}
        \begin{tikzpicture}[scale=.5,baseline={([yshift=-0.75ex] current bounding box.center)}]
              \foreach \x in {1,...,1}{
                \GTensor{(-\singledx*\x,0)}{1}{.5}{\small $A_e$}{0}
                \myarrow{-\singledx*\x,-.65}{1};
                \myarrow{-\singledx*\x,.85}{1};
                \draw (-\singledx*\x - 0.75*\singledx,0) node {\small $1$};
                \draw (-\singledx*\x + 0.7*\singledx,0) node {\small $2$};
        }
        \end{tikzpicture}
    \end{array} = \ket{3}\!\bra{3},
    \begin{array}{c}
        \begin{tikzpicture}[scale=.5,baseline={([yshift=-0.75ex] current bounding box.center)}]
              \foreach \x in {1,...,1}{
                \GTensor{(-\singledx*\x,0)}{1}{.5}{\small $A_e$}{0}
                \myarrow{-\singledx*\x,-.65}{1};
                \myarrow{-\singledx*\x,.85}{1};
                \draw (-\singledx*\x - 0.75*\singledx,0) node {\small $2$};
                \draw (-\singledx*\x + 0.7*\singledx,0) node {\small $1$};
        }
        \end{tikzpicture}
    \end{array} = \ket{4}\!\bra{4}.
\end{align*}
The tensor $A_\sigma$ with bond-dimension $=3$ and physical dimension $=5$ is given by
\begin{align*}
    &\begin{array}{c}
        \begin{tikzpicture}[scale=.5,baseline={([yshift=-0.75ex] current bounding box.center)}]
              \foreach \x in {1,...,1}{
                \GTensor{(-\singledx*\x,0)}{1}{.5}{\small $A_\sigma$}{0}
                \myarrow{-\singledx*\x,-.65}{1};
                \myarrow{-\singledx*\x,.85}{1};
                \draw (-\singledx*\x - 0.75*\singledx,0) node {\small $1$};
                \draw (-\singledx*\x + 0.7*\singledx,0) node {\small $1$};
        }
        \end{tikzpicture}
    \end{array} = \ket{1}\!\bra{2},
    \begin{array}{c}
        \begin{tikzpicture}[scale=.5,baseline={([yshift=-0.75ex] current bounding box.center)}]
              \foreach \x in {1,...,1}{
                \GTensor{(-\singledx*\x,0)}{1}{.5}{\small $A_\sigma$}{0}
                \myarrow{-\singledx*\x,-.65}{1};
                \myarrow{-\singledx*\x,.85}{1};
                \draw (-\singledx*\x - 0.75*\singledx,0) node {\small $2$};
                \draw (-\singledx*\x + 0.7*\singledx,0) node {\small $2$};
        }
        \end{tikzpicture}
    \end{array} = \ket{2}\!\bra{1}, \nonumber \\
    &\begin{array}{c}
        \begin{tikzpicture}[scale=.5,baseline={([yshift=-0.75ex] current bounding box.center)}]
              \foreach \x in {1,...,1}{
                \GTensor{(-\singledx*\x,0)}{1}{.5}{\small $A_\sigma$}{0}
                \myarrow{-\singledx*\x,-.65}{1};
                \myarrow{-\singledx*\x,.85}{1};
                \draw (-\singledx*\x - 0.75*\singledx,0) node {\small $1$};
                \draw (-\singledx*\x + 0.7*\singledx,0) node {\small $2$};
        }
        \end{tikzpicture}
    \end{array} = \ket{3}\!\bra{4},
    \begin{array}{c}
        \begin{tikzpicture}[scale=.5,baseline={([yshift=-0.75ex] current bounding box.center)}]
              \foreach \x in {1,...,1}{
                \GTensor{(-\singledx*\x,0)}{1}{.5}{\small $A_\sigma$}{0}
                \myarrow{-\singledx*\x,-.65}{1};
                \myarrow{-\singledx*\x,.85}{1};
                \draw (-\singledx*\x - 0.75*\singledx,0) node {\small $2$};
                \draw (-\singledx*\x + 0.7*\singledx,0) node {\small $1$};
        }
        \end{tikzpicture}
    \end{array} = \zeta^{-2}\ket{4}\!\bra{3}, \nonumber \\
    &\begin{array}{c}
        \begin{tikzpicture}[scale=.5,baseline={([yshift=-0.75ex] current bounding box.center)}]
              \foreach \x in {1,...,1}{
                \GTensor{(-\singledx*\x,0)}{1}{.5}{\small $A_\sigma$}{0}
                \myarrow{-\singledx*\x,-.65}{1};
                \myarrow{-\singledx*\x,.85}{1};
                \draw (-\singledx*\x - 0.75*\singledx,0) node {\small $1$};
                \draw (-\singledx*\x + 0.7*\singledx,0) node {\small $3$};
        }
        \end{tikzpicture}
    \end{array} = \ket{3}\!\bra{5},
    \begin{array}{c}
        \begin{tikzpicture}[scale=.5,baseline={([yshift=-0.75ex] current bounding box.center)}]
              \foreach \x in {1,...,1}{
                \GTensor{(-\singledx*\x,0)}{1}{.5}{\small $A_\sigma$}{0}
                \myarrow{-\singledx*\x,-.65}{1};
                \myarrow{-\singledx*\x,.85}{1};
                \draw (-\singledx*\x - 0.75*\singledx,0) node {\small $3$};
                \draw (-\singledx*\x + 0.7*\singledx,0) node {\small $1$};
        }
        \end{tikzpicture}
    \end{array} = \zeta \ket{4}\!\bra{5}, \nonumber \\
    &\begin{array}{c}
        \begin{tikzpicture}[scale=.5,baseline={([yshift=-0.75ex] current bounding box.center)}]
              \foreach \x in {1,...,1}{
                \GTensor{(-\singledx*\x,0)}{1}{.5}{\small $A_\sigma$}{0}
                \myarrow{-\singledx*\x,-.65}{1};
                \myarrow{-\singledx*\x,.85}{1};
                \draw (-\singledx*\x - 0.75*\singledx,0) node {\small $2$};
                \draw (-\singledx*\x + 0.7*\singledx,0) node {\small $3$};
        }
        \end{tikzpicture}
    \end{array} = \zeta\ket{5}\!\bra{3},
    \begin{array}{c}
        \begin{tikzpicture}[scale=.5,baseline={([yshift=-0.75ex] current bounding box.center)}]
              \foreach \x in {1,...,1}{
                \GTensor{(-\singledx*\x,0)}{1}{.5}{\small $A_\sigma$}{0}
                \myarrow{-\singledx*\x,-.65}{1};
                \myarrow{-\singledx*\x,.85}{1};
                \draw (-\singledx*\x - 0.75*\singledx,0) node {\small $3$};
                \draw (-\singledx*\x + 0.7*\singledx,0) node {\small $2$};
        }
        \end{tikzpicture}
    \end{array} = \ket{5}\!\bra{4}, \nonumber \\
        &\begin{array}{c}
        \begin{tikzpicture}[scale=.5,baseline={([yshift=-0.75ex] current bounding box.center)}]
              \foreach \x in {1,...,1}{
                \GTensor{(-\singledx*\x,0)}{1}{.5}{\small $A_\sigma$}{0}
                \myarrow{-\singledx*\x,-.65}{1};
                \myarrow{-\singledx*\x,.85}{1};
                \draw (-\singledx*\x - 0.75*\singledx,0) node {\small $3$};
                \draw (-\singledx*\x + 0.7*\singledx,0) node {\small $3$};
        }
        \end{tikzpicture}
    \end{array} = \ket{2}\!\bra{2} - \zeta^{-2}\ket{5}\!\bra{5},
\end{align*}
We can verify by inspecting the tensors $A_e, A_\sigma$ that they are injective as expected from \cref{lemma:mpo_irrep}. 

Finally, we consider the MPU constructed in \cref{prop:mpu_final_hopf}: Using the unitary element in Eq.~\eqref{eq:unitary_element_lee_yang}, we obtain an MPU $U$ with bond-dimension at-most $6$ with the tensor $A$ given by
\begin{align*}
    A^{ij} = \begin{bmatrix}
        \delta_{i, j} & 0 & 0 \\
        0 & A_e^{ij} & 0 \\
        0 & 0 & A_\sigma^{ij}
    \end{bmatrix}
\end{align*}
and boundary $b$ given by
\begin{align*}
    b = \begin{bmatrix}
        1 & 0 & 0 \\
        0 & (u_e - 1)\1_2 & 0 \\
        0 & 0 & u_\sigma \1_3
    \end{bmatrix},
\end{align*}
where $u_e, u_\sigma$ are defined in Eq.~(\ref{eq:unitary_element_lee_yang}b). We remark that this MPU falls in the category satisfying \cref{assumption} and is thus implementable in $\text{poly}(N)$ time using the algorithm described in the main text. 

\end{document}